\theoremstyle{definition}
  \newtheorem{defn}{Definition}[section]
	\newtheorem{example}{Example}
	\newtheorem{remark}{Remark}
\theoremstyle{plain}
  \newtheorem{theo}[defn]{Theorem}
  \newtheorem{lemma}[defn]{Lemma}
  \newtheorem{prop}[defn]{Proposition}
  \newtheorem{cor}[defn]{Corollary}
\DeclareMathOperator{\tp}{tp}
\DeclareMathOperator{\era}{era}
\DeclareMathOperator{\err}{err}
\DeclareMathOperator{\parti}{PART}
\DeclareMathOperator*{\leqhigh}{\leq}
\begin{document}

\title{Network coding with modular lattices}
\date{}
\author{Andreas Kendziorra\\\\\begin{small}Claude Shannon Institute\end{small}\\\begin{small}University College Dublin\end{small}   
        \and Stefan E. Schmidt\\\\\begin{small}Institut f\"ur Algebra\end{small}\\\begin{small}Technische Universit\"at Dresden\end{small}}
\maketitle

\begin{abstract}
  In \cite{ori}, K\"otter and Kschischang presented a new model for error correcting codes in network coding. The alphabet in this model is 
the subspace lattice of a given vector space, a code is a subset of this lattice and the used metric on this alphabet is the map
$d:(U,V)\mapsto\dim(U+V)-\dim(U\cap V)$. 
  In this paper we generalize this model to arbitrary modular lattices, i.e. we consider codes, which are subsets of modular lattices. The 
used metric in this general case is the map $d:(u,v)\mapsto h(u\vee v)-h(u\wedge v)$, where $h$ is the height function of the lattice. We
apply this model to submodule lattices. Moreover, we show a method to compute the size of spheres in certain modular lattices and present a
sphere packing bound, a sphere covering bound, and a singleton bound for codes, which are subsets of modular lattices.
\end{abstract} 

\begin{flushleft}
\begin{small}2010 \textit{Mathematics Subject Classification:} 06C05, 68P30, 94B65, 05A15, 20K27.\end{small}\end{flushleft}

\section{Introduction}
Network coding is a tool for information transmission in networks. A network is considered to be a directed graph, where an edge from a vertex $u$ 
to a vertex $v$ is drawn, if $u$ is able to send information directly to $v$ (cf.~\cite{Ho}). A subset of the vertices is the set of senders
and
another subset is the set of receivers. Each sender is interested in sending his information to every receiver (broadcasting). The 
information is transmitted over several vertices to the receivers. With network coding a vertex  is allowed to combine received information
and forward these combinations. Usually the
information is represented by  vectors of the $\mathbb{F}_q$-vector space $\mathbb{F}_q^N$ for a prime power $q$ and a positive integer $N$
(cf.~\cite{fragouli}). The combinations are then $\mathbb{F}_q$-linear combinations. In random network coding the coefficients of these
linear
combinations are randomly chosen. For basic properties, advantages and further information on random network coding the reader is referred to
\cite{Ho2,Ho,ori}. Regarding general network coding see \cite{ahlswede}.

K\"otter and Kschischang presented in \cite{ori} a new model for error correcting codes in random network coding. A sender transmits vectors
 of the  $\mathbb{F}_q^N$, spanning a subspace $U$ of $\mathbb{F}_q^N$. A receiver receives vectors, which will span a subspace $V$ of
$\mathbb{F}_q^N$. In the error free case thes subspaces are equal. Thus the alphabet in this model is
the subspace lattice of the $\mathbb{F}_q$-vector space $\mathbb{F}_q^N$ and  a code is a subset of this lattice. To transmit a codeword a
sender injects a basis of this codeword. The metric on this alphabet is the map  $d:(U,V)\mapsto\dim(U+V)-\dim(U\cap V)$.

In this paper we generalize this model to modular lattices. So we will consider codes as subsets of modular lattices with finite length and we
 use the metric $d:(u,v)\mapsto h(u\vee v)-h(u\wedge v)$, where $h$ is the height function of the lattice. This generalization is used to
apply submodule lattices for random network coding. As in coding theory codes over $\mathbb{Z}_4$ (see e.g. \cite{calderbank, conway,
sloane}) came out to be useful we place emphasis to $\mathbb{Z}_{p^s}$-modules of the form $\mathbb{Z}_{p^s}^N$ for a
prime $p$ and positive integers $s$ and $N$. We introduce so called \textit{enumerable lattices}, which are a generalization of the
submodule lattices of these modules with certain combinatorial properties. We derive a method to compute the cardinalities of spheres in
these
lattices. We present a sphere packing, a sphere covering, and a singleton bound for codes in modular lattices. These bounds are stated for
arbitrary (finite) modular lattices and for enumerable lattices. In the latter case the bounds can be computed explicitly.

This paper is not meant to present concrete code constructions with encoding and decoding algorithms. It is rather a beginning or an 
introduction into a research topic. Basically we wish to explore modular lattices as metric spaces. Furthermore, we want to show, that the model
presented in \cite{ori} is also applicable to submodule lattices of arbitrary finite modules and not only to subspace lattices. For concrete
codes and algorithms further research will be required.

The outline of the paper is as follows. In chapter \ref{preliminaries} we give all necessary definitions. Chapter \ref{coding} describes how
 network coding with modular lattices and especially submodule lattices can work. In chapter \ref{enumerable} we introduce enumerable
lattices. The main part of this chapter describes a method to compute sizes of spheres in enumerable lattices. Bounds for codes in
modular lattices are presented in chapter \ref{bounds}.

\section{Preliminaries}\label{preliminaries}

For basic notations in lattice theory the  reader is referred to \cite{birkhoff} and \cite{graetzer}. For technical reasons we will consider a
 lattice mostly as an algebraic structure, instead as an ordered set. So a \textit{lattice}  is an algebraic structure
$\mathbf{L}=(L;\vee,\wedge)$ with a set $L$ and two binary operations $\vee$ (\textit{join}) and $\wedge$ (\textit{meet}), which are both
associative, commutative and satisfy the absorption laws
\begin{align*}
	x\wedge (x\vee y)=x\qquad\text{and}\qquad x\vee(x\wedge y)=x
\end{align*}
for all $x,y\in L$. Every lattice gives rise to an ordered set $(L,\leq)$ where $x\leq y:\Leftrightarrow x\vee y=y$ for $x,y\in L$.

For $x,z\in L$ the set $[x,z]:=\{y\in L\mid x\leq y\leq z\}$ is called the \textit{interval} between $x$ and $z$. Note that it is again a
lattice.

If the lattice $\mathbf{L}=(L;\vee,\wedge)$ is bounded then we denote the least element by $0_\mathbf{L}$ (\textit{zero}) and the greatest
element by $1_\mathbf{L}$ (\textit{one}).

A totally ordered set 
 is called a \textit{chain}.  The \textit{length} of a 
chain is its cardinality minus one. The \textit{length} $l(\mathbf{L})$ of a lattice $\mathbf{L}=(L;\vee,\wedge)$ is the least upper bound
of the lengths of chains in $\mathbf{L}$. If $l(\mathbf{L})$ is finite, then $\mathbf{L}$ is said to be of \textit{finite length}. A
lattice of finite length is complete, thus it has a zero and an one. If $L$ is finite, then $\mathbf{L}$ has finite length.

In a lattice $\mathbf{L}=(L;\vee,\wedge)$ of finite length the 
\textit{height function} $h:L\rightarrow \mathbb{N}$ gives for an element $u\in L$ the greatest length of the chains between 
$0_\mathbf{L}$ and $u$. $h(u)$ is called the \textit{height} of $u$. For $l\in \mathbb{N}$ we denote by $L_l$ the set of elements in $L$
with height $l$. 

\subsection{Modular lattices and submodule lattices}

\begin{defn}
A lattice $(L;\vee,\wedge)$ is called \textit{modular} if for all $u,v,w\in L$ holds:
\begin{displaymath}
	u\leq w \Rightarrow u\vee(v\wedge w)=(u\vee v)\wedge w.
\end{displaymath}
\end{defn}

For a modular lattice $(L;\vee,\wedge)$ of finite length the map
\begin{align}
	d : L \times L \rightarrow \mathbb{N},\ (u,v)\mapsto h(u\vee v)-h(u\wedge v)\label{metric}
\end{align}
is a metric (see \cite{birkhoff} chapter X \textsection 1 and \textsection 2). Further, the height function $h$ satisfies the equality
\begin{equation}
	h(u)+h(v)=h(u\vee v)+h(u\wedge v)
	\label{mod_function}
\end{equation}
for every $u,v \in L$ (see \cite{birkhoff} chapter IV \textsection 4). For this reason one obtains for the metric also
\begin{displaymath}
	d(u,v)=h(u)+h(v)-2h(u\wedge v)=2h(u\vee v)-h(u)-h(v)
\end{displaymath}
for every $u,v\in L$.

We briefly recall the definitions of \textit{ring} and \textit{module}, which we take from~\cite{anderson}.

\begin{defn}
A \textit{ring} is an algebra $(R;+,\cdot,0,1)$ consisting of a set $R$, two binary operations $+$ and $\cdot$ and two elements $0\neq 1$ of
 $R$ such that $(R;+,0)$ is an abelian group, $(R;\cdot, 1)$ is a monoid (i.e. a semigroup with identity $1$) and $\cdot$ is both left and
right distributive over $+$.
\end{defn}

\begin{defn}
Let $R$ be a ring. An abelian group $M$ together with a map ("left scalar multiplication") $R\times M\rightarrow M$ via $(a,x)\mapsto ax$ 
is called a \textit{left} $R$\textit{-module} if for all $a,b \in R$ and $x,y\in M$ the equations
\begin{displaymath}
  a(x+y)=ax+ay,\quad (a+b)x=ax+bx,\quad (ab)x=a(bx) \quad\text{and}\quad1x=x
\end{displaymath}
hold. A subgroup $N$ of $M$ is called \textit{left $R$-submodule} of $M$ if $ax\in N$ holds for every $a\in R$ and $x\in N$. For
 $m_1,...,m_k\in M$   the submodule of $M$ generated by $m_1,...,m_k$ is denoted by $\left\langle m_1,...,m_k\right\rangle$.
\end{defn}
Accordingly, one can define \textit{right} $R$\textit{-module} and \textit{right} $R$-\textit{submodule} by a "right scalar multiplication". If  
$R$ is commutative this distinction will be obsolete. We will consider from
now on just left $R$-modules and we will say just ``$R$-modules'' instead of ``left $R$-modules''. For further information on modules see
e.g. \cite{anderson}.

For any ring $R$ and a $R$-module $M$ we will denote the set of all  $R$-submodules by $L(M)$. This set with the 
operations $+$, which is defined by $U+V:=\{u+v \mid u\in U, v\in V\}$, and  $\cap$ is a modular lattice (see \cite{birkhoff} chapter VII
\textsection 1 Theorem 1; note that this Theorem uses a more general definition of module, which covers the definition used here). We will
call this lattice the \textit{submodule lattice} of $M$ and denote it by $(L(M);+,\cap)$. Because of the modularity of this lattice, we have
the metric 
\begin{align}
	d : L(M) \times L(M) \rightarrow \mathbb{N},\ (U,V)\mapsto h(U+ V)-h(U \cap V)\label{metric_submodulattice}.
\end{align}

\begin{example}
For a prime power $q$ and a positive integer $N$ the submodule lattice (here the subspace lattice) $(L(\mathbb{F}_q^N);+,\cap)$
 of the $\mathbb{F}_q$-vector space $\mathbb{F}_q^N$ is a finite modular lattice. The height of a subspace $U\in L(\mathbb{F}_q^N)$ is
exactly the dimension of $U$. The metric on this lattice is $d:(U,V)\mapsto \dim(U+ V)-\dim(U \cap V)$
which was presented in \cite{ori}.
\end{example}

\begin{example}
Consider the abelian $p$-group $\mathbb{Z}_{p^s}^N$ for a prime $p$ and positive integers $N,s$. This group is a $\mathbb{Z}_{p^s}$-module. 
The set $L(\mathbb{Z}_{p^s}^N)$ of all $\mathbb{Z}_{p^s}$-submodules equals the set of all subgroups of  $\mathbb{Z}_{p^s}^N$.
If $U\in L(\mathbb{Z}_{p^s}^N)$, then there exists $\lambda_1,...,\lambda_N\in
\mathbb{N}$ such that $U$ is isomorphic to $\mathbb{Z}_{p^{\lambda_1}}\times...\times \mathbb{Z}_{p^{\lambda_N}}$. 
For the height function $h$ there holds $h(U)=\sum_{i=1}^N \lambda_i$.
With this height function one obtains again a metric with the function $d$ defined in   (\ref{metric_submodulattice}).
\end{example}

\subsection{Partitions of nonnegative integers}
We will shortly introduce partitions of nonnegative integers. The notations are done as in \cite{stanley}. We will use partitions later for 
semi-primary lattices.

A \textit{partition} of a nonnegative integer $n$ is a finite monotonically decreasing sequence $(\lambda_1,...,\lambda_k)$ of nonnegative
integers
with $\sum_{i=1}^k \lambda_i=n$. Zeros in this sequences are permitted and if two partitions differ only in the number of
zeros, then they are considered to be equal. If $\lambda$
is a partition of $n$, then  it is denoted by $|\lambda|=n$. With $\parti(n)$ we denote the set of all partitions of $n$. For a partition
$\lambda=(s,...,s)$ with $l$ times the entry $s$ we write also $\lambda=(s^l)$.

One can define an order on the set of all partitions by
\begin{displaymath}
	\mu\leq \lambda :\Leftrightarrow \mu_i\leq \lambda_i  \text{ for all }i
\end{displaymath}
for two partitions $\mu=(\mu_1,...,\mu_k)$ and $\lambda=(\lambda_1,...,\lambda_n)$.

For a partition $\lambda$ the set $\{(i,j)\mid 1\leq j\leq \lambda_i\}$ is called the \textit{Ferrers diagram} of $\lambda$.
The partition with the Ferrers diagram 
$\{(j,i)\mid 1\leq j\leq \lambda_i\}$ is called the \textit{conjugated} partition of $\lambda$ and is denoted  by $\lambda'$. Note that
$\lambda'_1$ is the number of sequence elements in $\lambda$, which are distinct from
zero.

For the partitions $\lambda=(s^l)$ and  $\mu\leq \lambda$ we define the partition $\lambda-\mu:=(s-\mu_l,...,s-\mu_1)$, where we set 
$\mu_i=0$ for $i=\mu'_1+1,...,l$.
Note that $\varphi=\lambda-\mu$ implies $\mu=\lambda-\varphi$.

\subsection{Semi-primary lattices}
Definitions and results in this chapter are mostly taken from \cite{jonsson}.

An element $z$ in a lattice $\mathbf{L}=(L;\vee,\wedge)$ is called \textit{cycle} if the interval $[0_\mathbf{L},z]$ is a chain and
\textit{dual cycle}
 if the interval $[z,1_\mathbf{L}]$ is a chain.

A modular lattice $(L;\vee,\wedge)$ of finite length is called \textit{semi-primary} if every element in $L$ is the join of cycles and the 
meet of dual cycles. 

The elements $u_1,...,u_k$ of a modular lattice $\mathbf{L}$ of finite length are called \textit{independent} if the equation
\begin{displaymath}
	(u_1\vee...\vee u_{i-1}\vee u_{i+1}\vee...\vee u_k)\wedge u_i=0_\mathbf{L}
\end{displaymath} 
holds for every $i=1,...,k$. If $u_1,...,u_k$ are independent, then we write also $u_1\veedot...\veedot u_k$ instead of $u_1\vee...\vee u_k$.

For semi-primary lattices we now state Theorem 4.9. of \cite{jonsson}.

\begin{theo}
Every element $u$ of a semi-primary lattice $\mathbf{L}$ is the join of independent cycles. Moreover, if $u$ has the two representations
\begin{displaymath}
	u=x_1\veedot...\veedot x_k \quad \text{and} \quad u=y_1\veedot...\veedot y_n
\end{displaymath}
with cycles $x_1,...,x_k$, and $y_1,...,y_n$, which are distinct from $0_\mathbf{L}$, then $k=n$, and there exists a permutation $\pi\in S_k$ such
that $h(x_i)=h(y_{\pi(i)})$ for $i=1,...,k$.
\end{theo}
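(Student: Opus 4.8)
The plan is to treat this as a Krull--Remak--Schmidt / Kurosh--Ore type result and to establish existence and uniqueness separately, uniqueness carrying the real weight. Everything rests on one observation and two elementary facts. The observation: in a semi-primary lattice the join-irreducible elements are exactly the cycles, since a chain has no two incomparable elements whose join is its top, while a join-irreducible written (as the semi-primary hypothesis allows) as a join of cycles $z=z_1\vee\dots\vee z_m$ must equal some $z_i$, so $[0_\mathbf{L},z]$ is a chain. The two facts, both read off from modularity and the identity \eqref{mod_function}: (i) independent cycles satisfy $h(a_1\vee\dots\vee a_r)=h(a_1)+\dots+h(a_r)$, and conversely any cycles joining to $u$ whose heights sum to $h(u)$ are independent (the estimate $h(\bigvee a_i)\le\sum h(a_i)$ is an equality exactly when all ``staircase'' meets $(a_1\vee\dots\vee a_{i-1})\wedge a_i$ vanish, which in a modular lattice is independence); and (ii) transposed intervals of a modular lattice are order-isomorphic, so $[b,b\vee z]\cong[b\wedge z,z]$ is a chain whenever $z$ is a cycle.

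\emph{Existence.} By hypothesis $u$ is a join of finitely many nonzero cycles, but one may not simply delete the redundant ones: in the submodule lattice of $\mathbb{Z}_p\times\mathbb{Z}_{p^2}$ the element $u=\langle(1,1)\rangle+\langle(0,1)\rangle$ is already an irredundant join of two cycles of heights $2$ and $2$, summing to $4>3=h(u)$, so it is not independent. I would instead induct on $h(u)$: among all representations $u=z_1\vee\dots\vee z_m$ by nonzero cycles pick one minimizing $\sum_i h(z_i)$, and show it is independent. If $c:=z_m\wedge(z_1\vee\dots\vee z_{m-1})\neq 0_\mathbf{L}$, then $c$ lies on the chain $[0_\mathbf{L},z_m]$ while $[z_1\vee\dots\vee z_{m-1},\,u]$ is its transpose; using this I would trade $z_m$ for a strictly shorter cycle $z_m'$ still joining with $z_1,\dots,z_{m-1}$ to $u$, contradicting minimality. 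Associativity of independence then lets the induction assemble the decomposition. Manufacturing $z_m'$ from the lattice operations is the delicate point -- it need not be a subelement of $z_m$ -- and is where the analysis of cycles in semi-primary lattices from \cite{jonsson} is needed.

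\emph{Uniqueness.} Let $u=x_1\veedot\dots\veedot x_k=y_1\veedot\dots\veedot y_n$ with all terms nonzero cycles, and induct on $h(u)$ (the case $h(u)=0$ is vacuous). Relabel so that $M:=h(y_1)$ is the maximum of all the $h(x_i)$ and $h(y_j)$. The key step is an exchange lemma: \emph{there is an index $i$ with $h(x_i)=M$ and $y_1\vee\bigvee_{l\neq i}x_l=u$.} Granting it, set $w:=\bigvee_{l\neq i}x_l$; then $w\vee x_i=u$, $w\wedge x_i=0_\mathbf{L}$, $w\vee y_1=u$, and fact (ii) gives $[0_\mathbf{L},x_i]\cong[w,u]\cong[y_1\wedge w,y_1]$, so $h(y_1\wedge w)=h(y_1)-h(x_i)=0$, i.e. $u=y_1\veedot w$. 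The transposition isomorphisms $[y_1,u]\cong[0_\mathbf{L},w]$ and $[y_1,u]\cong[0_\mathbf{L},\,y_2\vee\dots\vee y_n]$ then carry $w=\bigvee_{l\neq i}x_l$ to an independent cycle decomposition of $y_2\vee\dots\vee y_n$ of heights $\{h(x_l):l\neq i\}$; since $h(y_2\vee\dots\vee y_n)=h(u)-M<h(u)$, the induction hypothesis matches this multiset with $\{h(y_2),\dots,h(y_n)\}$, and appending $h(x_i)=M=h(y_1)$ yields $k=n$ and the permutation $\pi$. The exchange lemma is equivalent, by fact (i), to ``$y_1$ can be substituted for a maximal-height $x_i$ inside an independent decomposition of $u$''; one would approach it through the replacement property of the Kurosh--Ore theorem for modular lattices together with the length count $h(x_i)=h(y_1)-h(y_1\wedge\bigvee_{l\neq i}x_l)$, the maximality of $M$, and the chain structure below $y_1$, forcing this meet to vanish for a suitable choice of $i$.

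The exchange lemma is where I expect the real difficulty. The other steps are pure modular-lattice arithmetic built on \eqref{mod_function} and transposed intervals, but pinning down that the global maximum $M$ is attained on the $x$-side by a cycle that also complements the join of the remaining $x_l$ requires the structural description of cycles in semi-primary lattices developed in \cite{jonsson} on the way to Theorem 4.9 -- the same structure theory that produces the shorter cycle $z_m'$ in the existence step. In a self-contained treatment I would isolate and prove just these two facts and let the remainder of the argument above go through verbatim.
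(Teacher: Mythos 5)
First, a point of comparison: the paper does not prove this statement at all --- it is quoted verbatim as Theorem 4.9 of \cite{jonsson}, and the citation is the entire proof. So your attempt cannot be measured against an argument in the paper; it has to stand on its own as a reconstruction of J\'onsson and Monk's theorem.

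Read that way, it is a correct and well-organized plan but not a proof: both of the places where you yourself flag ``the real difficulty'' are genuine gaps, and they are precisely the two points where the semi-primary hypothesis (as opposed to bare modularity) has to be used. (1) In the existence step, the claim that a height-sum-minimal representation $u=z_1\vee\dots\vee z_m$ with $c:=z_m\wedge(z_1\vee\dots\vee z_{m-1})\neq 0_\mathbf{L}$ can be improved by trading $z_m$ for a strictly shorter cycle is not a formal consequence of modularity and transposed intervals: one must actually produce a cycle of height $h(z_m)-h(c)$ whose join with $z_1\vee\dots\vee z_{m-1}$ is still $u$, and nothing in your outline manufactures it. (Your counterexample in $L(\mathbb{Z}_p\times\mathbb{Z}_{p^2})$ showing that irredundant does not imply independent is correct and shows why some such step is unavoidable.) (2) In the uniqueness step, the exchange lemma as you state it already contains the assertion that the maximum $M$ of all the cycle heights is attained by some $x_i$ --- which is itself part of the conclusion being proved --- and the Kurosh--Ore replacement property for modular lattices only yields an index $i$ with $u=y_1\vee\bigvee_{l\neq i}x_l$; your own height count then gives $h(x_i)=M-h\bigl(y_1\wedge\bigvee_{l\neq i}x_l\bigr)\leq M$, with nothing yet forcing that meet to vanish. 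Everything downstream of these two lemmas (the transposition isomorphisms, the passage from $w=\bigvee_{l\neq i}x_l$ to an independent decomposition of $y_2\vee\dots\vee y_n$, the induction on $h(u)$) is correct modular-lattice arithmetic built on $h(u)+h(v)=h(u\vee v)+h(u\wedge v)$. But deferring both load-bearing lemmas to ``the structure theory in \cite{jonsson}'' reproduces the paper's citation rather than replacing it; a self-contained write-up would have to prove exactly those two statements.
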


Because of this theorem we can agree on the following definition.
\begin{defn}
Let $\mathbf{L}=(L;\vee,\wedge)$ be a semi-primary lattice, $u\in L$, $z_1,...,z_k\in L$ cycles distinct from $0_\mathbf{L}$ with
	$u=z_1\veedot...\veedot z_k$,
and $\pi\in S_k$, such that $h(z_{\pi(1)})\geq...\geq h(z_{\pi(k)})$.
Then $\tp(u):=(h(z_{\pi(1)}),...,h(z_{\pi(k)}))$ is called the \textit{type} of $u$. The type of $1_\mathbf{L}$ is also called the type of
$\mathbf{L}$ and also denoted by $\tp(\mathbf{L})$.
\end{defn}

Types of elements in semi-primary lattices can be considered as partitions of nonnegative integers. For a partition $\mu$ and a semi-primary
lattice $(L;\vee,\wedge)$ we denote by $L_{\mu}$ the set of elements in $L$, which have type $\mu$. If an element of a semi-primary lattice
has type $\mu$, then it is easy to see, that this element has height $|\mu|$ (see Lemma \ref{height_independent}).

Note that if $\mathbf{L}=(L;\vee,\wedge)$ is a semi-primary lattice and $I$ an interval in $\mathbf{L}$, then $I$ is also semi-primary (see
\cite{jonsson}
 Corollary 4.4.) and it holds $\tp(I)\leq \tp(\mathbf{L})$ (see \cite{herrmann} Lemma 2.4.). It follows for every $u,v \in L$ the
implication $u\leq v\Rightarrow \tp(u)\leq \tp (v)$, because $I:=[0_\mathbf{L},u]$ is an interval in $L':=[0_\mathbf{L},v]$.

As in \cite{jonsson} we call a Ring $R$  \textit{completely primary  uniserial} if there exists a two-sided ideal $P$ of $R$ such that every
left or right ideal of $R$ is of the form $P^k$ (where $P^0=R$). Theorem 6.7. of \cite{jonsson} says, that every submodule lattice of a
finitely generated module over a completely primary uniserial ring is semi-primary (in fact the theorem says more than that).

\begin{example}
The field $\mathbb{F}_q$ is completely primary uniserial, because the only ideals of $\mathbb{F}_q$ are $\{0\}$ and $\mathbb{F}_q$. So 
the subspace lattice $(L(\mathbb{F}_q^N);+,\cap)$   of the $\mathbb{F}_q$-vector space $\mathbb{F}_q^N$ is semi-primary. If $U\in
L(\mathbb{F}_q^N)$ has dimension $l$, then $U$ has the type $(1^l)$.
\end{example}

\begin{example}
The Ring $\mathbb{Z}_{p^s}$ is completely primary uniserial, because every ideal is of the form $p^k\mathbb{Z}_{p^s}$ for some 
$k\in \{0,...,s\}$ and with $P:=p\mathbb{Z}_{p^s}$ we have $P^k=p^k\mathbb{Z}_{p^s}$. So the submodule lattice 
$(L(\mathbb{Z}_{p^s}^N);+,\cap)$ of the $\mathbb{Z}_{p^s}$-module $\mathbb{Z}_{p^s}^N$ is semi-primary. If $U\in L(\mathbb{Z}_{p^s}^N)$, 
then there exists $\lambda_1,...,\lambda_N\in \mathbb{N}$ with $\lambda_1\geq...\geq \lambda_N$ such that $U$ is isomorphic to
$\mathbb{Z}_{p^{\lambda_1}}\times...\times \mathbb{Z}_{p^{\lambda_N}}$. Then $U$ has the type $(\lambda_1,...,\lambda_N)$.
\end{example}

\section{Network coding with modular lattices}\label{coding}

In this chapter, we will generalize the notion of \textit{operator channel}, which was presented in \cite{ori}. Similar to the discussion in \cite{ori} we can here
decompose the metric distance between two elements in an \textit{error} and an  \textit{erasure} part. We consider the signal transmission
from
 a single sender to a single receiver with an arbitrary finite modular lattice $\mathbf{L}=(L;\vee,\wedge)$ as the alphabet. In this
context it is not important, whether the channel is a network or not. For an input $u\in L$ the channel will deliver an output $v\in L$.
The metric on this alphabet is the function $d$ defined in (\ref{metric}). We define the functions
\begin{align*}
	\era &: L\times L \rightarrow \mathbb{N},\ (u,v)\mapsto h(u)-h(u\wedge v)\quad\text{ and}\\
	\err &: L\times L \rightarrow \mathbb{N},\ (u,v)\mapsto h(v)-h(u\wedge v).
\end{align*}
It is easy to see that
\begin{displaymath}
	d(u,v)=\era(u,v)+\err(u,v)
\end{displaymath}
holds for every $u,v\in L$.
For an input $u$ and an output $v$ we call $\era(u,v)$ the \textit{erasure} and $\err(u,v)$ the \textit{error} from $u$ to $v$. Roughly
 speaking $\era(u,v)$ is a measure for the information, which was contained in $u$ but after the transmission not anymore in $v$, and
$\err(u,v)$ is a measure for the information, which was not contained in $u$ but after the transmission is contained in $v$.

If for $u,v\in L$ there exists  $e\in L$ such that $v$ has the representation
\begin{displaymath}
	v=(u\wedge v)\veedot e,
\end{displaymath}
then we have $[u\wedge v,v]=[u\wedge v,(u\wedge v)\vee e]$ and $[0_\mathbf{L},e]=[(u\wedge v)\wedge e,e]$ and so the intervals 
$[u\wedge v,v]$ and $[0_\mathbf{L},e]$ are isomorphic (see \cite{birkhoff} chapter I, \textsection 7, corollary 2).  If follows 
\begin{displaymath}
	\err(u,v)=h(v)-h(u\wedge v)=h(e)-h(0_\mathbf{L})=h(e).
\end{displaymath}
If the chosen lattice is the subspace lattice of the $\mathbb{F}_q$-vector space $\mathbb{F}_q^N$, then such an e exists for every 
$u,v\in L$ and $\err(u,v)$ corresponds to the definition of \textit{errors} in Definition 1 in \cite{ori}. $\era(u,v)$ corresponds also to
the definition of \textit{erasures} in Definition 1 in \cite{ori}, independently of the existence of such an $e$.

Such an $e$ does not exist in general for modular lattices. More precisely: For every $u,v\in L$ exists an $e\in L$, such that 
$v=(u\wedge v)\vee e$ (choose for example $e=v$), but an $e'\in L$, such that $v=(u\wedge v)\vee e'$ and $0_\mathbf{L}=(u\wedge v)\wedge e'$
holds, does not exist in general. Let now $u,v,e\in L$, such that $v=(u\wedge v)\vee e$. Then $[u\wedge v,v]=[u\wedge v,(u\wedge v)\vee
e]$. Because of $u\wedge v\wedge e=u\wedge e$ holds $[u\wedge e,e]=[(u\wedge v)\wedge e,e]$. Thus, the intervals $[u\wedge v,v]$ and
$[u\wedge
e,e]$ are isomorphic (see again \cite{birkhoff} chapter I, \textsection 7, corollary 2). It follows
\begin{displaymath}
	\err(u,v)=h(v)-h(u\wedge v)=h(e)-h(u\wedge e).
\end{displaymath}
Roughly speaking $\err(u,v)$ is also a measure for the information which is contained in $e$, but not in $u$.

A \textit{code} $\mathcal{C}$ is in this paper a subset of a finite modular lattice $(L;\vee,\wedge)$. We denote the minimum distance of
 $\mathcal{C}$ by $\mathcal{D(C)}$. If  every codeword in $\mathcal{C}$ has the same height, then we call $\mathcal{C}$ a \textit{constant
height code}. If $(L;\vee,\wedge)$ is moreover semi-primary and every codeword in $\mathcal{C}$ has the same type, then we call
$\mathcal{C}$ a \textit{constant type code}. Clearly every constant type code is a constant height code.

\subsection{Random network coding with submodule lattices}

Now we consider the case that the information is transmitted through a network  and that the alphabet for 
the signal transmission is a submodule lattice $(L(M);+,\cap)$ of a finite $R$-module $M$ for a ring $R$.  As in \cite{ori} we consider the
case of the communication between a single sender and a single receiver (single unicast). The generalization to multicast is
straightforward.
If the sender wishes to transmit a submodule $U\in L(M)$, then he sends a generating set of $U$ into the network. A node $a$ in the
network, which receives module elements $m_1,...,m_k$, sends  to the node $b$ a $R$-linear combination 
\begin{displaymath}
	y_b=\sum_{i=1}^k r_{b,i}m_i
\end{displaymath}
with random ring elements $r_{b,i}\in R$ for $i=1,...,k$ if there is a link from $a$ to $b$. If the sender sends the generating set
$\{u_1,...,u_k\}$ into the network and a receiver receives the elements $v_1,...,v_l$, then $v_j$ has in the error free case the
representation
\begin{displaymath}
	v_j=\sum_{i=1}^k r_{j,i} u_i
\end{displaymath}
for some elements $r_{j,i}\in R$ for $j=1,...,l$ and $i=1,...,k$. $\left\langle v_1,...,v_l\right\rangle$ is a submodule of 
$\left\langle u_1,...,u_k\right\rangle$. If the receiver collects sufficiently many module elements, then $\left\langle
v_1,...,v_l\right\rangle$ equals $\left\langle u_1,...,u_k\right\rangle$. In the case that errors appear, that means that module elements
$e_1,...,e_m$, which are not contained in $\left\langle u_1,...,u_k\right\rangle$, are transmitted through the network, then $v_j$ has the
representation
\begin{displaymath}
	v_j=\sum_{i=1}^k r_{j,i} u_i+\sum_{t=1}^m s_{j,t}e_t	
\end{displaymath}
for some elements $r_{j,i}, s_{j,t}\in R$ for $j=1,...,l$, $i=1,...,k$ and $t=1,...,m$. Let $V=\left\langle v_1,...,v_l\right\rangle$, 
$U=\left\langle u_1,...,u_k\right\rangle$ and $E=\left\langle e_1,...,e_m\right\rangle$. Then there exists a submodule $E'$ of $E$, such
that $V$ has the representation
\begin{displaymath}
	V=(U\cap V)+E'.
\end{displaymath}
The intersection of $(U\cap V)$ and $E'$ must  not necessarily be trivial. The erasure in this case is $\era(U,V)=h(U)-h(U\cap V)$. The
 error is $\err(U,V)=h(V)-h(U\cap V)$, or if we wish to express it in terms of $E'$, it is $\err(U,V)=h(E')-h(U\cap E')$. If the intersection of
$(U\cap V)$ and $E'$ is trivial (and so  the intersection of $U$ and $E'$ as well), then the error is $\err(U,V)=h(E')$.

\section{Enumerable lattices and spheres}\label{enumerable}

Let $\mathbf{L}=(L;\vee,\wedge)$ be the subgroup lattice of a finite abelian $p$-group $G$ and $\mu$ a partition. If two subgroups $U,V$ in
this lattice are isomorphic, i.e. they have the same type, then they have the same number of subgroups of type $\mu$. More precisely, if
$U,V\in L$ have the same type, then $|\{W\in L_\mu\mid W\leq U\}|=|\{W\in L_\mu\mid W\leq V\}|$ holds. But if we consider the number of groups
in this lattice, which are greater or equal than $U$ or $V$ instead of less or equal, then the statement does not hold in general. More
precisely, if $U,V\in L$ have the same type, then $|\{W\in L_\mu\mid W\geq U\}|=|\{W\in L_\mu\mid W\geq V\}|$ does not necessarily follow. E.g.~if we consider the subgroup lattice of $\mathbb{Z}_4\times \mathbb{Z}_2$ in Figure \ref{latticepictures}, then the black colored element of
type $(1)$ is covered by two elements of type $(2)$ and the other two elements of type $(1)$ by none. If we define for $U\in L$ and $r\in
\mathbb{N}$ the \textit{sphere} $S(U,r):=\{V\in L\mid d(U,V)\leq r\}$ with radius $r$ centered at $U$, then we have as a consequence that
the spheres with radius 1 centered at the elements of type $(1)$ have not the same cardinality. The sphere centered at the black colored
element has the cardinality $5$ and the other two spheres have cardinality $3$. More general, if $U,V\in L$ have the same type, then 
$|S(U,r)|=|S(V,r)|$ does not necessarily follows. But that might be a desired property. If we restrict now  $G$ to be of the form
$\mathbb{Z}_{p^s}^N$ for some integers $s$ and $N$, then for $U,V\in L$ follows $|\{W\in L_\mu\mid W\geq U\}|=|\{W\in L_\mu\mid W\geq
V\}|$ if $U$ and $V$ have the same type (see Theorem \ref{theo_enum} and Example \ref{exmpl_subgroup_lattice}). For example, this can be seen in
the subgroup lattice of $\mathbb{Z}_4\times \mathbb{Z}_4$ in Figure \ref{latticepictures}. Consequently for $U,V\in L$ follows 
$|S(U,r)|=|S(V,r)|$ if $U$ and $V$ have the same type (see chapter \ref{chapter_sphere_size_computation}).

In the following we will generalize the subgroup lattices of finite abelian $p$-groups to \textit{down-enumerable lattices} and subgroup
lattices of finite abelian $p$-groups of the form  $\mathbb{Z}_{p^s}^N$ to \textit{enumerable lattices}. Enumerable lattices are
semi-primary lattices with the desired property described above.  In chapter \ref{chapter_duality_result} we will present a result, which
shows that down-enumerable lattices are under certain circumstances even enumerable, which is a generalization of the group case described
above. Since we know that two spheres in an enumerable lattice with same radius and centered at two elements with the same type have the
same cardinality, we would like to compute the size of these spheres dependent on the radius and the type of the element in the center.
This will be described in chapter \ref{chapter_sphere_size_computation}.

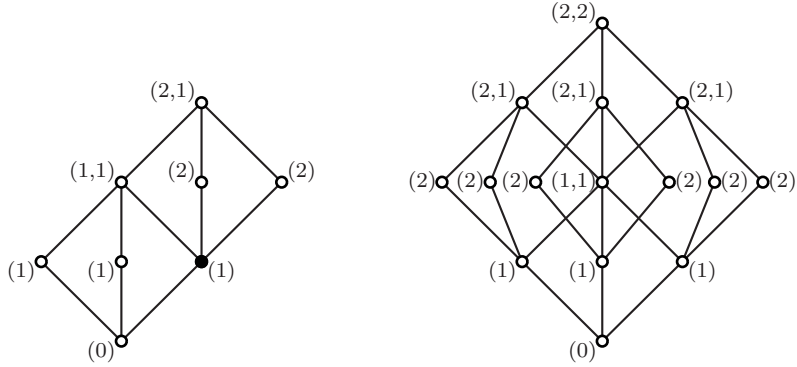
\begin{figure}
  \begin{picture}(300,180)%
  \put(0,0){%
	    \begin{diagram}{150}{150}
	      \Node{1}{60}{30}
	      \Node{2}{30}{60}
	      \Node{3}{60}{60}
	      \Node{4}{90}{60}
	      \Node{5}{60}{90}
	      \Node{6}{90}{90}
	      \Node{7}{120}{90}
	      \Node{8}{90}{120}
	      \Edge{1}{2}
	      \Edge{1}{3}
	      \Edge{1}{4}
	      \Edge{2}{5}
	      \Edge{3}{5}
	      \Edge{4}{5}
	      \Edge{4}{6}
	      \Edge{4}{7}
	      \Edge{5}{8}
	      \Edge{6}{8}
	      \Edge{7}{8}
	      \leftObjbox{1}{2}{0}{\scriptsize{(0)}}
	      \leftObjbox{2}{2}{0}{\scriptsize{(1)}}
	      \leftObjbox{3}{2}{0}{\scriptsize{(1)}}
	      \rightObjbox{4}{2}{0}{\scriptsize{(1)}}
	      \leftAttbox{5}{2}{0}{\scriptsize{\rm{(1,1)}}}
	      \leftAttbox{6}{2}{0}{\scriptsize{\rm(2)}}
	      \rightAttbox{7}{2}{0}{\scriptsize{\rm(2)}}
	      \leftAttbox{8}{2}{0}{\scriptsize{\rm{(2,1)}}}
	    \end{diagram}
	   }
  \put(90,60){\ColorNode{black}}
  \put(150,0){%
	    \begin{diagram}{180}{180}
	      \Node{1}{90}{30}
	      \Node{2}{60}{60}
	      \Node{3}{90}{60}
	      \Node{4}{120}{60}
	      \Node{5}{30}{90}
	      \Node{6}{48}{90}
	      \Node{7}{65}{90}
	      \Node{8}{90}{90}
	      \Node{9}{115}{90}
	      \Node{10}{132}{90}
	      \Node{11}{150}{90}
	      \Node{12}{60}{120}
	      \Node{13}{90}{120}
	      \Node{14}{120}{120}
	      \Node{15}{90}{150}
	      \Edge{1}{2}
	      \Edge{1}{3}
	      \Edge{1}{4}
	      \Edge{2}{5}
	      \Edge{2}{6}
	      \Edge{2}{8}
	      \Edge{3}{7}
	      \Edge{3}{8}
	      \Edge{3}{9}
	      \Edge{4}{8}
	      \Edge{4}{10}
	      \Edge{4}{11}
	      \Edge{5}{12}
	      \Edge{6}{12}
	      \Edge{7}{13}
	      \Edge{8}{12}
	      \Edge{8}{13}
	      \Edge{8}{14}
	      \Edge{9}{13}
	      \Edge{10}{14}
	      \Edge{11}{14}
	      \Edge{12}{15}
	      \Edge{13}{15}
	      \Edge{14}{15}
	      \leftObjbox{1}{2}{0}{\scriptsize{(0)}}
	      \leftObjbox{2}{2}{0}{\scriptsize{(1)}}
	      \leftObjbox{3}{2}{0}{\scriptsize{(1)}}
	      \rightObjbox{4}{2}{0}{\scriptsize{(1)}}
	      \NoDots\leftObjbox{5}{2}{-4}{\scriptsize{(2)}}
	      \NoDots\leftObjbox{6}{2}{-4}{\scriptsize{(2)}}
	      \NoDots\leftObjbox{7}{2}{-4}{\scriptsize{(2)}}
	      \NoDots\leftObjbox{8}{2}{-4}{\scriptsize{(1,1)}}
	      \NoDots\rightObjbox{9}{2}{-4}{\scriptsize{(2)}}
	      \NoDots\rightObjbox{10}{2}{-4}{\scriptsize{(2)}}
	      \NoDots\rightObjbox{11}{2}{-4}{\scriptsize{(2)}}
	      \leftAttbox{12}{2}{0}{\scriptsize{\rm(2,1)}}
	      \leftAttbox{13}{2}{0}{\scriptsize{\rm(2,1)}}
	      \rightAttbox{14}{2}{0}{\scriptsize{\rm(2,1)}}
	      \leftAttbox{15}{2}{0}{\scriptsize{\rm(2,2)}}
	    \end{diagram}
	   }
\end{picture}
\caption{Left: the subgroup lattice of $\mathbb{Z}_4\times \mathbb{Z}_2$. Right: the subgroup lattice of $\mathbb{Z}_4\times \mathbb{Z}_4$.
Every element is labeled with it's type.}\label{latticepictures}
\end{figure}

\begin{defn}
  A finite semi-primary lattice $\mathbf{L}=(L;\vee,\wedge)$ is called \textit{down-enumerable} if for every $u,v\in L$ and every partition
$\mu$ the implication
\begin{displaymath}
  \tp(u)=\tp(v) \Rightarrow |\{w\in L_\mu\mid w\leq u\}|=|\{w\in L_\mu\mid w\leq v\}|
\end{displaymath}
holds. Then for an element $u$ of type $\varphi$  and a partition $\mu$ we denote $\alpha(\varphi,\mu):=|\{w\in L_\mu\mid w\leq u\}|$.
$\mathbf{L}$ is called \textit{up-enumerable} if for every $u,v\in L$ and every partition
$\mu$ the implication
\begin{displaymath}
  \tp(u)=\tp(v) \Rightarrow |\{w\in L_\mu\mid w\geq u\}|=|\{w\in L_\mu\mid w\geq v\}|
\end{displaymath}
holds. Then for an element $u$ of type $\varphi$  and a partition $\mu$ we denote $\beta(\varphi,\mu):=|\{w\in L_\mu\mid w\geq u\}|$. If
$\mathbf{L}$ is down-enumerable and up-enumerable, then it is called \textit{enumerable}.
\end{defn}

\subsection{A duality result}\label{chapter_duality_result}

This section is devoted to a proof of the following theorem.

\begin{theo}\label{theo_enum}
Let $\mathbf{L}=(L;\vee,\wedge)$ be a self-dual down-enumerable lattice and $\lambda:=\tp(\mathbf{L})=(s^n)$ for some positive integers 
$s,n$. Assume further, that for every cycle $z\in L$ there exists a cycle $z'\in L$ with $z\leq z'$ and $h(z')=s$. Then  $\mathbf{L}$ is enumerable and for
every two partitions $\mu,\varphi\leq \lambda$  holds
\begin{equation}\label{beta}
	\beta(\mu,\varphi)=\alpha(\lambda-\mu,\lambda-\varphi).
\end{equation}
\end{theo}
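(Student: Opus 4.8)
The natural strategy is to exhibit an order-reversing bijection of $L$ that takes elements of type $\mu$ to elements of type $\lambda-\mu$, and then transport the down-enumerability of $\mathbf{L}$ through it to obtain up-enumerability together with the displayed formula. Since $\mathbf{L}$ is self-dual, fix a dual automorphism $\sigma:L\to L$, i.e.\ an order-reversing bijection with $\sigma(u\vee v)=\sigma(u)\wedge\sigma(v)$ and $\sigma(u\wedge v)=\sigma(u)\vee\sigma(v)$. Then for $u,v\in L$ one has $u\le v$ iff $\sigma(u)\ge\sigma(v)$, so
\begin{displaymath}
  |\{w\in L\mid \tp(w)=\nu,\ w\ge u\}| = |\{w\in L\mid \tp(\sigma(w))=\nu,\ \sigma(w)\le \sigma(u)\}|,
\end{displaymath}
and if I can show that $\tp(\sigma(w))$ depends only on $\tp(w)$ — indeed that it equals $\lambda-\tp(w)$ — then the right-hand side becomes $\alpha(\tp(\sigma(u)),\,\lambda-\nu)$, which depends only on $\tp(u)$. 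This simultaneously gives up-enumerability (hence enumerability) and, writing $\tp(u)=\mu$, $\nu=\varphi$, the identity $\beta(\mu,\varphi)=\alpha(\lambda-\mu,\lambda-\varphi)$.

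So the crux is the claim: for every $u\in L$ with $\tp(u)=\mu$ one has $\tp(\sigma(u))=\lambda-\mu$. First I would record that $\mu\le\lambda$ always holds (this is noted in the excerpt, since $[0,u]$ sits inside $L$ whose type is $\lambda$), so $\lambda-\mu$ is defined. To compute $\tp(\sigma(u))$ I use the hypothesis that every cycle extends to a cycle of height $s$: decompose $u=z_1\veedot\cdots\veedot z_k$ into independent cycles with $h(z_i)=\mu_i$ and, using that hypothesis repeatedly, extend this to a decomposition $1_\mathbf{L}=z_1'\veedot\cdots\veedot z_n'$ of the whole lattice into $n$ independent cycles each of height $s$, with $z_i\le z_i'$ for $i\le k$ (here I must check that one can carry out the extension step while preserving independence — this is the kind of exchange/replacement argument available in semi-primary modular lattices, and it is where Jónsson's structure theory does the work). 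Applying $\sigma$, the elements $\sigma(z_1'),\dots,\sigma(z_n')$ are dual cycles whose \emph{co-}intervals $[\sigma(z_i'),1]$ are chains; more usefully, in the self-dual picture $\sigma$ carries the direct-join decomposition of $1_\mathbf{L}$ into a direct-meet decomposition, and dually $\sigma(u)=\sigma(z_1)\wedge\cdots\wedge\sigma(z_k)\wedge\sigma(z_{k+1}')\wedge\cdots$. The clean way to finish is to pass to the quotient/interval description: within each chain interval $[0,z_i']$ of length $s$, the element $z_i$ has height $\mu_i$, and $\sigma$ restricted appropriately sends it to an element of the dual chain of co-height $\mu_i$, i.e.\ of height $s-\mu_i$; assembling these and sorting gives $\tp(\sigma(u))=(s-\mu_{\pi(1)},\dots,s-\mu_{\pi(n)})$ in decreasing order, padding $\mu$ with zeros up to length $n$, which is exactly the partition $\lambda-\mu$ by the definition given in the Preliminaries.

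The main obstacle I anticipate is precisely this compatibility of $\sigma$ with the type function — equivalently, showing $\sigma$ maps $L_\mu$ onto $L_{\lambda-\mu}$. The subtlety is that a dual automorphism only automatically sends cycles to dual cycles, not to cycles, so one cannot read off $\tp(\sigma(u))$ directly from a cycle decomposition of $\sigma(u)$; one has to go through a \emph{complemented} decomposition of $1_\mathbf{L}$ adapted to $u$, which is exactly why the hypothesis ``every cycle lies under a cycle of height $s$'' is needed (it guarantees enough uniform cycles of full height $s$ to build such a decomposition, so that the ambient type is genuinely $(s^n)$ in a ``rectangular'' way). Once the map $\sigma:L_\mu\to L_{\lambda-\mu}$ is established as a bijection, the transport argument above is immediate and purely formal, and up-enumerability plus formula \eqref{beta} drop out together. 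I would also double-check the degenerate cases $\mu=\lambda$ (giving $\beta(\lambda,\varphi)=\alpha(0,\lambda-\varphi)$, consistent since only $u=1_\mathbf{L}$ has type $\lambda$ and only $w=0_\mathbf{L}$ has type $0$) and $\mu=0$ to make sure the indexing conventions for $\lambda-\mu$ match.
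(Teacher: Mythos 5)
Your overall architecture matches the paper's: everything reduces to the claim that dualization carries an element of type $\mu$ to one of type $\lambda-\mu$ (equivalently, that the dual type of $u$ is $\lambda-\tp(u)$, which is the paper's Lemma \ref{dual_type}); once that is in hand, transporting down-enumerability through a dual automorphism is the purely formal step you describe, and your extension of a cycle decomposition $u=z_1\veedot\cdots\veedot z_k$ to a decomposition $1_\mathbf{L}=x_1\veedot\cdots\veedot x_n$ with $h(x_i)=s$ and $z_i\leq x_i$ is exactly what the paper does (via an atom-completion corollary together with the hypothesis that every cycle lies under a cycle of height $s$). Your degenerate-case checks and the remark that $\mu\leq\lambda$ are also fine.

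The gap is in how you finish the key claim. As you yourself note, a dual automorphism sends cycles to dual cycles, and your proposed repair --- ``within each chain $[0,z_i']$ the element $z_i$ has height $\mu_i$, and $\sigma$ sends it to an element of height $s-\mu_i$; assembling these and sorting gives $\tp(\sigma(u))$'' --- does not hold as stated: $\sigma(z_i)$ lies in the chain $[\sigma(z_i'),1_\mathbf{L}]$ and has height $h(1_\mathbf{L})-\mu_i=ns-\mu_i$, not $s-\mu_i$, and no map is exhibited that produces independent \emph{cycles} of heights $s-\mu_i$ joining to $\sigma(u)$. (Also, $\sigma(u)=\sigma(z_1)\wedge\cdots\wedge\sigma(z_k)$ only; meeting further with $\sigma(z_{k+1}'),\dots$ would in general give a strictly smaller element.) The paper closes exactly this gap by a different two-step route: first it shows that the upper interval $[u,1_\mathbf{L}]$ has type $\lambda-\mu$, by verifying with the modular height identity $h(a)+h(b)=h(a\vee b)+h(a\wedge b)$ that the elements $x_i\vee u$ are independent cycles of $[u,1_\mathbf{L}]$ of relative heights $s-\mu_i$ with join $1_\mathbf{L}$; second, it invokes J\'onsson's theorem that a semi-primary lattice and its dual have the same type (Lemma \ref{lemma_tp_dual_lattice}), so the dual of $[u,1_\mathbf{L}]$ --- which under your $\sigma$ is $[0_\mathbf{L},\sigma(u)]$ --- also has type $\lambda-\mu$. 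Both ingredients (the independence computation inside $[u,1_\mathbf{L}]$ and the invariance of type under dualization) are needed; without them the central claim $\tp(\sigma(u))=\lambda-\tp(u)$, on which your whole transport argument rests, remains unproved.
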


\begin{lemma}\label{height_independent}
Let $(L;\vee,\wedge)$ be a modular lattice of finite length and $u_1,...,u_n\in L$. Then there holds
\begin{displaymath}
	 h(u_1\vee...\vee u_n)\leq h(u_1)+...+h(u_n).
\end{displaymath}
Furthermore we have the equivalence:
\begin{displaymath}
	u_1,...,u_n \text{ are independent }\Leftrightarrow h(u_1\vee...\vee u_n)=h(u_1)+...+h(u_n).
\end{displaymath}
\end{lemma}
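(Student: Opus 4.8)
The plan is to base everything on the modular height identity $h(u)+h(v)=h(u\vee v)+h(u\wedge v)$ from equation~(\ref{mod_function}), together with the elementary fact that in a lattice of finite length $h(x)=0$ holds if and only if $x=0_\mathbf{L}$ (immediate from the definition of $h$ via maximal chains from $0_\mathbf{L}$, since a chain of length $0$ from $0_\mathbf{L}$ to $x$ forces $x=0_\mathbf{L}$).

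First I would prove the inequality $h(u_1\vee\dots\vee u_n)\leq h(u_1)+\dots+h(u_n)$ by induction on $n$. The case $n=1$ is trivial; for the step, applying (\ref{mod_function}) to $u_1\vee\dots\vee u_{n-1}$ and $u_n$ gives $h(u_1\vee\dots\vee u_n)=h(u_1\vee\dots\vee u_{n-1})+h(u_n)-h\bigl((u_1\vee\dots\vee u_{n-1})\wedge u_n\bigr)\leq h(u_1\vee\dots\vee u_{n-1})+h(u_n)$, since heights are nonnegative, and the induction hypothesis finishes it.

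For the equivalence, the implication $\Leftarrow$ is cleanest done directly. Assume $h(u_1\vee\dots\vee u_n)=\sum_i h(u_i)$, fix $i$, and set $w_i:=\bigvee_{j\neq i}u_j$. Then, using (\ref{mod_function}) and the already-proven inequality applied to the $(n-1)$-element family $\{u_j\mid j\neq i\}$, $\sum_j h(u_j)=h(w_i\vee u_i)=h(w_i)+h(u_i)-h(w_i\wedge u_i)\leq\sum_{j\neq i}h(u_j)+h(u_i)-h(w_i\wedge u_i)=\sum_j h(u_j)-h(w_i\wedge u_i)$; cancelling yields $h(w_i\wedge u_i)\leq 0$, hence $w_i\wedge u_i=0_\mathbf{L}$, which is exactly the defining condition of independence. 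For $\Rightarrow$ I would again induct on $n$: one first observes that a subfamily of an independent family is independent, since for an index subset $S$ and $i\in S$ one has $\bigl(\bigvee_{j\in S\setminus\{i\}}u_j\bigr)\wedge u_i\leq\bigl(\bigvee_{j\neq i}u_j\bigr)\wedge u_i=0_\mathbf{L}$; then the induction hypothesis gives $h(u_1\vee\dots\vee u_{n-1})=\sum_{i<n}h(u_i)$, and substituting the $i=n$ independence condition $(u_1\vee\dots\vee u_{n-1})\wedge u_n=0_\mathbf{L}$ into (\ref{mod_function}) produces $h(u_1\vee\dots\vee u_n)=\sum_{i\leq n}h(u_i)$.

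There is no serious obstacle; the only points deserving an explicit word are the remark that restricting an independent family to a subfamily preserves independence (needed to run the induction in the $\Rightarrow$ direction) and the observation that $h(x)=0$ forces $x=0_\mathbf{L}$ in a lattice of finite length (needed to pass from $h(w_i\wedge u_i)=0$ to $w_i\wedge u_i=0_\mathbf{L}$).
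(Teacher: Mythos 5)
Your proof is correct. Note, however, that the paper does not actually prove this lemma at all: its ``proof'' is a one-line citation to Birkhoff, chapter IV \S 1 and \S 4, so there is no argument in the paper to compare against step by step. What you supply is a genuine self-contained derivation from the one ingredient the paper does establish, namely the modular height identity (\ref{mod_function}), and all three pieces check out: the subadditivity induction is sound; the $\Leftarrow$ direction correctly combines (\ref{mod_function}) with subadditivity applied to the $(n-1)$-element family to force $h\bigl(w_i\wedge u_i\bigr)=0$ and hence $w_i\wedge u_i=0_\mathbf{L}$; and the $\Rightarrow$ induction is legitimate because, as you observe, the paper's definition of independence (each $u_i$ meets the join of \emph{all} the others trivially) is inherited by subfamilies via monotonicity of meet, after which only the $i=n$ instance of the condition is needed in the inductive step. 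The two small points you flag explicitly (subfamilies of independent families are independent, and $h(x)=0\Rightarrow x=0_\mathbf{L}$ in a lattice of finite length) are exactly the ones that need saying. The only thing your write-up buys beyond the paper is self-containedness; the only thing the citation buys is brevity.
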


\begin{proof}
See \cite{birkhoff} chapter IV \textsection 1 and \textsection 4.
\end{proof}

\begin{lemma}\label{lemma_tp_dual_lattice}
The type of a semi-primary lattice is equal to the type of its dual lattice.
\end{lemma}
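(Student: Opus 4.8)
The plan is to realise $\tp(\mathbf{L}^{\partial})$ through an explicit decomposition of the greatest element $1_{\mathbf{L}^{\partial}}=0_{\mathbf{L}}$ of the dual lattice $\mathbf{L}^{\partial}=(L;\wedge,\vee)$ into independent cycles, and then to read off that their heights (computed in $\mathbf{L}^{\partial}$) are exactly the parts of $\tp(\mathbf{L})$. First one records the routine facts: $\mathbf{L}^{\partial}$ is again modular and of finite length, and it is semi-primary because the two defining conditions are interchanged by dualisation (a cycle of $\mathbf{L}$ is precisely a dual cycle of $\mathbf{L}^{\partial}$, and conversely), so that $\tp(\mathbf{L}^{\partial})$ is defined; and since modular lattices of finite length satisfy the Jordan--H\"older chain condition, the height function $h^{\partial}$ of $\mathbf{L}^{\partial}$ is given by $h^{\partial}(x)=h(1_{\mathbf{L}})-h(x)$.

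Now write $\tp(\mathbf{L})=(\lambda_{1},\dots,\lambda_{n})$ with $\lambda_{1}\geq\dots\geq\lambda_{n}\geq 1$; the cases $n\leq 1$ (trivial lattice or chain) are immediate, so assume $n\geq 2$. By the cited Theorem~4.9 of~\cite{jonsson} one may write $1_{\mathbf{L}}=z_{1}\veedot\dots\veedot z_{n}$ with cycles $z_{i}$, indexed so that $h(z_{i})=\lambda_{i}$, and one sets $d_{i}:=\bigvee_{j\neq i}z_{j}$ for $i=1,\dots,n$. The main work is to verify that each $d_{i}$ is a cycle of $\mathbf{L}^{\partial}$ of height $\lambda_{i}$: independence of $z_{1},\dots,z_{n}$ gives $d_{i}\vee z_{i}=1_{\mathbf{L}}$ and $d_{i}\wedge z_{i}=0_{\mathbf{L}}$, whence the diamond isomorphism for modular lattices (\cite{birkhoff}, ch.~I \S 7, Cor.~2, already used above) yields $[d_{i},1_{\mathbf{L}}]=[d_{i},d_{i}\vee z_{i}]\cong[d_{i}\wedge z_{i},z_{i}]=[0_{\mathbf{L}},z_{i}]$, which is a chain since $z_{i}$ is a cycle; thus $d_{i}$ is a dual cycle of $\mathbf{L}$, i.e.\ a cycle of $\mathbf{L}^{\partial}$. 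Moreover Lemma~\ref{height_independent} gives $h(d_{i})=\sum_{j\neq i}\lambda_{j}$, hence $h^{\partial}(d_{i})=h(1_{\mathbf{L}})-h(d_{i})=\lambda_{i}>0$; in particular $d_{i}\neq 1_{\mathbf{L}}=0_{\mathbf{L}^{\partial}}$.

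It then remains to see that $1_{\mathbf{L}^{\partial}}=d_{1}\veedot\dots\veedot d_{n}$ holds in $\mathbf{L}^{\partial}$. Independence of $d_{1},\dots,d_{n}$ in $\mathbf{L}^{\partial}$ means $\bigl(\bigwedge_{j\neq i}d_{j}\bigr)\vee d_{i}=1_{\mathbf{L}}$ for each $i$; since $z_{i}\leq d_{j}$ for every $j\neq i$ we have $z_{i}\leq\bigwedge_{j\neq i}d_{j}$, so this join dominates $z_{i}\vee d_{i}=1_{\mathbf{L}}$. Granting independence, Lemma~\ref{height_independent} applied inside $\mathbf{L}^{\partial}$ gives $h^{\partial}\bigl(\bigwedge_{i}d_{i}\bigr)=\sum_{i}h^{\partial}(d_{i})=\sum_{i}\lambda_{i}=h(1_{\mathbf{L}})=h^{\partial}(0_{\mathbf{L}})=h^{\partial}(1_{\mathbf{L}^{\partial}})$, and strict monotonicity of a height function forces $\bigwedge_{i}d_{i}=0_{\mathbf{L}}=1_{\mathbf{L}^{\partial}}$. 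Thus $1_{\mathbf{L}^{\partial}}$ is a join of $n$ nonzero independent cycles of $\mathbf{L}^{\partial}$ of heights $\lambda_{1}\geq\dots\geq\lambda_{n}$, so by the definition of type $\tp(\mathbf{L}^{\partial})=(\lambda_{1},\dots,\lambda_{n})=\tp(\mathbf{L})$. The only genuinely delicate point is the identification of $[d_{i},1_{\mathbf{L}}]$ as a chain via the diamond isomorphism; the rest is bookkeeping with heights.
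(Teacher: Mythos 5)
Your proof is correct, but it is a genuinely different route from the paper's: the paper disposes of this lemma with a one-line citation to J\'onsson--Monk (Corollary 4.11 of \cite{jonsson}), whereas you give a self-contained argument. Your construction --- taking a cycle decomposition $1_{\mathbf{L}}=z_{1}\veedot\dots\veedot z_{n}$ and passing to the complementary joins $d_{i}=\bigvee_{j\neq i}z_{j}$, verifying via the diamond isomorphism that $[d_{i},1_{\mathbf{L}}]\cong[0_{\mathbf{L}},z_{i}]$ is a chain, and then checking independence and heights in the dual --- is sound; in particular the key identities $d_{i}\vee z_{i}=1_{\mathbf{L}}$ and $d_{i}\wedge z_{i}=0_{\mathbf{L}}$ are exactly the independence condition, and the height bookkeeping via $h^{\partial}(x)=h(1_{\mathbf{L}})-h(x)$ and Lemma~\ref{height_independent} is legitimate because modular lattices of finite length satisfy the Jordan--H\"older chain condition. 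The only step you use silently is that the subfamily $\{z_{j}\}_{j\neq i}$ is again independent (needed for $h(d_{i})=\sum_{j\neq i}\lambda_{j}$); this is immediate from the definition, since $\bigvee_{j\in S\setminus\{i\}}z_{j}\leq\bigvee_{j\neq i}z_{j}$ for any subfamily $S$, but it deserves a sentence. What the two approaches buy: the paper's citation keeps the exposition short and leans on a result proved in full generality for semi-primary lattices, while your argument makes the lemma self-contained and, incidentally, is very much in the spirit of the paper's own proof of Lemma~\ref{dual_type}, which likewise builds an explicit independent family of (dual) cycles and identifies heights by the same modularity computations; your $d_{i}$ could even be reused there.
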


\begin{proof}
See \cite{jonsson} Corollary 4.11.
\end{proof}

For the next Lemma, we need another notation from \cite{jonsson}. Let $(L;\vee,\wedge)$ be a semi-primary lattice, $a\in L$ and $k$ a positive
 integer. The join of all cycles $z\in L$ with  $z\leq a$ and $h(z)=k$ is denoted by $a[k]$.

\begin{lemma}\label{inde_atoms}
Let $(L;\vee,\wedge)$ be a semi-primary lattice and $u_1,...,u_n\in L$. Then the following equivalence holds:
\begin{displaymath}
	u_1,...,u_n \text{ are independent} \Leftrightarrow u_1[1],...,u_n[1] \text{ are independent}.
\end{displaymath}
\end{lemma}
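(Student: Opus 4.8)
The plan is to prove both implications of Lemma~\ref{inde_atoms} by relating the independence of the $u_i$ to the independence of the socles $u_i[1]$, using the characterization of independence via heights from Lemma~\ref{height_independent}.

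First I would record the basic facts about the operation $a[k]$ that we will need. For a cycle $z$ with $h(z) = k \geq 1$, the interval $[0_\mathbf{L}, z]$ is a chain, so $z$ contains exactly one cycle of height $1$, namely $z[1]$, and $z[1] \leq z$. More generally, if $u = z_1 \veedot \cdots \veedot z_m$ is a decomposition of an arbitrary element $u$ into independent nonzero cycles (which exists by Theorem 4.9 of \cite{jonsson}), then I claim $u[1] = z_1[1] \vee \cdots \vee z_m[1]$ and moreover the cycles $z_1[1], \dots, z_m[1]$ are independent with $h(u[1]) = m$. The inclusion $z_1[1] \vee \cdots \vee z_m[1] \leq u[1]$ is clear since each $z_i[1] \leq z_i \leq u$ is a cycle of height $1$. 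For the reverse inclusion and the independence, I would use that in a semi-primary (hence modular, finite length) lattice the join of all height-$1$ cycles below $u$ — the socle of $u$ — is a complemented modular lattice of length equal to the number of summands in any cycle decomposition of $u$; since $z_1[1] \vee \cdots \vee z_m[1]$ already has height $m$ by Lemma~\ref{height_independent} (the $z_i[1]$ are independent because they sit inside the independent elements $z_i$, and an element below an independent family is again independent — this follows directly from the definition since meets only shrink), it must equal $u[1]$. This gives the key numerical identity $h(u[1]) = $ (number of independent cycle summands of $u$).

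Now for the forward direction, suppose $u_1, \dots, u_n$ are independent. Write each $u_i = z_{i,1} \veedot \cdots \veedot z_{i,m_i}$ as an independent join of nonzero cycles. Then the full family $\{z_{i,j}\}$ is independent (independence is associative in the sense that refining each summand of an independent family by an independent decomposition yields an independent family — provable via the height criterion of Lemma~\ref{height_independent}, since the total height adds up correctly). Hence $\{z_{i,j}[1]\}$ is independent, and therefore so is the coarser family $u_1[1] = \bigvee_j z_{1,j}[1], \dots, u_n[1] = \bigvee_j z_{n,j}[1]$, again because a family obtained by grouping an independent family into blocks and taking joins within blocks stays independent (heights add up). For the converse, suppose $u_1[1], \dots, u_n[1]$ are independent. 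By Lemma~\ref{height_independent} this means $h(u_1[1] \vee \cdots \vee u_n[1]) = h(u_1[1]) + \cdots + h(u_n[1]) = m_1 + \cdots + m_n$, using the identity from the previous paragraph with $m_i$ the number of cycle summands of $u_i$. On the other hand, $u := u_1 \vee \cdots \vee u_n$ has some cycle decomposition into $m$ independent nonzero cycles, and $h(u[1]) = m$. But $u[1] = u_1[1] \vee \cdots \vee u_n[1]$ (the socle of a join is the join of the socles — each height-$1$ cycle below $u$ lies in the socle, and one checks both inclusions as above), so $m = m_1 + \cdots + m_n$. Since always $h(u) \leq h(u_1) + \cdots + h(u_n)$ with equality iff the $u_i$ are independent, and since $h(u_i) \geq m_i$, I would close the argument by showing the socle count forces $h(u) = h(u_1) + \cdots + h(u_n)$: the number of cycle summands is additive exactly when there is no collapse.

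The main obstacle I anticipate is the last step — upgrading "the socle dimensions add up" to "the heights add up," i.e. showing that independence of the socles $u_i[1]$ genuinely propagates back to independence of the $u_i$ themselves, rather than just controlling the number of summands. The cleanest route is probably to avoid counting heights directly and instead argue by induction on $n$: assuming $u_1[1], \dots, u_n[1]$ independent, one shows $(u_1 \vee \cdots \vee u_{n-1}) \wedge u_n = 0_\mathbf{L}$ by noting that any nonzero element below this meet would contain a nonzero cycle, hence a height-$1$ cycle, which would lie in both $(u_1 \vee \cdots \vee u_{n-1})[1] = u_1[1] \vee \cdots \vee u_{n-1}[1]$ and $u_n[1]$, contradicting independence of the socles — and then permuting the indices handles the other $n-1$ conditions in the definition of independence. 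This inductive/socle-intersection argument sidesteps the delicate additivity bookkeeping and is likely how \cite{jonsson} intends it; the height-counting version above serves as a sanity check.
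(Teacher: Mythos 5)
The paper does not actually prove this lemma: its ``proof'' is the single line ``See \cite{jonsson} Theorem 4.14.'' So you are supplying an argument where the paper offers only a citation, and your final inductive socle-intersection argument is essentially the correct self-contained proof. Two simplifications first: the forward direction needs none of the cycle-refinement machinery, since $u_i[1]\leq u_i$ and (as you note in passing) replacing each member of an independent family by a smaller element preserves independence; and the height-counting variant of the converse in your third paragraph should simply be discarded, for the reason below.

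The one step that needs repair is the identity $(u_1\vee\cdots\vee u_{n-1})[1]=u_1[1]\vee\cdots\vee u_{n-1}[1]$, which you justify as ``the socle of a join is the join of the socles --- one checks both inclusions as above.'' As a statement about arbitrary joins this is false: in the subgroup lattice of $\mathbb{Z}_4\times\mathbb{Z}_2$ (the paper's own Figure \ref{latticepictures}), take $U=\langle(1,0)\rangle$ and $V=\langle(1,1)\rangle$; then $U[1]=V[1]=\langle(2,0)\rangle$, so $U[1]\vee V[1]$ is an atom, whereas $U\vee V$ is the whole group and $(U\vee V)[1]$ has height $2$. The identity is only valid for \emph{independent} joins, which is exactly what your second paragraph establishes --- provided you also prove the asserted fact $h(u[1])=m$ rather than quoting it: $u[1]$ is a join of atoms, hence a join of \emph{independent} atoms (pick them greedily using Lemma~\ref{height_independent}), hence of type $(1^j)$; then $\tp(u[1])\leq\tp(u)$ forces $j\leq m$, while containment of $z_1[1]\vee\cdots\vee z_m[1]$ forces $j\geq m$. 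In your induction the needed independence of $u_1,\dots,u_{n-1}$ does come for free from the induction hypothesis applied to the subfamily of socles, so the argument is not circular --- but this must be said explicitly, because applying the same ``socle of a join'' identity to the full, not-yet-known-to-be-independent family is precisely what sinks the height-counting version (as you yourself suspected). With that made explicit, your proof is correct.
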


\begin{proof}
See \cite{jonsson} Theorem 4.14.
\end{proof}

\begin{lemma}\label{prop_existence}
Let $(L;\vee,\wedge)$ be a semi-primary lattice, $u\in L$ and $\varphi\leq \tp(u)$. Then there exists an element $v\in L_\varphi$ with 
$v\leq u$.
\end{lemma}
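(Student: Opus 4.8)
The plan is to prove Lemma \ref{prop_existence} by induction on the height $h(u)$, or equivalently on $|\tp(u)|$, the base case $h(u)=0$ being trivial since then $\varphi$ must be the empty partition and $v=0_\mathbf{L}$ works. For the inductive step, write $\tp(u)=(\lambda_1,\dots,\lambda_k)$ with $\lambda_1\geq\dots\geq\lambda_k>0$ and $\varphi=(\varphi_1,\dots,\varphi_k)\leq\tp(u)$ (padding $\varphi$ with zeros so it has $k$ parts). By Theorem 2.9 we may fix a decomposition $u=z_1\veedot\dots\veedot z_k$ into independent nonzero cycles with $h(z_i)=\lambda_i$. The idea is to shrink each $z_i$ to a sub-cycle of the correct height and take the join.

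First I would observe that since $[0_\mathbf{L},z_i]$ is a chain of length $\lambda_i\geq\varphi_i$, there is a unique element $w_i\in[0_\mathbf{L},z_i]$ with $h(w_i)=\varphi_i$; note $w_i$ is again a cycle (a subinterval of a chain is a chain), and $w_i\leq z_i$. Set $v:=w_1\vee\dots\vee w_k$. Since $w_i\leq z_i$ and the $z_i$ are independent, Lemma \ref{inde_atoms} is not quite what is needed directly; instead I would argue independence of the $w_i$ from the independence of the $z_i$: for each $j$, $(w_1\vee\dots\vee\widehat{w_j}\vee\dots\vee w_k)\wedge w_j \leq (z_1\vee\dots\vee\widehat{z_j}\vee\dots\vee z_k)\wedge z_j = 0_\mathbf{L}$, using that join and meet are monotone. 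Hence $w_1,\dots,w_k$ are independent, so $v=w_1\veedot\dots\veedot w_k$, and by Lemma \ref{height_independent} we get $h(v)=\sum\varphi_i=|\varphi|$. Discarding the zero terms $w_i=0_\mathbf{L}$, this exhibits $v$ as a join of independent nonzero cycles of heights exactly the nonzero parts of $\varphi$, so by the uniqueness in Theorem 2.9 and the definition of type, $\tp(v)=\varphi$, i.e. $v\in L_\varphi$, and clearly $v\leq z_1\vee\dots\vee z_k=u$.

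Actually the induction is unnecessary — the decomposition argument above is self-contained — but I would keep it in mind as a fallback in case the monotonicity argument for independence of the $w_i$ needs more care; alternatively one could peel off one cycle at a time, replacing $u$ by $z_1\veedot\dots\veedot z_{k-1}$ inside the interval $[0_\mathbf{L},u]$ (which is semi-primary by \cite{jonsson} Corollary 4.4) and invoking the inductive hypothesis there. The main obstacle I anticipate is the independence claim for the shrunken cycles $w_i$: one must be sure that replacing each $z_i$ by a smaller $w_i$ cannot destroy independence, which is intuitively clear and follows from monotonicity of $\vee$ and $\wedge$, but deserves to be spelled out carefully since the whole argument hinges on it. A secondary point worth checking is that one genuinely gets $\tp(v)=\varphi$ and not merely $h(v)=|\varphi|$ — this is where the cycle structure of each $w_i$ and Theorem 2.9 are essential, since height alone does not determine type.
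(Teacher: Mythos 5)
Your proposal is correct and follows essentially the same route as the paper: decompose $u$ into independent nonzero cycles via Theorem 2.9, shrink each cycle $z_i$ to the unique subelement of height $\varphi_i$ in the chain $[0_\mathbf{L},z_i]$, observe that independence is preserved, and take the join. The only difference is cosmetic — the paper simply asserts that the shrunken cycles remain independent, whereas you justify it by monotonicity of $\vee$ and $\wedge$ against the defining meet condition, which is exactly the right (and easy) argument; the inductive scaffolding you mention at the outset is indeed unnecessary, as you note yourself.
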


\begin{proof}
Let $\mu:=\tp(u)$. There exist independent cycles $u_1,...,u_n$   distinct from zero with $h(u_i)=\mu_i$, such that $u$ is the join of
 $u_1,...,u_n$. Since $\varphi_i\leq \mu_i$, there exists a cycle $v_i\leq u_i$ with $h(v_i)=\varphi_i$ for $i=1,...,\varphi'_1$. Because
$u_1,...,u_{\varphi'_1}$ are independent cycles, also the cycles $v_1,...,v_{\varphi'_1}$ must be independent. Hence, we conclude $v:=v_1\vee...\vee
v_{\varphi'_1}\leq u$ and $\tp(v)=\varphi$.
\end{proof}

An element $u$ of a bounded lattice is called \textit{atom} if it has height 1.

\begin{lemma}\label{lemma_atom}
Let $\mathbf{L}=(L;\vee,\wedge)$ be a semi-primary lattice, $\lambda:=\tp(\mathbf{L})$, $n:=\lambda'_1$ and $u\in L$
with $\mu:=\tp(u)$ and
$m:=\mu'_1$, such that $m<n$. Furthermore let $u_1,...,u_m$ be independent cycles distinct from zero, such that $u$ is the join of
$u_1,...,u_m$. Then there exists an atom $a$, such that $u_1,...,u_m,a$ are independent.
\end{lemma}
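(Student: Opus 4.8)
The plan is to produce the desired atom by working inside the top element $1_{\mathbf L}$ and comparing the "bottom layers" $u[1]$ and $1_{\mathbf L}[1]$. First I would recall from Lemma \ref{inde_atoms} that $u_1,\dots,u_m$ are independent iff $u_1[1],\dots,u_m[1]$ are independent, and more generally that adding an atom $a$ keeps independence iff $u_1[1],\dots,u_m[1],a$ are independent (here $a[1]=a$ since $h(a)=1$). So the task reduces to finding an atom $a$ with $a\wedge(u_1[1]\vee\dots\vee u_m[1])=0_{\mathbf L}$, equivalently $a\not\leq w$ where $w:=u_1[1]\vee\dots\vee u_m[1]=u[1]$. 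Thus it suffices to exhibit an atom of $\mathbf L$ that does not lie below $u[1]$.

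Next I would get a handle on $w=u[1]$. By Lemma \ref{height_independent} the cycles $u_1[1],\dots,u_m[1]$, being independent atoms, satisfy $h(w)=m$, and moreover $\tp(w)=(1^m)$: indeed $w$ is a join of $m$ independent cycles each of height $1$. On the other hand $1_{\mathbf L}[1]$ is the join of all atoms of $\mathbf L$; since $\tp(\mathbf L)=\lambda$ has $\lambda'_1=n$, there exist $n$ independent cycles (distinct from zero) whose join is $1_{\mathbf L}$, and each contains an atom by Lemma \ref{prop_existence}; these $n$ atoms are independent by Lemma \ref{inde_atoms}, so $h(1_{\mathbf L}[1])\geq n$. (In fact $\tp(1_{\mathbf L}[1])=(1^n)$, but I only need the height inequality.)

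Now compare: $w=u[1]\leq 1_{\mathbf L}[1]$, and $h(w)=m<n\leq h(1_{\mathbf L}[1])$, so the containment is strict. Hence there is some atom $a$ with $a\leq 1_{\mathbf L}[1]$ but $a\not\leq w$: if every atom below $1_{\mathbf L}[1]$ were below $w$, then — since $1_{\mathbf L}[1]$ is by definition the join of atoms it dominates — we would get $1_{\mathbf L}[1]\leq w$, contradicting strictness. For this atom $a$, $a\wedge w$ is below $a$ and not equal to $a$, hence equals $0_{\mathbf L}$; therefore $a,u_1[1],\dots,u_m[1]$ are independent, and by Lemma \ref{inde_atoms} applied to $a=a[1]$ and $u_1[1],\dots,u_m[1]$ we conclude $a,u_1,\dots,u_m$ are independent, as required.

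The main obstacle I anticipate is the bookkeeping around $u[1]$: one must be careful that $u[1]=u_1[1]\vee\dots\vee u_m[1]$ (this uses that each $u_i$ is a cycle, so the atoms below $u$ are exactly those below some $u_i$, and that $u_i[1]$ is the unique atom below the cycle $u_i$), and that "join of atoms dominated" really forces $1_{\mathbf L}[1]\leq w$ once all such atoms lie below $w$. Both points are consequences of the semi-primary structure and the definition of $a[k]$ from \cite{jonsson}, so no genuinely new idea is needed beyond the lemmas already available.
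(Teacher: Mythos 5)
Your proof is correct and follows essentially the same route as the paper's: both arguments reduce the problem to atoms via Lemma \ref{inde_atoms} and then observe that the join of all atoms of $\mathbf{L}$ has height at least $n$ (you get this from the cycle decomposition of $1_{\mathbf{L}}$; the paper gets it by contradiction via Lemma \ref{prop_existence} and elements of type $(1^j)$), so it cannot coincide with the height-$m$ join $u_1[1]\vee\dots\vee u_m[1]$, forcing some atom to lie outside it. The only cosmetic difference is that you phrase the counting directly through $1_{\mathbf{L}}[1]$ rather than as a reductio, and, as you note yourself, the identification $w=u[1]$ is never actually needed since all you use is $w\leq 1_{\mathbf{L}}[1]$ and $h(w)=m$.
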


\begin{proof}
Let $a_i$ be the uniquely determined atom with $a_i\leq u_i$ for $i=1,...,m$. Let $\tilde{a}$ be an atom such that $a_1,...,a_m,\tilde{a}$
are not independent. By Lemma \ref{height_independent} it follows that
\begin{eqnarray*}
	m& = & h(a_1\vee...\vee a_m)\; \leq \; h(a_1\vee...\vee a_m\vee \tilde{a})\\            & < & h(a_1)+...+h(a_m)+h(\tilde{a})=m+1.
\end{eqnarray*}
So we have $h(a_1\vee...\vee a_m\vee \tilde{a})=m$
and finally $a_1\vee...\vee a_m=a_1\vee...\vee a_m\vee \tilde{a}$.
Let $A$ be the set of atoms in $L$. Assume that for every $\tilde{a}\in A\setminus \{a_1,...,a_m\}$ the elements $a_1,...,a_m,\tilde{a}$ are  not
independent. Then it follows that
\begin{displaymath}
	a_1\vee...\vee a_m=a_1\vee...\vee a_m\vee(\bigvee( A\setminus \{a_1,...,a_m\}))=\bigvee A,
\end{displaymath}
and so $\tp(\bigvee A)=(1^m)$.
It follows that there exists no element in $L$ with type $(1^j)$ and $m<j\leq n$. But that is a contradiction to Lemma
\ref{prop_existence}, because $(1^j)\leq \tp(\mathbf{L})$ holds for $j\leq n$. It follows, that there exists an atom $a$, such that
$a_1,...,a_m,a$ are independent. With Lemma \ref{inde_atoms}, it follows that $u_1,...,u_m,a$ are independent, because of $u_i[1]=a_1$ for
$i=1,...,m$ and $a[1]=a$.
\end{proof}

\begin{cor}\label{cor_atoms}
Let $\mathbf{L}=(L;\vee,\wedge)$ be a semi-primary lattice, $\lambda:=\tp(\mathbf{L})$, $n:=\lambda'_1$, $u\in L$ and
$u_1,...,u_m$ independent
cycles distinct from zero, such that $u$ is the join of $u_1,...,u_m$. If $m<n$, then there exist atoms $a_{m+1},...,a_{n}$, such that
$u_1,...,u_m,a_{m+1},...,a_{n}$ are independent.
\end{cor}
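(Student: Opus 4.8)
The plan is to produce the atoms $a_{m+1},\dots,a_n$ one at a time by repeatedly invoking Lemma \ref{lemma_atom}, so the proof is an induction on $n-m$. The key preliminary observation is that whenever an element $w\in L$ is written as $w=w_1\veedot\dots\veedot w_r$ with cycles $w_1,\dots,w_r$ distinct from $0_\mathbf{L}$, the definition of type gives $\tp(w)=(h(w_{\pi(1)}),\dots,h(w_{\pi(r)}))$ for a suitable permutation $\pi$, so $\tp(w)$ has exactly $r$ nonzero entries, i.e. $(\tp(w))'_1=r$. Applied to $u=u_1\veedot\dots\veedot u_m$, this shows $(\tp(u))'_1=m$, so when $m<n$ the hypotheses of Lemma \ref{lemma_atom} are satisfied and that lemma yields an atom $a_{m+1}$ with $u_1,\dots,u_m,a_{m+1}$ independent.

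For the base case $n-m=1$ this single atom $a_{m+1}=a_n$ is all that is required. For the inductive step, assume $n-m\geq 2$ and that the statement holds whenever the gap is smaller. Apply Lemma \ref{lemma_atom} as above to get an atom $a_{m+1}$ with $u_1,\dots,u_m,a_{m+1}$ independent, and set $u':=u_1\vee\dots\vee u_m\vee a_{m+1}$. An atom has height $1$, so $[0_\mathbf{L},a_{m+1}]$ is a two-element chain and $a_{m+1}$ is a cycle distinct from $0_\mathbf{L}$; hence $u'=u_1\veedot\dots\veedot u_m\veedot a_{m+1}$ is a join of $m+1$ independent cycles distinct from $0_\mathbf{L}$, and by the preliminary observation $(\tp(u'))'_1=m+1<n$. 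The induction hypothesis applied to $u'$ with this decomposition produces atoms $a_{m+2},\dots,a_n$ such that $u_1,\dots,u_m,a_{m+1},a_{m+2},\dots,a_n$ are independent, which finishes the induction.

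I do not expect a genuine obstacle here: the argument is a mechanical iteration of Lemma \ref{lemma_atom}. The only points needing (minor) care are the bookkeeping facts that each newly adjoined atom is again a cycle distinct from $0_\mathbf{L}$, so that at every stage we still have a list of independent cycles of the kind Lemma \ref{lemma_atom} requires, and that adjoining one such cycle raises the number of nonzero entries of the type by exactly one, keeping it strictly below $n$ until the count reaches $n$.
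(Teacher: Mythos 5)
Your proof is correct and matches the paper's intent: the paper states this corollary without proof, precisely because it is the straightforward iteration of Lemma \ref{lemma_atom} that you carry out, and your bookkeeping (each adjoined atom is a nonzero cycle, and the number of nonzero type entries grows by one at each step) is exactly the justification needed to keep reapplying that lemma.
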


Let $\mathbf{L}=(L;\vee,\wedge)$ be a semi-primary lattice and $u\in L$. With $\tp^D(u)$ we denote the type of $u$ in the dual lattice of
$\mathbf{L}$ and call it the \textit{dual type} of $u$.

\begin{lemma}\label{dual_type}
Let $\mathbf{L}=(L;\vee,\wedge)$ be a semi-primary lattice, $\lambda:=\tp(\mathbf{L})=(s^n)$ for some positive
integers $n,s$, $\mu\leq \lambda$ and
$u\in L_\mu$. Further assume, that for every cycle $z\in L$, there exists a cycle $z'\in L$ with $z\leq z'$ and $h(z')=s$. Then there holds
\begin{displaymath}
	\tp^D(u)=\lambda-\mu.
\end{displaymath}
\end{lemma}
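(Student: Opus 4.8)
The plan is to compute the dual type of $u$ by working inside the interval $[u, 1_\mathbf{L}]$, whose (ordinary) type is precisely $\tp^D(u)$ read in the dual lattice, and to relate this to a "complementary" decomposition of $1_\mathbf{L}$ relative to $u$. First I would fix independent cycles $u_1,\dots,u_m$ distinct from $0_\mathbf{L}$ with $u = u_1 \veedot \cdots \veedot u_m$ and $h(u_i) = \mu_i$, where $m = \mu'_1$. By the hypothesis, each $u_i$ extends to a cycle $z_i$ with $u_i \leq z_i$ and $h(z_i) = s$; and by Corollary \ref{cor_atoms} (applied after noting the $u_i$ are independent) one can also pick atoms $a_{m+1},\dots,a_n$ so that $u_1,\dots,u_m,a_{m+1},\dots,a_n$ are independent, and again extend each $a_j$ to a cycle $z_j$ with $h(z_j) = s$. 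The key claim is that after these extensions the family $z_1,\dots,z_n$ is independent and $z_1 \veedot \cdots \veedot z_n = 1_\mathbf{L}$: independence follows from Lemma \ref{inde_atoms}, since $z_i[1] = u_i[1] = a_i$ for $i \leq m$ and $z_j[1] = a_j$ for $j > m$, and these $n$ atoms are independent; and the join then has height $ns = |\lambda|= h(1_\mathbf{L})$ by Lemma \ref{height_independent}, forcing it to equal $1_\mathbf{L}$.

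Next I would produce, for each $i$, a cycle $\bar z_i$ with $\bar z_i \leq z_i$, $z_i = u_i \vee \bar z_i$ (with $u_i$ read as $0$ when $i>m$) and $h(\bar z_i) = s - \mu_i$: inside the chain $[0_\mathbf{L}, z_i]$ this is immediate since $u_i$ is the unique element of that chain of height $\mu_i$, and $\bar z_i$ is the unique element of height $s-\mu_i$. Then I would show $u, \bar z_1, \dots, \bar z_n$ are independent and $u \vee \bar z_1 \vee \cdots \vee \bar z_n = 1_\mathbf{L}$. For the join: $u \vee \bigvee_i \bar z_i \geq \bigvee_i (u_i \vee \bar z_i) = \bigvee_i z_i = 1_\mathbf{L}$ (using $u \geq u_i$). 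For independence I would use Lemma \ref{height_independent} in the form of the height equality: $h(u) + \sum_i h(\bar z_i) = h(u) + \sum_i (s - \mu_i) = |\mu| + (ns - |\mu|) = ns = h(1_\mathbf{L}) = h(u \vee \bigvee_i \bar z_i)$, which by the equivalence in Lemma \ref{height_independent} gives independence. (I would discard those $\bar z_i$ that are $0_\mathbf{L}$, which happens exactly when $\mu_i = s$; this is harmless for independence and for the join.)

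Now pass to the dual lattice $\mathbf{L}^D$. In $\mathbf{L}^D$ the element $u$ has $\tp^D(u) = \tp_{\mathbf{L}^D}(u)$, and I would argue that the images $\bar z_1, \dots, \bar z_n$ (those that are nonzero) become, in $\mathbf{L}^D$, independent \emph{dual} cycles of $\mathbf{L}$ joining up — equivalently, I would instead directly exhibit a decomposition of $u$ inside $[u,1_\mathbf{L}]$ via the interval isomorphisms $[u \vee \bar z_i', u \vee \bar z_i] \cong [\bar z_i', \bar z_i]$ coming from modularity, obtained from the independent decomposition above; since each $[\bar z_i', \bar z_i]\subseteq[0_\mathbf{L},\bar z_i]$ is a chain of the appropriate length, transporting the full chain $0_\mathbf{L} \leq \cdots \leq \bar z_i$ shows $u$ is, in the dual lattice, a join of independent cycles of heights $s - \mu_i$. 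Reordering these heights into decreasing order gives exactly $\lambda - \mu = (s - \mu_m, \dots, s - \mu_1)$ by the definition of $\lambda - \mu$ in the partitions subsection, so $\tp^D(u) = \lambda - \mu$.

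The main obstacle I anticipate is the bookkeeping in passing from the decomposition of $1_\mathbf{L}$ "around $u$" to a genuine decomposition of $u$ \emph{as an element of the dual lattice} into independent cycles of the dual — i.e., making precise that the $\bar z_i$ play in $\mathbf{L}^D$ the role that independent cycles play in $\mathbf{L}$, and that their dual-heights are $s-\mu_i$. This requires carefully invoking the perspectivity/interval-isomorphism statement (\cite{birkhoff}, ch.~I §7, cor.~2) together with Lemma \ref{inde_atoms} and Jónsson's structure theorem so that "independent cycles in $\mathbf{L}^D$" is correctly identified; everything else is height arithmetic via Lemma \ref{height_independent} and the uniqueness of elements of a given height in a chain.
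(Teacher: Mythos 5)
Your set-up --- the independent cycles $u_1,\dots,u_m$, the atoms supplied by Corollary \ref{cor_atoms}, the extensions to cycles of height $s$, their independence via Lemma \ref{inde_atoms}, and the conclusion that their join is $1_\mathbf{L}$ --- matches the paper's proof exactly. The fatal step is the next one: you claim that inside the chain $[0_\mathbf{L},z_i]$ there is a cycle $\bar z_i$ with $z_i=u_i\vee\bar z_i$ and $h(\bar z_i)=s-\mu_i$. In a chain any two elements are comparable, so $u_i\vee\bar z_i=\max(u_i,\bar z_i)$, whose height is $\max(\mu_i,\,s-\mu_i)<s$ whenever $0<\mu_i<s$. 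Hence no such $\bar z_i$ exists outside the degenerate cases $\mu_i\in\{0,s\}$; a cycle admits no nontrivial join decomposition. Consequently $u\vee\bar z_1\vee\cdots\vee\bar z_n=1_\mathbf{L}$ fails, the height count $h(u)+\sum_i(s-\mu_i)=ns$ does not certify independence because the actual join is too small, and everything downstream (the independence of $u,\bar z_1,\dots,\bar z_n$ and the transport of the chains $[0_\mathbf{L},\bar z_i]$ into the dual) collapses with it.

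The paper avoids this by working on the other side of $u$: with $x_1,\dots,x_n$ denoting your height-$s$ cycles, it sets $z_i:=x_i\vee u$ and passes to the interval $\mathbf{L}':=[u,1_\mathbf{L}]$. By modularity $[u,\,x_i\vee u]\cong[x_i\wedge u,\,x_i]$, a subinterval of the chain $[0_\mathbf{L},x_i]$, so each $z_i$ is a cycle of $\mathbf{L}'$ of relative height $s-\mu_i$; a computation with equation (\ref{mod_function}) and Lemma \ref{height_independent} shows the $z_i$ are independent in $\mathbf{L}'$ with join $1_\mathbf{L}$, whence $\tp(\mathbf{L}')=\lambda-\mu$. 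Lemma \ref{lemma_tp_dual_lattice} transfers this type to the dual of $\mathbf{L}'$, whose greatest element is $u$, giving $\tp^D(u)=\lambda-\mu$. To salvage your plan, replace the nonexistent complements $\bar z_i$ by the joins $x_i\vee u$ and carry out the height bookkeeping in $[u,1_\mathbf{L}]$ rather than below the $x_i$.
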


\begin{proof}
Let $u_1,...,u_m$ be independent cycles distinct from zero, such that $u$ is the join of $u_1,...,u_m$. If $m<n$ then there exist by
Corollary \ref{cor_atoms} atoms $a_{m+1},...,a_n$, such that $u_1,...,u_m,a_{m+1},...,a_n$ are independent. By our premise, there exist
cycles $x_1,...,x_n$ with $h(x_i)=s$ for $i=1,...,n$, $u_i\leq x_i$ for $i=1,...,m$ and $a_i\leq x_i$ for $i=m+1,...,n$. If $a_i$ is the
uniquely determined atom with $a_i\leq u_i$ for $i=1,...,m$, then $a_1,...,a_n$ are independent. By Lemma \ref{inde_atoms}, it follows that
$x_1,...,x_n$ are independent, because of $x_i[1]=a_i$. It follows $x_1\vee...\vee x_n=1_\mathbf{L}$,
because $x_1\vee...\vee x_n$ has type $(s^n)$ and $1_\mathbf{L}$ is the only element in $L$ with type $(s^n)$. We define $z_i:=x_i\vee u$
for
$i=1,...,n$  (it holds $z_1\vee...\vee z_n=1_\mathbf{L}$) and $L':=[u,1_\mathbf{L}]$. So $z_i$ is a cycle in $\mathbf{L}'=(L';\vee,\wedge)$.
We will show, that
$z_1,...,z_n$ are independent in $\mathbf{L}'$. That means, that $(z_1\vee...\vee z_{i-1}\vee z_{i+1} \vee...\vee z_n)\wedge z_i=u$
holds for every $i=1,...,n$. We define $u_i:=0_\mathbf{L}$ for $i=m+1,...,n$ and so $u_1,...,u_n$ are independent and $u$ is the join of
$u_1,...,u_n$. Let $i$ be fixed. Then we have
\begin{align*}
	z_1\vee...\vee z_{i-1}\vee z_{i+1} \vee...\vee z_n=& (x_1\vee u)\vee...\vee (x_{i-1}\vee u)\vee (x_{i+1}\vee u) \vee...\\
	&...\vee (x_n\vee u)\\
	=&u\vee x_1\vee...\vee x_{i-1}\vee x_{i+1} \vee...\vee x_n\\
	=&u_1\vee...\vee u_n \vee x_1\vee...\vee x_{i-1}\vee x_{i+1} \vee...\vee x_n\\
	=&x_1\vee...\vee x_{i-1}\vee u_i\vee x_{i+1} \vee...\vee x_n.
\end{align*}
The last equality holds because of $u_j\leq x_j$ for $j=1,...,i-1,i+1,...n$. $x_1,...,x_{i-1},u_i,x_{i+1},...,x_n$ are independent. It
follows
\begin{align*}
 h(z_1\vee...\vee z_{i-1}\vee z_{i+1} \vee...\vee z_n)&= h(x_1\vee...\vee x_{i-1}\vee u_i\vee x_{i+1} \vee...\vee x_n)\\
 &=h(x_1\vee...\vee x_n)-h(x_i)+h(u_i)=s^n-s+\mu_i.
\end{align*}
For the second  equality we used Lemma \ref{height_independent}.
Because of $z_i=u\vee x_i=u_1\vee...\vee u_{i-1}\vee x_i \vee u_{i+1}\vee...\vee u_n$, 
we have
\begin{align*}
	h(z_i)&=h(u_1\vee...\vee u_{i-1}\vee x_i \vee u_{i+1}\vee...\vee u_n)\\
	&=h(u_1\vee...\vee u_n)-h(u_i)+h(x_i)=|\mu|-\mu_i+s.
\end{align*}
We used again Lemma \ref{height_independent} for the second equality. By equation (\ref{mod_function}), there follows
\begin{align*}
	h((z_1\vee...\vee z_{i-1}\vee z_{i+1} \vee...\vee z_n)\wedge z_i)=& h(z_1\vee...\vee z_{i-1}\vee z_{i+1} \vee...\vee z_n)\\
	&+h(z_i)-h(z_1\vee...\vee z_n)\\
	=&(s^n-s+\mu_i)+(|\mu|-\mu_i+s)-s^n=|\mu|.
\end{align*}
From this, we obtain $(z_1\vee...\vee z_{i-1}\vee z_{i+1} \vee...\vee z_n)\wedge z_i=u$,
because of $u\leq(z_1\vee...\vee z_{i-1}\vee z_{i+1} \vee...\vee z_n)\wedge z_i$ and $h(u)=|\mu|$. So, $z_1,...,z_n$ are independent cycles
in $\mathbf{L}'$ and there holds $z_1\vee...\vee z_n=1_\mathbf{L}$. We denote by $h'(z_i)$ the
height of $z_i$ in $\mathbf{L}'$. There holds $h'(z_i)=h(z_i)-h(u)=(|\mu|-\mu_i+s)-|\mu|=s-\mu_i$
for $i=1,...,n$. 	It follows that $1_\mathbf{L}$ has in  $\mathbf{L}'$  the type $(s-\mu_n,...,s-\mu_1)=(s^n)-\mu=\lambda-\mu$
and since $1_\mathbf{L}=1_{\mathbf{L}'}$ we have $\tp(\mathbf{L}')=\lambda-\mu$.
By Lemma \ref{lemma_tp_dual_lattice}, it follows that also the dual lattice of $\mathbf{L}'$ has type $\lambda-\mu$. The one-element of this dual
lattice
is exactly $u$, and it follows, that $u$ has dual type $\lambda-\mu$ in $\mathbf{L}'$ and so in $\mathbf{L}$.
\end{proof} 

Now we can state the proof of Theorem \ref{theo_enum}.

\begin{proof}[Proof of Theorem \ref{theo_enum}]
Let $\mu$ and $\varphi$ be fixed and let by $\vartheta:=\lambda-\mu$ and $\omega:=\lambda-\varphi$. 
Further let $u,u'\in L$ with
$\tp(u)=\vartheta$ and $\tp^D(u')=\vartheta$. Because of the self-duality
we have
\begin{displaymath}
	\alpha(\vartheta,\omega)= |\{v\in L\mid v\leq u,\tp(v)=\omega\}|=|\{v\in L\mid u'\leq v, \tp^D(v)=\omega\}|.
\end{displaymath}
By Lemma \ref{dual_type}, it follows that an element in $L$ with dual type
$\omega$ has the type $\lambda-\omega=\varphi$. It then follows
\begin{align*}
	\alpha(\lambda-\mu,\lambda-\varphi)= \alpha(\vartheta,\omega)=|\{v\in L\mid u'\leq v, \tp(v)=\varphi\}|.
\end{align*}
Since $u'$ was chosen arbitrarily as an element of dual type $\vartheta$, and so as an element of type $\lambda-\vartheta=\mu$, it follows
that $\mathbf{L}$ is up-enumerable and there holds $\beta(\mu,\varphi)=\alpha(\lambda-\mu,\lambda-\varphi)$.
\end{proof}

\begin{example}
The subspace lattice $(L(\mathbb{F}_q^N);+,\cap)$ of the $\mathbb{F}_q$-vector space $\mathbb{F}_q^N$ is down-enumerable. Recall that 
$U\in L(\mathbb{F}_q^N)$ has type $(1^l)$ if it has dimension $l$. For two partitions $\mu=(1^l)$ and $\varphi=(1^k)$ holds
\begin{align*}
	\alpha(\mu,\varphi)=\alpha((1^l),(1^k))&=\begin{cases}
																					\left[
																								\begin{array}{c}
																									l\\
																									k
																								\end{array}\right]_q \quad &l\geq k\\
																								\quad 0 \quad &\text{else}
																				\end{cases},											 
\end{align*}
where $\left[\begin{array}{c}l\\k\end{array}\right]_q$ is the Gaussian coefficient form $l$ over $k$ in respect of $q$ (see \cite{stanley}).
 Moreover the lattice satisfies all conditions of Theorem \ref{theo_enum}. So the lattice is enumerable and there holds 
\begin{align*}
		\beta(\mu,\varphi)&=\beta((1^l),((1^k))=\alpha((1^N)-(1^l),(1^N)-(1^k))\\
		&=\alpha((1^{N-l}),(1^{N-k}))
		=\begin{cases}
																					\left[
																								\begin{array}{c}
																									N-l\\
																									N-k
																								\end{array}\right]_q \quad &l\leq k\\
																								\quad\quad 0 \quad &\text{else}.
																				\end{cases}\\
\end{align*}
\end{example}

\begin{example}\label{exmpl_subgroup_lattice}
The submodule lattice $(L(\mathbb{Z}_{p^s}^N);+,\cap)$ of the $\mathbb{Z}_{p^s}$-module $\mathbb{Z}_{p^s}^N$ is down-enumerable and  there holds
\begin{align*}
	\alpha(\mu,\varphi)=\begin{cases}
											\prod\limits_{j= 1}^{\mu_1}p^{\varphi'_{j+1}(\mu'_j-\varphi'_j)}
																\left[
																			\begin{array}{c}
																			\mu'_j-\varphi'_{j+1}\\
																			\varphi'_j-\varphi'_{j+1}
																			\end{array}
																\right]_p\quad &\mu\geq \varphi\\
											\quad 0\quad &\text{else,}
											\end{cases}
\end{align*}
for two partitions $\mu,\varphi$ (see \cite{butler}). Moreover the lattice fulfills all conditions of Theorem \ref{theo_enum}. Thus, the
lattice is enumerable and one can apply equation (\ref{beta}) to compute $\beta(\mu,\varphi)$.
\end{example}

\subsection{Sphere size computation}\label{chapter_sphere_size_computation}
In this section, we will present a method for the computation of cardinalities of spheres in enumerable lattices. For this we compute the 
sizes of certain subsets of spheres. It is more important that we can compute the sizes of these subsets, than the sizes of the spheres,
because in chapter \ref{bounds}, we will construct bounds with these subsets instead of the whole spheres. Compute sphere sizes is then only
a byproduct. We will express the cardinalities of the mentioned  sets by $\alpha$ and $\beta$. So for the computation it is necessary to
know
$\alpha(\lambda,\vartheta)$ and $\beta(\lambda,\vartheta)$ for each partitions $\lambda$ and $\vartheta$.

First of all we will extend the definitions of $\alpha$ and $\beta$. Let $\mathbf{L}=(L;\vee,\wedge)$ be a down-enumerable lattice, $\mu\leq
\tp(\mathbf{L})$, 
$u\in L_\mu$ and $r_1,...,r_n\in\mathbb{N}$. Then let
\begin{align*}
  \alpha(\mu,r_1,...,r_n):=|\{(x_1,...,x_n)\in L^n\mid h(x_i)=r_i\text{ for }i_1,...,n,&\\
	x_1\leq...\leq x_n\leq u&\}|.
\end{align*}
It is obvious, that $r_1\leq...\leq r_n\leq |\mu|$ holds if $\alpha(\mu,r_1,...,r_n)>0$. We declare that $\alpha(\mu,r_1,...,r_n)$ equals 
one if we mention $\alpha(\mu,r_1,...,r_n)$ and $n$ equals zero. One obtains the recursive formula
\begin{displaymath}
	\alpha(\mu,r_1,...,r_n)=\sum_{\vartheta\in \parti(r_n),\vartheta\leq \mu, } \alpha(\mu,\vartheta)\cdot
\alpha(\vartheta,r_1,...,r_{n-1}). 
\end{displaymath}
Let $\mathbf{L}$ be now up-enumerable. Then let
\begin{align*}
	\beta(\mu,r_1,...,r_n):=|\{(x_1,...,x_n)\in L^n\mid h(x_i)=r_i\text{ for }i=1,...,n,&\\
	      \ x_1\geq...\geq x_n\geq u&\}|.
\end{align*}
Here it is obvious, that $r_1\geq...\geq r_n\geq |\mu|$ holds if $\beta(\mu,r_1,...,r_n)>0$. We declare here as well that 
$\beta(\mu,r_1,...,r_n)$ equals one if we mention $\beta(\mu,r_1,...,r_n)$ and $n$ equals zero. One obtains the recursive formula
\begin{displaymath}
	\beta(\mu,r_1,...,r_n)=\sum_{\vartheta\in \parti(r_n),\vartheta\geq \mu} \beta(\mu,\vartheta)\cdot
\beta(\vartheta,r_1,...,r_{n-1}). 
\end{displaymath}

Now let  $\mathbf{L}$ be in the following an enumerable lattice, $u\in L$ and $r\in \mathbb{N}$. Then 
\begin{displaymath}
	S(u,r):=\{v\in L\mid d(u,v)\leq r\}
\end{displaymath}
is the \textit{sphere} with radius $r$ centered at $u$. We will decompose this sphere. For $l\in \mathbb{N}$
\begin{displaymath}
	S(u,r,l):=\{v\in S(u,r)\mid h(v)=l\}
\end{displaymath}
is the $l$-th \textit{layer} of $S(u,r)$. For a partition $\mu$ let
\begin{displaymath}
	S(u,r,\mu):=\{v\in S(u,r)\mid \tp(v)=\mu\}.
\end{displaymath}
We have the decomposition
\begin{displaymath}
	S(u,r)=\bigcupdot_{l=h(u)-r}^{h(u)+r}S(u,r,l)=\bigcupdot_{l=h(u)-r}^{h(u)+r}\ \bigcupdot_{\mu\in \parti(l)}S(u,r,\mu).
\end{displaymath}
We want to compute the cardinality of $S(u,r,\mu)$. Let in the following $\varphi$ be the type of $u$.
We distinguish between the cases  $|\mu|\leq |\varphi|$ and $|\mu|>|\varphi|$. Both cases can be treated similarly, and we will only describe the first
one in detail.

\textbf{Case 1:}  $|\mu|\leq |\varphi|$. We can decompose $S(u,r,\mu)$ into sets of the form $\{v\in L_\mu\mid h(u\wedge v)=r_0\}$ for
$r_0\in \mathbb{N}$. Clearly, $r_0$ must be less or equal than $|\mu|$. Also $d(u,v)\leq r$ must hold for every $v\in\{v\in L_\mu\mid
h(u\wedge v)=r_0\}$. It follows that $r\geq d(u,v)=h(u)+h(v)-2h(u\wedge v)=|\varphi|+|\mu|-2r_0$
and so $r_0\geq \left\lceil \frac{|\varphi|+|\mu|-r}{2}\right\rceil$.
Thus $S(u,r,\mu)$ has the decomposition
\begin{equation}
	S(u,r,\mu)=\bigcupdot_{r_0=\left\lceil \frac{|\varphi|+|\mu|-r}{2}\right\rceil}^{|\mu|}\{v\in L_\mu\mid h(u\wedge v)=r_0\}.
	\label{decompo_1}
\end{equation}
We want to express $|S(u,r,\mu)|$ with $\alpha$ and $\beta$, by expressing $|\{v\in L_\mu\mid h(u \wedge v)=r_0\}|$ with $\alpha$ and
$\beta$. For this we make the following definitions:
\begin{align*}
	\gamma(u,\mu,r_1,...,r_k):=|\{(x_1,...,x_k,v)\in L^{k+1}\mid h(x_i)=r_i\text{ for }i=1,...,k,&\\
			    \tp(v)=\mu,\ x_1\leq...\leq x_k=u\wedge v&\}|,\\
	\delta(u,\mu,r_1,...,r_k):=|\{(x_1,...,x_k,v)\in L^{k+1}\mid h(x_i)=r_i\text{ for }i=1,...,k,&\\
			    \tp(v)=\mu,\ x_1\leq...\leq x_k\leq u\wedge v&\}|,\\
	\varepsilon(u,\mu,r_1,...,r_k,l):=|\{(x_1,...,x_k,v)\in L^{k+1}\mid h(x_i)=r_i\text{ for }i=1,...,k,&\\
			    \tp(v)=\mu,\ x_1\leq...\leq x_k< u\wedge v,\ h(u\wedge v)=l&\}|
\end{align*}
for nonnegative integers $r_1,...,r_k$ and $l$.
It is obvious that
\begin{equation}
	\gamma(u,\mu,r_1,...,r_k)=\delta(u,\mu,r_1,...,r_k)-\sum_{l=r_k+1}^{|\mu|} \varepsilon(u,\mu,r_1,...,r_k,l)
	\label{recursive_1}
\end{equation}
holds. In the following we will express $\gamma$ with $\alpha$ and $\beta$. Later, we will express $|\{v\in L_\mu\mid h(u\wedge v)=r_0\}|$
in terms of $\gamma$.

\begin{lemma}\label{lemma_delta}
There holds
\begin{equation}
	\delta(u,\mu,r_1,...,r_k)=\sum_{\vartheta\in \parti(r_k),\vartheta\leq \varphi}\alpha(\varphi,\vartheta)\cdot\beta(\vartheta,\mu)
\cdot\alpha(\vartheta,r_1,...,r_{k-1}).
	\label{recursiveformular}
\end{equation}
\end{lemma}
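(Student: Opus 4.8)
The plan is to sort the tuples counted by $\delta(u,\mu,r_1,\ldots,r_k)$ by the type $\vartheta:=\tp(x_k)$ of their largest coordinate $x_k$, and to show that once $\vartheta$ is fixed the remaining data split into three independent choices whose numbers are exactly $\alpha(\varphi,\vartheta)$, $\beta(\vartheta,\mu)$ and $\alpha(\vartheta,r_1,\ldots,r_{k-1})$.

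First I would record two elementary reductions. Since $x_k\leq u\wedge v$, in particular $x_k\leq u$, so $x_k$ has a type $\vartheta\leq\tp(u)=\varphi$ with $|\vartheta|=h(x_k)=r_k$; hence only $\vartheta\in\parti(r_k)$ with $\vartheta\leq\varphi$ occur, and for each such $\vartheta$ an element of type $\vartheta$ below $u$ exists by Lemma~\ref{prop_existence}, so all the quantities appearing below are meaningful. Moreover, by the definition of the meet, the single condition $x_k\leq u\wedge v$ is equivalent to the conjunction ``$x_k\leq u$ and $x_k\leq v$''. Thus
\begin{displaymath}
	\delta(u,\mu,r_1,\ldots,r_k)=\sum_{\vartheta\in\parti(r_k),\,\vartheta\leq\varphi} N(\vartheta),
\end{displaymath}
where $N(\vartheta)$ denotes the number of tuples $(x_1,\ldots,x_k,v)$ with $h(x_i)=r_i$ for all $i$, $\tp(x_k)=\vartheta$, $x_1\leq\cdots\leq x_{k-1}\leq x_k$, $x_k\leq u$, $x_k\leq v$ and $\tp(v)=\mu$.

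Then I would compute $N(\vartheta)$ by choosing $x_k$ first. The number of elements $x_k\in L$ with $x_k\leq u$ and $\tp(x_k)=\vartheta$ is $\alpha(\varphi,\vartheta)$, since $\tp(u)=\varphi$ and $\mathbf{L}$ is down-enumerable. For each such fixed $x_k$, the element $v$ ranges over $\{v\in L_\mu\mid v\geq x_k\}$, a set of size $\beta(\vartheta,\mu)$ by up-enumerability, while the chain $(x_1,\ldots,x_{k-1})$ with $h(x_i)=r_i$ and $x_1\leq\cdots\leq x_{k-1}\leq x_k$ ranges over a set of size $\alpha(\vartheta,r_1,\ldots,r_{k-1})$ by the extended definition of $\alpha$; both of these counts depend on $x_k$ only through its type $\vartheta$, again by enumerability. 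The choice of $v$ and the choice of $(x_1,\ldots,x_{k-1})$ involve disjoint coordinates and are made independently above and below $x_k$, so $N(\vartheta)=\alpha(\varphi,\vartheta)\cdot\beta(\vartheta,\mu)\cdot\alpha(\vartheta,r_1,\ldots,r_{k-1})$. Substituting into the displayed sum yields (\ref{recursiveformular}); terms with $\vartheta\not\leq\mu$, or with the $r_i$ not non-decreasing and bounded by $r_k=|\vartheta|$, vanish of their own accord, so no extra restriction on the summation index is needed.

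The only point requiring any care is this last independence/factorisation step: one must check that fixing $x_k$ genuinely decouples the ``upward'' datum $v$ from the ``downward'' chain $x_1\leq\cdots\leq x_{k-1}$, and that the enumerability hypotheses are exactly what make the three factors functions of the types $\varphi$, $\vartheta$, $\mu$ (and the $r_i$) alone, rather than of the particular elements $u$ and $x_k$. The rest is bookkeeping.
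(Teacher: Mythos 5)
Your proposal is correct and follows essentially the same route as the paper's proof: partition the tuples by the type $\vartheta$ of $x_k$, use $x_k\leq u\wedge v\Leftrightarrow(x_k\leq u\text{ and }x_k\leq v)$ to decouple the choice of $v$ above $x_k$ from the chain below $x_k$, and invoke down- and up-enumerability to get the three factors. Your version merely makes the decoupling and the vanishing of degenerate terms more explicit than the paper does.
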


\begin{proof}
Let $\vartheta\in \parti(r_k)$ with $\vartheta\leq \varphi$ be fixed. $\alpha(\varphi,\vartheta)$ counts all elements $x_k$ with $x_k\leq u$
 and $\tp(x_k)=\vartheta$. We will now fix such an $x_k$. The number $\beta(\vartheta,\mu)$ counts  all elements $v$  with $x_k\leq v$ and $\tp(v)=\mu$. If
$v$ is such an element, then $x_k\leq v\wedge u$ holds. $\alpha(\vartheta,r_1,...,r_{k-1})$ counts  the sequences $(x_1,...,x_{k-1})$ with
$x_1\leq...\leq x_{k-1}\leq x_k$ and $h(x_i)=r_i$ for $i=1,...,k-1$. So with $
\alpha(\varphi,\vartheta)\cdot\beta(\vartheta,\mu)\cdot\alpha(\vartheta,r_1,...,r_{k-1})$
we count the sequences of the form $(x_1,...,x_k,v)\in L^{k+1}$ with $h(x_i)=r_i$, $x_1\leq...\leq x_k\leq u\wedge v$, $\tp(v)=\mu$ and 
$\tp(x_k)=\vartheta$. If we sum over all partitions $\vartheta\in \parti(r_k)$ with
$\vartheta\leq\varphi$, then we count all sequences, which are counted in $\delta(u,\mu,r_1,...,r_k)$.
\end{proof}

\begin{lemma}
For $l\geq r_{k}+1$  there holds
\begin{equation}
	\varepsilon(u,\mu,r_1,...,r_k,l)=\gamma(u,\mu,r_1,...,r_k,l).\label{substitution}
\end{equation}

\end{lemma}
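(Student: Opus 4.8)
The plan is to prove the identity by exhibiting an explicit bijection between the two sets being counted. Recall that the left-hand side $\varepsilon(u,\mu,r_1,\dots,r_k,l)$ counts tuples $(x_1,\dots,x_k,v)\in L^{k+1}$ with $h(x_i)=r_i$ for $i=1,\dots,k$, $\tp(v)=\mu$, $x_1\leq\dots\leq x_k<u\wedge v$ and $h(u\wedge v)=l$, whereas the right-hand side $\gamma(u,\mu,r_1,\dots,r_k,l)$ is $\gamma$ with the $k+1$ radii $r_1,\dots,r_k,l$, i.e.\ it counts tuples $(x_1,\dots,x_k,x_{k+1},v)\in L^{k+2}$ with $h(x_i)=r_i$ for $i=1,\dots,k$, $h(x_{k+1})=l$, $\tp(v)=\mu$ and $x_1\leq\dots\leq x_k\leq x_{k+1}=u\wedge v$.

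First I would define a map $\Phi$ from the set underlying $\varepsilon(u,\mu,r_1,\dots,r_k,l)$ to the set underlying $\gamma(u,\mu,r_1,\dots,r_k,l)$ by $\Phi(x_1,\dots,x_k,v):=(x_1,\dots,x_k,\,u\wedge v,\,v)$. This is well defined: by hypothesis $h(u\wedge v)=l$, and $x_1\leq\dots\leq x_k<u\wedge v$ in particular gives $x_1\leq\dots\leq x_k\leq u\wedge v$, so the image satisfies every defining condition of the $\gamma$-set. Conversely I would define $\Psi(x_1,\dots,x_k,x_{k+1},v):=(x_1,\dots,x_k,v)$, simply forgetting the coordinate $x_{k+1}$. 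To see that $\Psi$ lands in the $\varepsilon$-set one needs the strict inequality $x_k<u\wedge v$: since $x_{k+1}=u\wedge v$ and $h(x_k)=r_k$ while $h(x_{k+1})=l$, the assumption $l\geq r_k+1$ forces $h(x_k)<h(x_{k+1})$, hence $x_k\neq x_{k+1}$, and combined with $x_k\leq x_{k+1}=u\wedge v$ this yields $x_k<u\wedge v$; together with $h(u\wedge v)=h(x_{k+1})=l$ this places $(x_1,\dots,x_k,v)$ in the $\varepsilon$-set. Because $x_{k+1}=u\wedge v$ is uniquely determined by $v$ (the element $u$ being fixed), $\Phi$ and $\Psi$ are mutually inverse, so the two sets are in bijection and the stated equality of cardinalities follows.

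The only real point of the argument is the use of the hypothesis $l\geq r_k+1$ to upgrade the inequality $x_k\leq u\wedge v$ appearing in $\gamma$ to the strict inequality $x_k<u\wedge v$ required by $\varepsilon$; without this hypothesis the forgetful map $\Psi$ could produce tuples with $x_k=u\wedge v$, which are not counted by $\varepsilon$. Everything else is a direct unwinding of the definitions of $\gamma$, $\delta$ and $\varepsilon$, so I do not anticipate any further obstacle.
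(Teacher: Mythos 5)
Your proposal is correct and is essentially the same argument as the paper's: both exhibit the bijection $(x_1,\dots,x_k,v)\mapsto(x_1,\dots,x_k,u\wedge v,v)$ and use the hypothesis $l\geq r_k+1$ to upgrade $x_k\leq u\wedge v$ to the strict inequality $x_k<u\wedge v$ when passing back. The only cosmetic difference is that you write down the inverse map explicitly, whereas the paper argues injectivity and surjectivity of the forward map.
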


\begin{proof}
Let 
\begin{align*}
  A:=\{(x_1,...,x_k,v)\in L^{k+1}\mid h(x_i)=r_i\text{ for } i=1,...,k,\tp(v)=\mu,&\\
      x_1\leq...\leq x_k< u\wedge v,h(u\wedge v)=l&\},\\
  B:=\{(x_1,...,x_{k+1},v)\in L^{k+2}\mid h(x_i)=r_i \text{ for } i=1,...,k, h(x_{k+1})=l,&\\
      \tp(v)=\mu,x_1\leq...\leq x_k=u\wedge v&\}.
\end{align*}
We will show that the map $f:A\rightarrow B,\ (x_1,...,x_k,v)\mapsto (x_1,...,x_k,u\wedge v,v)$
is bijective. The injectivity of $f$ is clear, so we only have to show its surjectivity. Let $(x_1,...,x_{k+1},v)\in B$. We have $h(x_{k+1})=l$
and $x_{k+1}=u\wedge v$, and so $h(u\wedge v)=l$.  Because of $r_k<r_{k}+1\leq l$ it holds $x_k<u\wedge v$. It follows that  $(x_1,...,x_k,v)\in
A$ and $f((x_1,...,x_k,v))=(x_1,...,x_k,u\wedge v,v)=(x_1,...,x_k,x_{k+1},v)$.
So $f$ is bijective and with $|A|=\varepsilon(u,\mu,r_1,...,r_k,l)$ and $|B|=\gamma(u,\mu,r_1,...,r_k,l)$ we obtain the statement.
\end{proof}

\begin{lemma}
There holds
\begin{equation}
	\gamma(u,\mu,r_1,...,r_{k-1},|\mu|)=\alpha(\varphi,\mu)\cdot \alpha(\mu,r_1,...,r_{k-1}).
	\label{recursivestop}
\end{equation}
\end{lemma}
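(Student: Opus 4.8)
The plan is to show that the defining conditions of $\gamma(u,\mu,r_1,\dots,r_{k-1},|\mu|)$ collapse, so that the last member $x_k$ of a counted chain is forced to equal $v$ and $v$ is forced to lie below $u$. Concretely, a tuple $(x_1,\dots,x_{k-1},x_k,v)$ contributing to $\gamma(u,\mu,r_1,\dots,r_{k-1},|\mu|)$ satisfies $\tp(v)=\mu$, hence $h(v)=|\mu|$, together with $x_k=u\wedge v$ and $h(x_k)=|\mu|$. Since $u\wedge v\leq v$ and, in a lattice of finite length, $a<b$ implies $h(a)<h(b)$ (extend a maximal chain from $0_\mathbf{L}$ to $a$ by $b$), the equality $h(u\wedge v)=h(v)$ forces $u\wedge v=v$; that is, $v\leq u$ and $x_k=v$.

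Conversely, I would observe that every $v\in L$ with $v\leq u$ and $\tp(v)=\mu$ arises in this way: then $u\wedge v=v$ has height $|\mu|$, so putting $x_k:=v$ and adjoining any chain $x_1\leq\dots\leq x_{k-1}\leq v$ with $h(x_i)=r_i$ yields a valid tuple. Hence the set counted by $\gamma(u,\mu,r_1,\dots,r_{k-1},|\mu|)$ is the disjoint union, over all $v\leq u$ with $\tp(v)=\mu$, of the sets of chains $x_1\leq\dots\leq x_{k-1}\leq v$ with $h(x_i)=r_i$ for $i=1,\dots,k-1$.

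To finish, I would count the two factors separately. Since $\tp(u)=\varphi$, the number of $v\leq u$ with $\tp(v)=\mu$ equals $\alpha(\varphi,\mu)$ by definition (well defined by down-enumerability). For each such $v$, the number of chains $x_1\leq\dots\leq x_{k-1}\leq v$ with prescribed heights $r_1,\dots,r_{k-1}$ equals $\alpha(\mu,r_1,\dots,r_{k-1})$ by the definition of the extended $\alpha$ and again by down-enumerability, independently of the choice of $v$; this also covers the degenerate case $k=1$, where the empty-tuple convention makes this factor $1$. Multiplying the two factors gives the claimed identity $\gamma(u,\mu,r_1,\dots,r_{k-1},|\mu|)=\alpha(\varphi,\mu)\cdot\alpha(\mu,r_1,\dots,r_{k-1})$.

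The only step requiring genuine care — and the main, though mild, obstacle — is the height argument showing that $h(u\wedge v)=|\mu|=h(v)$ forces $v\leq u$; once this is established, everything else is a direct repackaging of the definitions of $\alpha$, of $\gamma$, and of down-enumerability.
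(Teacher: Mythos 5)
Your proposal is correct and follows essentially the same route as the paper's own proof: the conditions $h(x_k)=|\mu|$, $x_k=u\wedge v$, $\tp(v)=\mu$ force $x_k=u\wedge v=v\leq u$, after which the count factors as $\alpha(\varphi,\mu)\cdot\alpha(\mu,r_1,\dots,r_{k-1})$. You merely spell out in more detail the height argument that the paper leaves implicit, which is a fine addition but not a different method.
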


\begin{proof}
Let $(x_1,...,x_k,v)$ be one of the sequences that we have counted in $\gamma(u,\mu,r_1,...,r_{k-1},|\mu|)$. We have $h(x_k)=|\mu|$ and $x_k=u\wedge v$, 
and so $h(u\wedge v)=|\mu|$. Because of $\tp(v)=\mu$ it follows that $v=u\wedge v=x_k$. So, every sequence which we count in
$\gamma(u,\mu,r_1,...,r_{k-1},|\mu|)$ is of the form $(x_1,...,x_{k-1},v,v)\in L^{k+1}$ with $h(x_i)=r_i$ for $i=1,...,k-1$, $\tp(v)=\mu$
and $x_1\leq...\leq x_{k-1}\leq v\leq u$. With $\alpha(\varphi,\mu)$ we count all elements $v$ with $\tp(v)=\mu$ and $v\leq u$. With
$\alpha(\mu,r_1,...,r_{k-1})$ we count for such a fixed $v$ all the sequences $(x_1,...,x_{k-1})$ with $x_1\leq...\leq x_{k-1}\leq v$ and
$h(x_i)=r_i$. Hence, our statement follows.
\end{proof}

If we insert equations  (\ref{recursiveformular}) and (\ref{substitution}) in equation (\ref{recursive_1}), then we obtain a recursive
formula for $\gamma$, which depends only on $\alpha,\beta$ and $\gamma$. Equation (\ref{recursivestop}) gives a recursion stop for this
formula. We list both equations together:

\begin{empheq}[box=\fbox]{align*}
		\gamma(u,\mu,r_1,...,r_k)
		=&\sum_{\substack{\vartheta\in \parti(r_k)\\ \vartheta\leq \varphi}}\alpha(\varphi,\vartheta)\cdot\beta(\vartheta,\mu)
\cdot\alpha(\vartheta,r_1,...,r_{k-1})\\
	&-\sum_{l=r_k+1}^{|\mu|}\gamma(u,\mu,r_1,...,r_k,l)\tag{RC 1}\\
	\\
	\gamma(u,\mu,r_1,...,r_{k},|\mu|&)=\alpha(\varphi,\mu)\cdot \alpha(\mu,r_1,...,r_{k})
\end{empheq}

In this way, we can express $\gamma$ recursively with $\alpha$ and $\beta$. Note that $\varphi$ is the type of $u$. If 
$\alpha(\lambda,\vartheta)$ and $\beta(\lambda,\vartheta)$ are known for each partitions $\lambda,\vartheta$, then we can also compute
$\gamma(u,\mu,r_1,...,r_k)$ for every $u$, $\mu$ and $r_1,...,r_k$.
We see that $\gamma(u_1,\mu,r_1,...,r_k)=\gamma(u_2,\mu,r_1,...,r_k)$ holds if $u_1$ and $u_2$ have the same type.
By the definition of $\varepsilon$ we have $|\{v\in L_\mu\mid h(u\wedge v)=r_0\}|=\varepsilon(u,\mu,r_0)$ and by Lemma \ref{substitution}
\begin{equation}
	|\{v\in L_\mu\mid h(u\wedge v)=r_0\}|=\gamma(u,\mu,r_0).
\label{eq:}
\end{equation}
With (RC 1) one can compute $|\{v\in L_\mu \mid h(u\wedge v)=r_0\}|$.

\textbf{Case 2:} $|\mu|> |\varphi|$. As mentioned earlier, everything works similarly  to case 1, so we will omit details.
$S(u,r,\mu)$ can be decomposed as
\begin{equation}
	S(u,r,\mu)=\bigcupdot_{r_0=|\mu|}^{\left\lfloor \frac{|\varphi|+|\mu|+r}{2} \right\rfloor}\{v\in L_\mu\mid h(u\vee v)=r_0\}.
	\label{decompo_2}
\end{equation}
With 
\begin{align*}
	\gamma'(u,\mu,r_1,...,r_k):=|\{(x_1,...,x_k,v)\in L^{k+1}\mid h(x_i)=r_i,\tp(v)=\mu,&\\
		  x_1\geq...\geq x_k=u\vee v&\}|
\end{align*}
one obtains:

\begin{empheq}[box=\fbox]{align*}
		\gamma'(u,\mu,r_1,...,r_k)
		=&\sum_{\substack{\vartheta\in \parti(r_k)\\ \vartheta\geq \varphi}}\beta(\varphi,\vartheta)\cdot\alpha(\vartheta,\mu)
\cdot\beta(\vartheta,r_1,...,r_{k-1})\\
	&-\sum_{l=|\mu|}^{r_k-1}\gamma'(u,\mu,r_1,...,r_k,l)\tag{RC 2}\\
	\\
	\gamma'(u,\mu,r_1,...,r_{k},|\mu|&)=\beta(\varphi,\mu)\cdot \beta(\mu,r_1,...,r_{k})
\end{empheq}

Note again, that $\varphi$ is the type of $u$. We have
\begin{align}
	|\{v\in L_\mu\mid h(u\vee v)=r_0\}|=\gamma'(u,\mu,r_0),
\end{align}
and so $|\{v\in L_\mu\mid h(u\vee v)=r_0\}|$ can be computed by (RC 2). Also
$\gamma'(u_1,\mu,r_1,...,r_{k})=\gamma'(u_2,\mu,r_1,...,r_{k})$ holds here if $u_1$ and $u_2$ have the same type.

With equations (\ref{decompo_1}) and (\ref{decompo_2}) follows the next Theorem, which states 
the desired formula for $|S(u,r,\mu)|$.
\begin{theo}
  It holds
  \begin{displaymath}
	|S(u,r,\mu)|=	\begin{cases}
		\sum\limits_{r_0=\left\lceil 
		\frac{|\varphi|+|\mu|-r}{2}\right\rceil}^{|\mu|}&\gamma(u,\mu,r_0)\quad 
		\text{if } |\mu|\leq |\varphi|\\
		\\
		\sum\limits_{r_0=|\mu|}^{\left\lfloor 
		\frac{|\varphi|+|\mu|+r}{2}\right\rfloor}&\gamma'(u,\mu,r_0)\quad 
		\text{if } |\mu|>|\varphi|.
	\end{cases}
  \end{displaymath}
\end{theo}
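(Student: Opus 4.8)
The plan is to establish the two displayed formulas for $|S(u,r,\mu)|$ by simply combining the decomposition of $S(u,r,\mu)$ into ``intersection-height layers'' (resp. ``join-height layers'') that was already worked out in the body of the section, together with the identification of the cardinality of each such layer with a value of $\gamma$ (resp. $\gamma'$). Concretely, in Case 1 ($|\mu|\le|\varphi|$) I start from equation \eqref{decompo_1}, which expresses $S(u,r,\mu)$ as a disjoint union over $r_0$ ranging from $\left\lceil\frac{|\varphi|+|\mu|-r}{2}\right\rceil$ to $|\mu|$ of the sets $\{v\in L_\mu\mid h(u\wedge v)=r_0\}$. Since the union is disjoint, $|S(u,r,\mu)|$ is the sum of the cardinalities $|\{v\in L_\mu\mid h(u\wedge v)=r_0\}|$, and by equation \eqref{eq:} each of these equals $\gamma(u,\mu,r_0)$. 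This yields the first case of the formula immediately.

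For Case 2 ($|\mu|>|\varphi|$) the argument is entirely dual: I invoke equation \eqref{decompo_2}, which writes $S(u,r,\mu)$ as a disjoint union over $r_0$ from $|\mu|$ to $\left\lfloor\frac{|\varphi|+|\mu|+r}{2}\right\rfloor$ of the sets $\{v\in L_\mu\mid h(u\vee v)=r_0\}$, and then use the stated identity $|\{v\in L_\mu\mid h(u\vee v)=r_0\}|=\gamma'(u,\mu,r_0)$ to conclude. Adding over $r_0$ gives the second case. The only things that genuinely need checking are that these two unions are indeed disjoint and that they exhaust $S(u,r,\mu)$ — but both facts are precisely the content of \eqref{decompo_1} and \eqref{decompo_2}, which the paper has already justified: disjointness because $h(u\wedge v)$ (resp. $h(u\vee v)$) is a well-defined function of $v$, and exhaustiveness because for $v\in L_\mu$ the constraint $d(u,v)\le r$ together with the modular identity $d(u,v)=h(u)+h(v)-2h(u\wedge v)=|\varphi|+|\mu|-2h(u\wedge v)$ (resp. $d(u,v)=2h(u\vee v)-|\varphi|-|\mu|$) forces $h(u\wedge v)$ (resp. $h(u\vee v)$) into exactly the stated range, and conversely any $v\in L_\mu$ with $h(u\wedge v)$ (resp. $h(u\vee v)$) in that range satisfies $d(u,v)\le r$.

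There is essentially no obstacle here: the theorem is a bookkeeping corollary of the lemmas and decompositions already proved, and the real work — the recursive reduction of $\gamma$ and $\gamma'$ to $\alpha$ and $\beta$ via (RC 1) and (RC 2), and the bijection lemma \eqref{substitution} underlying \eqref{eq:} — has been carried out beforehand. If anything deserves a sentence of care, it is the remark that in Case 2 one must check the analogue of Lemma \ref{lemma_delta} and the analogue of \eqref{substitution} to justify the boxed recursion (RC 2); but the paper explicitly states that Case 2 ``works similarly to Case 1'', so in the proof I would simply write that \eqref{decompo_1} and \eqref{decompo_2} are disjoint decompositions and that $|\{v\in L_\mu\mid h(u\wedge v)=r_0\}|=\gamma(u,\mu,r_0)$ and $|\{v\in L_\mu\mid h(u\vee v)=r_0\}|=\gamma'(u,\mu,r_0)$, whence the two cases of the claimed formula follow by summing over $r_0$.

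\begin{proof}
Fix $u\in L$ of type $\varphi$, a radius $r\in\mathbb{N}$, and a partition $\mu$. Suppose first $|\mu|\le|\varphi|$. By \eqref{decompo_1} the set $S(u,r,\mu)$ is the disjoint union of the sets $\{v\in L_\mu\mid h(u\wedge v)=r_0\}$ for $r_0=\left\lceil\frac{|\varphi|+|\mu|-r}{2}\right\rceil,\dots,|\mu|$; hence
\begin{displaymath}
  |S(u,r,\mu)|=\sum_{r_0=\left\lceil\frac{|\varphi|+|\mu|-r}{2}\right\rceil}^{|\mu|}\bigl|\{v\in L_\mu\mid h(u\wedge v)=r_0\}\bigr|
  =\sum_{r_0=\left\lceil\frac{|\varphi|+|\mu|-r}{2}\right\rceil}^{|\mu|}\gamma(u,\mu,r_0),
\end{displaymath}
where the last equality is \eqref{eq:}. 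Now suppose $|\mu|>|\varphi|$. By \eqref{decompo_2} the set $S(u,r,\mu)$ is the disjoint union of the sets $\{v\in L_\mu\mid h(u\vee v)=r_0\}$ for $r_0=|\mu|,\dots,\left\lfloor\frac{|\varphi|+|\mu|+r}{2}\right\rfloor$, and by the analogue of \eqref{eq:} for joins, $|\{v\in L_\mu\mid h(u\vee v)=r_0\}|=\gamma'(u,\mu,r_0)$. Therefore
\begin{displaymath}
  |S(u,r,\mu)|=\sum_{r_0=|\mu|}^{\left\lfloor\frac{|\varphi|+|\mu|+r}{2}\right\rfloor}\bigl|\{v\in L_\mu\mid h(u\vee v)=r_0\}\bigr|
  =\sum_{r_0=|\mu|}^{\left\lfloor\frac{|\varphi|+|\mu|+r}{2}\right\rfloor}\gamma'(u,\mu,r_0).
\end{displaymath}
This proves both cases.
\end{proof}
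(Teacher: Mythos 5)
Your proof is correct and follows exactly the route the paper intends: it combines the disjoint decompositions \eqref{decompo_1} and \eqref{decompo_2} with the identities $|\{v\in L_\mu\mid h(u\wedge v)=r_0\}|=\gamma(u,\mu,r_0)$ and $|\{v\in L_\mu\mid h(u\vee v)=r_0\}|=\gamma'(u,\mu,r_0)$, which is all the paper itself offers (it states the theorem as an immediate consequence of those equations). No gap.
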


Again $\varphi$ is here the type of $u$.
Furthermore we have
\begin{displaymath}
	|S(u,r,l)|=\sum_{\mu\in \parti(l)}|S(u,r,\mu)|\qquad\text{and}\qquad |S(u,r)|=\sum_{l=h(u)-r}^{h(u)+r}|S(u,r,l)|.
\end{displaymath}
That is the way we can compute $|S(u,r,\mu)|$, $|S(u,r,l)|$ and $|S(u,r)|$.
Again, we have $|S(u_1,r,\mu)|=|S(u_2,r,\mu)|$, $|S(u_1,r,l)|=|S(u_2,r,l)|$ and $|S(u_1,r)|=|S(u_2,r)|$ if $u_1$ and $u_2$ have the same
type.

\section{Bounds}\label{bounds}

\subsection{Sphere packing bounds}

Before deriving sphere packing bounds on modular lattices, we will state a very useful theorem for spheres in modular lattices. We 
can make use of it for constant height codes in modular lattices. In which way this works, will be described below.

\begin{theo}\label{disjoint}
Let $\mathbf{L}=(L;\vee,\wedge)$ be a finite modular lattice, $u_1,u_2\in L_l$, $r\in \mathbb{N}$ and $t\in \{l-r,l-r+2,...,l+r-2,l+r\}$,
such that $0\leq t\leq h(1_\mathbf{L})$. Then
\begin{displaymath}
	S(u_1,r)\cap S(u_2,r)=\emptyset \ \Leftrightarrow \ S(u_1,r,t)\cap S(u_2,r,t)=\emptyset.
\end{displaymath}
\end{theo}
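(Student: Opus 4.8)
\textit{Proof plan.}

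The plan is to treat the two implications separately. The implication ``$\Rightarrow$'' is immediate, since $S(u_i,r,t)\subseteq S(u_i,r)$ for $i=1,2$, so disjointness of the spheres forces disjointness of their $t$-layers. For ``$\Leftarrow$'' I would argue by contraposition: assume $S(u_1,r)\cap S(u_2,r)\neq\emptyset$, and produce, for the given admissible $t$, an element $w\in L$ with $h(w)=t$ and $d(u_1,w)\le r$, $d(u_2,w)\le r$ (that is, $w\in S(u_1,r,t)\cap S(u_2,r,t)$). The point to notice first is that the \emph{only} information I will extract from the hypothesis is the triangle bound $d(u_1,u_2)\le 2r$ (via a common point of the two spheres). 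Since $d(u_1,u_2)=2l-2h(u_1\wedge u_2)$ is even, I may set $s:=d(u_1,u_2)/2\in\mathbb{N}$, and then $s\le r$; moreover by (\ref{mod_function}) and $h(u_1)=h(u_2)=l$ one gets $h(u_1\vee u_2)=l+s$ and $h(u_1\wedge u_2)=l-s$.

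Next I would split according to the position of $t$ relative to $[\,u_1\wedge u_2,\ u_1\vee u_2\,]$. If $t<l-s$, I choose $w$ of height $t$ on a maximal chain from $0_{\mathbf L}$ to $u_1\wedge u_2$ (such a chain exists and, $\mathbf L$ being modular of finite length, contains an element of every height in $[0,l-s]$); then $w\le u_i$, so $d(u_i,w)=l-t\le r$ because $t\ge l-r$. Dually, if $t>l+s$, I take $w$ of height $t$ on a maximal chain from $u_1\vee u_2$ to $1_{\mathbf L}$, using $t\le h(1_{\mathbf L})$, and obtain $d(u_i,w)=t-l\le r$ since $t\le l+r$. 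The remaining, main case is $l-s\le t\le l+s$. Here I fix maximal chains $u_1\wedge u_2=d_0<d_1<\dots<d_s=u_1$ in $[u_1\wedge u_2,u_1]$ and $u_1\wedge u_2=e_0<e_1<\dots<e_s=u_2$ in $[u_1\wedge u_2,u_2]$ (consecutive members differ in height by $1$), and I consider the joins $w_{i,j}:=d_i\vee e_j$. Using $d_i\le u_1$, $e_j\le u_2$, $u_1\wedge u_2\le d_i,e_j$, the modular law and (\ref{mod_function}), one computes $d_i\wedge e_j=u_1\wedge u_2$, hence $h(w_{i,j})=l-s+i+j$; and likewise $u_1\wedge w_{i,j}=d_i$ and $u_1\vee w_{i,j}=u_1\vee e_j$, giving $d(u_1,w_{i,j})=s+j-i$ and, symmetrically, $d(u_2,w_{i,j})=s+i-j$. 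Thus $w_{i,j}$ lies in both spheres exactly when $|i-j|\le r-s$, and has height $t$ exactly when $i+j=k'$ with $k':=t-l+s\in\{0,\dots,2s\}$. Choosing $i=\lceil k'/2\rceil$, $j=\lfloor k'/2\rfloor$ keeps $i,j\in\{0,\dots,s\}$ and yields $|i-j|=k'\bmod 2$; this is $\le r-s$ automatically when $s<r$, while for $s=r$ it requires $k'$ to be even, which is exactly where the hypothesis $t\equiv l-r\pmod 2$ enters (then $k'=t-l+r$ is even). In every case the element $w$ so produced settles the contrapositive.

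The genuine obstacle is the middle case: one must realize that elements $w\le u_1$, $w\le u_2$, or $w$ on a single maximal chain inside $[u_1\wedge u_2,u_1\vee u_2]$ fail to hit all the intermediate heights when $s$ is close to $r$, so the ``diagonal'' joins $d_i\vee e_j$ are needed, and that the parity of $t$ is precisely what makes them available in the boundary case $s=r$. Everything else is routine: the height computations for $w_{i,j}$ from the modular law and (\ref{mod_function}), and the standard fact that in a modular lattice of finite length every interval contains an element of each intermediate height (which also underlies the two easy cases).
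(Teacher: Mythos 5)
Your proposal is correct and follows essentially the same route as the paper: the easy direction via layer inclusion, and the converse by producing a common element of height $t$, splitting into $t$ below $h(u_1\wedge u_2)$, above $h(u_1\vee u_2)$, and the middle case handled by joining elements $x_1\in[u_1\wedge u_2,u_1]$, $x_2\in[u_1\wedge u_2,u_2]$ of balanced heights, with the parity hypothesis on $t$ entering exactly in the odd/boundary subcase. Your grid $w_{i,j}=d_i\vee e_j$ with the exact distances $d(u_1,w_{i,j})=s+j-i$ is just a slightly more systematic packaging of the paper's choice $h(x_1)=t-\lfloor d/2\rfloor$, $h(x_2)=t-\lceil d/2\rceil$.
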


For the proof we need the following Lemma.

\begin{lemma}\label{lemma_disjoint}
Let $(L;\vee,\wedge)$ be a finite modular lattice, $u_1,u_2\in L_l$ and $r\in \mathbb{N}$. Then the following implication holds:
\begin{displaymath}
	S(u_1,r)\cap S(u_2,r)\neq\emptyset \ \Rightarrow \ h(u_1\wedge u_2)\geq l-r.
\end{displaymath}
\end{lemma}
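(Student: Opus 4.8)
The plan is to exploit the triangle inequality for the metric $d$ together with the explicit height formula $d(u,v)=h(u)+h(v)-2h(u\wedge v)$ that is available in any modular lattice of finite length (recalled right after equation (\ref{mod_function})). The whole statement is really just a packaging of those two facts, so the proof should be short.

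First I would pick an element witnessing that the intersection is nonempty: let $v\in S(u_1,r)\cap S(u_2,r)$. By definition of the spheres this means $d(u_1,v)\le r$ and $d(u_2,v)\le r$. Since $d$ is a metric on $L$ (cited in the Preliminaries, \cite{birkhoff} chapter X), the triangle inequality gives
\begin{displaymath}
 d(u_1,u_2)\le d(u_1,v)+d(v,u_2)\le 2r.
\end{displaymath}

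Then I would compute the left-hand side directly: because $u_1,u_2\in L_l$ we have $h(u_1)=h(u_2)=l$, so
\begin{displaymath}
 d(u_1,u_2)=h(u_1)+h(u_2)-2h(u_1\wedge u_2)=2l-2h(u_1\wedge u_2).
\end{displaymath}
Combining the last two displays yields $2l-2h(u_1\wedge u_2)\le 2r$, i.e. $h(u_1\wedge u_2)\ge l-r$, which is the desired conclusion. There is no real obstacle here; the only thing to be careful about is to invoke the metric property (hence triangle inequality) and the height identity, both of which are already established for finite modular lattices in Section \ref{preliminaries}.
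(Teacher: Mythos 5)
Your proof is correct and uses exactly the same two ingredients as the paper's (the triangle inequality for $d$ and the identity $d(u_1,u_2)=2l-2h(u_1\wedge u_2)$); the paper merely phrases it as a proof by contradiction, assuming $h(u_1\wedge u_2)<l-r$ and deriving $d(u_1,u_2)>2r$, while you argue directly. The two are equivalent, so there is nothing substantive to change.
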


\begin{proof}
Assume $h(u_1\wedge u_2)< l-r$. Then we obtain 
\begin{displaymath}
	d(u_1,u_2)=h(u_1)+h(u_2)-2h(u_1\wedge u_2)=2(l-h(u_1\wedge u_2))>2r.
\end{displaymath}
This is a contradiction to $S(u_1,r)\cap S(u_2,r)\neq\emptyset$.
\end{proof}

\begin{proof}[Proof of Theorem \ref{disjoint}]
Proof direction "$\Rightarrow$" is clear, and for direction "$\Leftarrow$" we will proceed indirectly. Let $S(u_1,r,t)\cap S(u_2,r,t)=\emptyset$ and we assume $S(u_1,r) \cap
S(u_2,r)\neq\emptyset$. We will show, that there exists an element $y\in S(u_1,r,t)\cap S(u_2,r,t)$, that induces a contradiction. We distinguish the following list of cases.

Case 1: $t\leq h(u_1\wedge u_2)$. There exists an element $y\leq u_1\wedge u_2$ with $h(y)=t$. Because of $y\leq u_1$, we find
$d(u_1,y)=h(u_1)-h(y)=l-t\leq r$.
Similarly, there follows $d(u_2,y)\leq r$. So $y$ is contained in $S(u_1,r,t)\cap S(u_2,r,t)$.

Case 2: $h(u_1\vee u_2)\leq t$. Similar to case 1.

Case 3: $h(u_1\wedge u_2)< t < h(u_1\vee u_2)$. We define $d:=t-h(u_1\wedge u_2)$. With Lemma \ref{lemma_disjoint}  there follows
\begin{equation}
	r\geq l-h(u_1\wedge u_2)=l-t+d.
\label{ineq}
\end{equation}
Later we will make use of this inequality. We distinguish now according to the parity of $d$.

Case 3.1: $d$ is even. Choose $x_1,x_2\in L$ with $u_1\wedge u_2\leq x_1\leq u_1$, $u_1\wedge u_2\leq x_2\leq u_2$ and $h(x_1)=h(x_2)=t-
\frac{d}{2}$. We will show, that $x_1\vee x_2$ is contained in $S(u_1,r,t)\cap S(u_2,r,t)$. It is easy to see, that $x_1\wedge x_2=u_1\wedge
u_2$ holds. It follows $h(x_1\vee x_2)=h(x_1)+h(x_2)-2h(x_1\wedge x_2)=2(t- \frac{d}{2})-t+d=t$.
So $x_1\vee x_2$ has height $t$. Furthermore
\begin{align*}
	d(u_1,x_1\vee x_2)&=h(u_1)+h(x_1\vee x_2)-2h(\underbrace{u_1\wedge(x_1\vee x_2)}_{\geq x_1})\\
	&\leq h(u_1)+h(x_1\vee x_2)-2h(x_1)
	=l+t-2(t- \frac{d}{2})=l-t+d\leqhigh^{(\ref{ineq})} 	r.
\end{align*}
Analogously,  one can show $d(u_2,x_1\vee x_2)\leq r$. It follows, that $x_1\vee x_2$ is contained in $S(u_1,r,t)\cap S(u_2,r,t)$.

Case 3.2: $d$ is odd. Choose $x_1,x_2\in L$ with $u_1\wedge u_2\leq x_1\leq u_1$, $u_1\wedge u_2\leq x_2\leq u_2$, $h(x_1)=t- \frac{d-1}{2}$
and $h(x_2)=t- \frac{d+1}{2}$. We will show again, that $x_1\vee x_2$ is contained in $S(u_1,r,t)\cap S(u_2,r,t)$. Again $x_1\wedge
x_2=u_1\wedge u_2$ holds.  Similarly to case 3.1., one can show that $x_1\vee x_2$ has height $t$. For the distance there holds
\begin{align*}
	d(u_1,x_1\vee x_2)&=h(u_1)+h(x_1\vee x_2)-2h(u_1\wedge(x_1\vee x_2))\\
	&\leq h(u_1)+h(x_1\vee x_2)-2h(x_1)=l+t-2(t- \frac{d-1}{2})\\
	&=l-t+d-1\leqhigh^{(\ref{ineq})} r-1.
\end{align*}
Similarly, one can show $d(u_2,x_1\vee x_2)\leq l-t+d+1$.
Inequality (\ref{ineq}) is in this case not sufficient. There holds $0\leq r-(l-t+d)=t-(l-r)-d$ by inequality (\ref{ineq}).
Furthermore $t-(l-r)$ is even and $d$ is odd. It follows that $1\leq r-(l-t+d)$ and so $d(u_1,x_1\vee x_2)\leq r$. Finally we have $x_1\vee
x_2\in S(u_1,r,t)\cap S(u_2,r,t)$.
\end{proof}

One can make use of Theorem \ref{disjoint} in the situation,
where one considers a constant height code $\mathcal{C}$ in a finite modular lattice $(L;\vee,\wedge)$ and an alphabet $K$ which is not the 
whole lattice $L$, instead only a subset of $L$, which contains at least one nonempty set $L_t$ for a $t\in \mathbb{N}$. Let
$\mathcal{D(C)}\geq D$ for a $D\in \mathbb{N}$  and $r=\left\lfloor \frac{D-1}{2}\right\rfloor$.
It follows that the spheres  restricted to $K$ with the codewords in the center and radius $r$ have to be disjoint. That means
$(S(u_1,r)\cap K)\cap (S(u_2,r)\cap K)=\emptyset$ for every two codewords $u_1,u_2\in \mathcal{C}$. It follows that the  spheres restricted
to $L_t$ have to be disjoint, because $L_t$ is contained in $K$. That means $S(u_1,r,t)\cap S(u_2,r,t)=\emptyset$ for every two codewords
$u_1,u_2\in \mathcal{C}$. But for the communication it is not important whether the spheres restricted to $\bar{K}:=L\setminus K$ with the
codewords in the center and radius $r$ are disjoint or not, because a receiver cannot receive an element of $\bar{K}$. That means
$(S(u_1,r)\cap \bar{K})\cap (S(u_2,r)\cap \bar{K})=\emptyset$ is not important.
But if $|t-l|$ and $r$ are both even or both odd and $S(u_1,r,t)\cap S(u_2,r,t)=\emptyset$ is satisfied for every two codewords
$u_1,u_2\in \mathcal{C}$, then it follows by Theorem \ref{disjoint}, that the spheres on the whole lattice with
the codewords in the centers have to be disjoint. This means $S(u_1,r)\cap S(u_2,r)=\emptyset$ for every two codewords $u_1,u_2\in
\mathcal{C}$. Even the spheres restricted to $\bar{K}$ have to be disjoint. An advantage of this fact gets clear after Proposition
\ref{metricrestrict}.

For example in \cite{ori}, a situation is mentioned, in which the lattice is the subspace lattice $\mathbf{L}=(L(\mathbb{F}_q^N);+,\cap)$ of the
$\mathbb{F}_q$-vector space $\mathbb{F}_q^N$ and that all codewords have dimension $l$. Moreover a receiver collects vectors until the
spanned vector space of the received vectors has dimension $l$. We denote $L:=L(\mathbb{F}_q^N)$. So the alphabet for this situation is
$L_l$, the set of all $l$-dimensional subspaces of $\mathbb{F}_q^N$. With the notation above we have $K=L_l$ and we can choose $t$ as $l$.
Consider a code $\mathcal{C}\subseteq L_l$ (a so-called \textit{constant dimension code} \cite{ori})
with minimum distance $\mathcal{D(C)}\geq D$ and $r=\left\lfloor \frac{D-1}{2}\right\rfloor$. Of course $S(u_1,r,l)\cap
S(u_2,r,l)=\emptyset$ must hold for every two codewords $u_1,u_2\in \mathcal{C}$.
But it is at the first view not important whether $(S(u_1,r)\setminus L_l)\cap (S(u_2,r)\setminus L_l)=\emptyset$ holds or
not, because a receiver will never receive an element of $L\setminus L_l$. But if $r$ is even, then also $S(u_1,r)\cap S(u_2,r)=\emptyset$
must hold for every two codewords $u_1,u_2\in \mathcal{C}$ by Theorem \ref{disjoint}. If $r$ is odd, then $r-1$ is even and it follows
$S(u_1,r-1)\cap S(u_2,r-1)=\emptyset$ for every two codewords $u_1,u_2\in \mathcal{C}$. Because of this also $(S(u_1,r)\setminus L_l)\cap
(S(u_2,r)\setminus L_l)=\emptyset$ must hold (or $(S(u_1,r-1)\setminus L_l)\cap (S(u_2,r-1)\setminus L_l)=\emptyset$). E.g for every $t\in
\mathbb{N}$ with $0\leq t\leq h(1_\mathbf{L})$ it must hold $S(u_1,r,t)\cap S(u_2,r,t)=\emptyset$ (or $S(u_1,r-1,t)\cap
S(u_2,r-1,t)=\emptyset$), and not only for $t=l$.

Now we change the situation slightly. The receiver collects again vectors, until the spanned  vector space of the received vectors has
dimension $l$. But now it can happen, that the receiver receives not sufficiently many linear independent vectors. So the received
vector space has a dimension between $0$ and $l$. The alphabet is in this case $K=\bigcup_{i=0}^l L_i$. We consider again a constant
dimension code $\mathcal{C}\subseteq L_l$ with minimum distance $\mathcal{D(C)}\geq D$ and $r=\left\lfloor \frac{D-1}{2}\right\rfloor$.
Again $(S(u_1,r)\cap K)\cap (S(u_2,r)\cap K)=\emptyset$ must hold, and it follows $S(u_1,r,i)\cap S(u_2,r,i)=\emptyset$ for every
$i=0,...,l$ and two codewords $u_1,u_2\in \mathcal{C}$.
If $r$ is even, then we can choose e.g.  $t$ as $l$ and it follows $S(u_1,r)\cap S(u_2,r)=\emptyset$ by Theorem \ref{disjoint}, otherwise we
can choose e.g. $t$ as $l-1$  and it follows $S(u_1,r)\cap S(u_2,r)=\emptyset$.

In this paper we consider only constant height codes in finite modular lattices. Because of the facts described above, we will only consider
the case, that the spheres in the whole lattice have to be disjoint. It doesn't matter whether the alphabet is the whole lattice or not.

One advantage of a finite modular lattice for the choice of the alphabet is, that one can decompose the lattice into subsets of the form
$L_t$ for a $t\in \mathbb{N}$. If the lattice is semi-primary, one can even decompose it finer into subsets of the form $L_\mu$ for a
partition $\mu$. One can make use of this fact with the help of the next proposition, which is a very general formulation of a sphere
packing bound for general finite metric spaces. For a metric space $M$ with metric $d$ and $u\in M$ we define also $S(u,r):=\{v\in M\mid
d(u,v)\leq r\}$ as the sphere centered at $u$ and radius $r$.

\begin{prop}\label{metricrestrict}
Let $M$ be a finite metric space with metric $d$, $N$ a subset of $M$, $\mathcal{C}$ a subset of $N$ with minimum distance
$\mathcal{D(C)}\geq D$ for $D\in \mathbb{R}$, $r=\left\lfloor \frac{D-1}{2}\right\rfloor$ and $T$ a subset of $M$ such that $\min_{u\in
N}|S(u,r)\cap T|>0$
holds. Then it follows
\begin{displaymath}
	|\mathcal{C}|\leq \frac{|T|}{\min_{u\in N}|S(u,r)\cap T|}.
\end{displaymath}
\end{prop}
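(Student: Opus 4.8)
The plan is to prove this by a standard double-counting (pigeonhole) argument, adapted to keep track only of the "useful" part $T$ of the ambient space. First I would observe that the hypothesis $\mathcal{D(C)}\geq D$ together with $r=\lfloor\frac{D-1}{2}\rfloor$ forces the balls $S(u,r)$ for $u\in\mathcal C$ to be pairwise disjoint: if $u_1,u_2\in\mathcal C$ are distinct and $v\in S(u_1,r)\cap S(u_2,r)$, then by the triangle inequality $d(u_1,u_2)\leq d(u_1,v)+d(v,u_2)\leq 2r\leq D-1<D\leq\mathcal{D(C)}$, a contradiction. Hence the sets $S(u,r)$, $u\in\mathcal C$, are pairwise disjoint subsets of $M$.

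Next I would intersect everything with $T$. Since the $S(u,r)$ are pairwise disjoint, so are the sets $S(u,r)\cap T$ for $u\in\mathcal C$, and they are all subsets of $T$. Therefore
\begin{displaymath}
\sum_{u\in\mathcal C}|S(u,r)\cap T|\;=\;\Bigl|\bigcupdot_{u\in\mathcal C}\bigl(S(u,r)\cap T\bigr)\Bigr|\;\leq\;|T|.
\end{displaymath}
Now each codeword $u\in\mathcal C$ lies in $N$, so $|S(u,r)\cap T|\geq\min_{u'\in N}|S(u',r)\cap T|$, and this minimum is strictly positive by hypothesis. Bounding every summand from below by this minimum gives
\begin{displaymath}
|\mathcal C|\cdot\min_{u'\in N}|S(u',r)\cap T|\;\leq\;\sum_{u\in\mathcal C}|S(u,r)\cap T|\;\leq\;|T|,
\end{displaymath}
and dividing by the (positive) minimum yields the claimed inequality.

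There is essentially no hard part here; the only thing to be careful about is that $r$ is a nonnegative integer (which is why one writes $\lfloor\frac{D-1}{2}\rfloor$ rather than $\frac{D-1}{2}$) and that the positivity assumption $\min_{u\in N}|S(u,r)\cap T|>0$ is exactly what is needed both to divide and to ensure the codewords are not forced to coincide. The disjointness step is where the minimum-distance hypothesis enters, and everything after that is bookkeeping with finite cardinalities. One could remark in passing that taking $N=T=M$ recovers the classical sphere-packing bound, and that the freedom to choose $T$ smaller than $M$ is precisely what makes the bound useful in the lattice setting, where $T$ can be taken to be a single layer $L_t$ or a single type-class $L_\mu$ in view of Theorem \ref{disjoint}.
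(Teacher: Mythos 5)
Your proof is correct and follows essentially the same double-counting argument as the paper: the disjoint balls, intersected with $T$, are disjoint subsets of $T$, and each has size at least the stated minimum. The only difference is that you spell out the triangle-inequality argument for disjointness of the $S(u,r)$, which the paper leaves implicit in its use of the disjoint-union symbol.
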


\begin{proof}
Because of $M\supseteq \bigcupdot_{u\in \mathcal{C}}S(u,r)$ it follows $T\supseteq \bigcupdot_{u\in \mathcal{C}}(S(u,r)\cap T)$.
One obtains
\begin{displaymath}
	|T|\geq\left|\bigcupdot_{u\in \mathcal{C}}(S(u,r)\cap T)\right|=\sum_{u\in \mathcal{C}}\left|S(u,r)\cap T\right|\geq
|\mathcal{C}|\cdot \min_{u\in N}|S(u,r)\cap T|.
\end{displaymath}
It follows the statement.
\end{proof}

In our case the metric space $M$ is of course a finite modular lattice $L$. The set $N$ is a set $L_l$ for a nonnegative integer
$l$, because we consider only constant height codes. The set $T$ can be chosen as $L_t$ for a
nonnegative integer $t$. If the lattice is semi-primary, then $T$ can also be chosen as $L_\varphi$ for a partition $\varphi$. If we
consider moreover constant type codes  of type $\mu$, then  $N$ can be chosen as $L_\mu$.

The advantage is now, that we can compute a multitude of bounds. The fact that the spheres in the whole lattice have to be disjoint, and not
only the spheres restricted to alphabet, improves the situation even more, because it delivers more options for the choice of the set $T$.
Some of these bounds are tight, some are loose. The "usual" sphere packing bound, where the whole spheres are considered, would deliver a
value, which is between the tightest and loosest bound.

Lets consider now a finite modular lattice $\mathbf{L}=(L;\vee,\wedge)$, $l\in \mathbb{N}$, a constant height code $\mathcal{C}\subseteq
L_l$ with 
minimum distance $\mathcal{D(C)}\geq D$ and $r=\left\lfloor \frac{D-1}{2}\right\rfloor$. Then $\min_{u\in L_l}|S(u,r)\cap L_t|>0$
holds for $\max\{0,l-r\}\leq t \leq \min\{h(1_\mathbf{L}),l+r\}$. We state the sphere packing bound in the following corollary.

\begin{cor}\label{spherepackmod}
Let  $\mathbf{L}=(L;\vee,\wedge)$ be a finite modular lattice, $l\in \mathbb{N}$, $\mathcal{C}\subseteq L_l$ a constant height code with
minimum distance $\mathcal{D(C)}\geq D$ and $r=\left\lfloor \frac{D-1}{2}\right\rfloor$. For $t\in
\{\max\{0,l-r\},...,\min\{h(1_\mathbf{L}),l+r\}\}$  holds
\begin{displaymath}
	|\mathcal{C}|\leq \frac{|L_t|}{\min_{u\in L_l}|S(u,r,t)|}.
\end{displaymath}
\end{cor}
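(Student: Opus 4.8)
The plan is to obtain the corollary as an immediate instance of Proposition \ref{metricrestrict}. Take the finite metric space $M$ of that proposition to be the lattice $L$ equipped with the metric $d$ of (\ref{metric}), set $N:=L_l$, let $\mathcal{C}\subseteq L_l=N$ be the given constant height code (its minimum distance and the value $r=\left\lfloor\frac{D-1}{2}\right\rfloor$ are exactly as required there), and choose $T:=L_t$. By the definition of the layer $S(u,r,t)$ we have $S(u,r)\cap T=\{v\in S(u,r)\mid h(v)=t\}=S(u,r,t)$, so the conclusion $|\mathcal{C}|\leq |T|/\min_{u\in N}|S(u,r)\cap T|$ of Proposition \ref{metricrestrict} reads verbatim as $|\mathcal{C}|\leq |L_t|/\min_{u\in L_l}|S(u,r,t)|$. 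Hence the only point to check before invoking the proposition is its hypothesis $\min_{u\in N}|S(u,r)\cap T|>0$, i.e.\ that for every $u\in L_l$ and every $t$ in the stated range the sphere $S(u,r)$ contains some element of height $t$.

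To establish this nonemptiness, fix $u\in L_l$ and form a maximal chain $v_0<v_1<\cdots<v_m$ in $\mathbf{L}$ with $v_0=0_\mathbf{L}$, $v_m=1_\mathbf{L}$ and $u\in\{v_0,\dots,v_m\}$ (concatenate a maximal chain of $[0_\mathbf{L},u]$ with one of $[u,1_\mathbf{L}]$). Since $\mathbf{L}$ is modular of finite length, all maximal chains have the same length, so $m=h(1_\mathbf{L})$, $h(v_j)=j$ for all $j$, and in particular $u=v_l$. Now let $t$ satisfy $\max\{0,l-r\}\leq t\leq \min\{h(1_\mathbf{L}),l+r\}$ and put $v:=v_t$, which lies in $L_t$. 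If $t\leq l$ then $v\leq u$, so $d(u,v)=h(u)-h(v)=l-t\leq r$; if $t\geq l$ then $v\geq u$, so $d(u,v)=h(v)-h(u)=t-l\leq r$. In either case $v\in S(u,r)\cap L_t$, and since $u\in L_l$ was arbitrary we conclude $\min_{u\in L_l}|S(u,r)\cap L_t|>0$.

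With the hypothesis of Proposition \ref{metricrestrict} verified, applying it with $M$, $N$, $\mathcal{C}$, $T$ as above gives the asserted bound. The one genuinely substantive ingredient is the nonemptiness step: it rests on the chain condition for modular lattices of finite length (used to guarantee, via the maximal chain through $u$, an element of every intermediate height that is moreover comparable to $u$). Everything else is bookkeeping with the definitions of the metric $d$ and of the layer $S(u,r,t)$, so I do not expect any real obstacle beyond making that comparability argument precise.
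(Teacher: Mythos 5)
Your proposal is correct and follows exactly the route the paper intends: the corollary is obtained by applying Proposition \ref{metricrestrict} with $M=L$, $N=L_l$, $T=L_t$, the paper merely asserting the nonemptiness condition $\min_{u\in L_l}|S(u,r)\cap L_t|>0$ for $t$ in the stated range, which you verify correctly via a maximal chain through $u$ and the Jordan--Dedekind chain condition. No discrepancy with the paper's argument.
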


Now we state the sphere packing bound for constant type codes in semi-primary lattices.

\begin{cor}\label{spherepacksemi}
Let  $(L;\vee,\wedge)$ be a finite semi-primary lattice, $\mu$ a partition, 
$\mathcal{C}\subseteq L_\mu$ a constant type code with minimum distance $\mathcal{D(C)}\geq D$ and 
$r=\left\lfloor \frac{D-1}{2}\right\rfloor$. If $\min_{u\in L_\mu}|S(u,r,\varphi)|>0$
holds for the partition $\varphi$, then it follows
\begin{displaymath}
	|\mathcal{C}|\leq \frac{|L_\varphi|}{\min_{u\in L_\mu}|S(u,r,\varphi)|}.
\end{displaymath}
\end{cor}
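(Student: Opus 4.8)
The plan is to obtain this corollary as an immediate specialization of Proposition \ref{metricrestrict}. Recall that a finite semi-primary lattice $(L;\vee,\wedge)$ is in particular a finite modular lattice of finite length, so $L$ equipped with the metric $d$ from (\ref{metric}) is a finite metric space. I would therefore invoke Proposition \ref{metricrestrict} with the metric space $M:=L$, the subset $N:=L_\mu$, the subset $T:=L_\varphi$, the same code $\mathcal{C}$, the same $D$, and the same $r=\left\lfloor\frac{D-1}{2}\right\rfloor$.

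First I would check that all hypotheses of Proposition \ref{metricrestrict} are met. We have $\mathcal{C}\subseteq L_\mu=N$ by assumption, $\mathcal{D(C)}\geq D$ by assumption, and $r$ is defined exactly as required. The remaining condition is $\min_{u\in N}|S(u,r)\cap T|>0$. Here I would use that for every $u\in L$
\begin{displaymath}
	S(u,r)\cap L_\varphi=\{v\in S(u,r)\mid \tp(v)=\varphi\}=S(u,r,\varphi)
\end{displaymath}
by the definition of $S(u,r,\varphi)$ in Section \ref{chapter_sphere_size_computation} and of $L_\varphi$; consequently $\min_{u\in N}|S(u,r)\cap T|=\min_{u\in L_\mu}|S(u,r,\varphi)|$, which is positive by hypothesis.

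Having verified the hypotheses, Proposition \ref{metricrestrict} gives
\begin{displaymath}
	|\mathcal{C}|\leq\frac{|T|}{\min_{u\in N}|S(u,r)\cap T|}=\frac{|L_\varphi|}{\min_{u\in L_\mu}|S(u,r,\varphi)|},
\end{displaymath}
which is precisely the asserted bound. I do not expect a genuine obstacle in this argument; the only point requiring slight care is the bookkeeping identifying $S(u,r)\cap L_\varphi$ with $S(u,r,\varphi)$ and confirming that sets of fixed type are admissible choices for $N$ and $T$ — and this holds because Proposition \ref{metricrestrict} permits $N$ and $T$ to be arbitrary subsets of $M$. This also parallels Corollary \ref{spherepackmod}, which arises from the same scheme when one tracks only heights, taking $N=L_l$ and $T=L_t$.
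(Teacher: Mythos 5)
Your proposal is correct and matches the paper's own route: the corollary is obtained by specializing Proposition \ref{metricrestrict} with $M=L$, $N=L_\mu$, $T=L_\varphi$, using the identification $S(u,r)\cap L_\varphi=S(u,r,\varphi)$. Your bookkeeping of the hypotheses is exactly what the paper intends.
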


If $(L;\vee,\wedge)$ is furthermore enumerable, then we make use of the fact, that $|S(u_1,r,\varphi)|=|S(u_2,r,\varphi)|$ holds, if $u_1$ 
and $u_2$ have the same type. Note in the following that $\alpha(\lambda,\varphi)=|L_\varphi|$ holds.

\begin{cor}\label{spherpackboundenum}
Let  $\mathbf{L}=(L;\vee,\wedge)$ be an enumerable lattice, $\lambda:=\tp(\mathbf{L})$, $\mu$ a partition, 
$\mathcal{C}\subseteq L_\mu$ a constant type code with minimum distance $\mathcal{D(C)}\geq D$ and 
$r=\left\lfloor \frac{D-1}{2}\right\rfloor$. If $|S(u,r,\varphi)|>0$
holds for the partition $\varphi$ and any $u\in L_\mu$, then it follows
\begin{displaymath}
	|\mathcal{C}|\leq \frac{\alpha(\lambda,\varphi)}{|S(u,r,\varphi)|}.
\end{displaymath}
\end{cor}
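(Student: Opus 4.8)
The plan is to deduce this immediately from Corollary \ref{spherepacksemi}, using the enumerability of $\mathbf{L}$ only to eliminate the minimum over the center of the sphere. First I would note that an enumerable lattice is, by definition, a finite semi-primary lattice, so the hypotheses of Corollary \ref{spherepacksemi} are satisfied by the constant type code $\mathcal{C}\subseteq L_\mu$ with $\mathcal{D(C)}\geq D$ and $r=\left\lfloor\frac{D-1}{2}\right\rfloor$. That corollary then yields $|\mathcal{C}|\leq |L_\varphi|/\min_{u\in L_\mu}|S(u,r,\varphi)|$, provided the denominator is positive.

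The key step is to replace $\min_{u\in L_\mu}|S(u,r,\varphi)|$ by $|S(u,r,\varphi)|$ for a single arbitrary $u\in L_\mu$. This is precisely the type-invariance of layer sizes recorded at the end of Section \ref{chapter_sphere_size_computation}: since $\gamma$ and $\gamma'$ depend on their first argument only through its type, the cardinality formula for $|S(u,r,\varphi)|$ obtained there depends on $u$ only through $\tp(u)$; hence $|S(u_1,r,\varphi)|=|S(u_2,r,\varphi)|$ whenever $u_1$ and $u_2$ have the same type. Consequently $\{\,|S(u,r,\varphi)| : u\in L_\mu\,\}$ is a singleton, its minimum equals $|S(u,r,\varphi)|$ for any chosen $u\in L_\mu$, and the hypothesis ``$|S(u,r,\varphi)|>0$ for any $u\in L_\mu$'' is exactly the positivity assumption needed to invoke Corollary \ref{spherepacksemi}.

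Finally I would substitute $|L_\varphi|=\alpha(\lambda,\varphi)$, which holds because $\alpha(\lambda,\varphi)$ counts the elements of type $\varphi$ lying below $1_\mathbf{L}$, i.e. all of $L_\varphi$ (this identity is noted just before the corollary). Plugging both equalities into the bound from Corollary \ref{spherepacksemi} gives $|\mathcal{C}|\leq \alpha(\lambda,\varphi)/|S(u,r,\varphi)|$, which is the claim.

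There is essentially no obstacle: the statement is a direct specialisation of Corollary \ref{spherepacksemi}, and every ingredient it uses — semi-primarity of enumerable lattices, type-invariance of $|S(u,r,\varphi)|$ in the enumerable case, and $\alpha(\lambda,\varphi)=|L_\varphi|$ — has already been established. The only point demanding a little care is matching the positivity hypothesis of Corollary \ref{spherepacksemi} with the stated one, which the type-invariance makes automatic.
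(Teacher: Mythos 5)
Your proposal is correct and follows exactly the route the paper intends: the corollary is obtained from Corollary \ref{spherepacksemi} by using the type-invariance of $|S(u,r,\varphi)|$ in an enumerable lattice to collapse the minimum over $L_\mu$ to the value at any single $u$, together with the identity $\alpha(\lambda,\varphi)=|L_\varphi|$, both of which the paper states in the two sentences immediately preceding the corollary. No gaps.
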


\begin{remark}
Consider the subspace lattice $(L(\mathbb{F}_q^N);+,\cap)$  of the $\mathbb{F}_q$-vector space $\mathbb{F}_q^N$ and a constant
dimension 
code in $\mathcal{C}\subseteq L(\mathbb{F}_q^N)$ with dimension $l$, $\mathcal{D(C)}\geq D$ for an even $D$ and
$r=\left\lfloor \frac{D-1}{2}\right\rfloor$. Then Corollary \ref{spherepackmod} (also Corollary \ref{spherepacksemi} and
\ref{spherpackboundenum}) delivers for $t=l-r$ exactly the bound
\begin{displaymath}
	|\mathcal{C}|\leq \frac{\left[\begin{array}{c}N\\l-r\end{array}\right]_q}{\left[\begin{array}{c}l\\l-r\end{array}\right]_q},
\end{displaymath}
which was developed by Wang, Xing and Safavi-Naini \cite{wang}. Note that $|L_t|=\left[\begin{smallmatrix}N\\l-r\end{smallmatrix}\right]_q$ 
and $\min_{u\in L_\mu}|S(u,r,t)|=\left[\begin{smallmatrix}l\\l-r\end{smallmatrix}\right]_q$ holds.
If $r$ is moreover even, then Corollary \ref{spherepackmod} (also Corollary \ref{spherepacksemi} and \ref{spherpackboundenum}) delivers for 
$t=l$ exactly the bound
\begin{displaymath}
	|\mathcal{C}|\leq \frac{L_l}{|S(u,r,l)|},
\end{displaymath}
which is the sphere packing bound presented in \cite{ori}.
\end{remark}

\subsection{Sphere covering bound}

Also for the sphere covering bounds we can construct a multitude of bounds, but with a different technique and not with a constant radius. 

We will call a constant height code $\mathcal{C}\subseteq L_l$ with
$\mathcal{D(C)}\geq D$ \textit{maximal} with respect to $D$ if there exists no code $\mathcal{C'}\subseteq L_l$ with 
$\mathcal{C}\subsetneq \mathcal{C'}$ and $\mathcal{D(C')}\geq D$.

\begin{theo}\label{spherecovermod}
Let $\mathbf{L}=(L;\vee,\wedge)$ be a finite modular lattice and $D, l, t\in \mathbb{N}$ with $l,t\leq h(1_\mathbf{L})$. Then there exists
a constant height code $\mathcal{C}\subseteq L_l$ with $\mathcal{D(C)}\geq D$ and
\begin{displaymath}
	|\mathcal{C}|\geq \frac{|L_t|}{\max_{u\in L_l}|S(u,D-2+|l-t|,t)|}.
\end{displaymath}
\end{theo}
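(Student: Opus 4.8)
The plan is to prove the sphere-covering bound by a standard greedy/maximality argument, adapted to the ``layer'' setting where we only count elements of a fixed height $t$. First I would invoke Theorem~\ref{spherecovermod}'s natural companion construction: take a constant height code $\mathcal{C}\subseteq L_l$ which is maximal with respect to $D$, i.e.\ no codeword of height $l$ can be added without violating $\mathcal{D(C)}\geq D$. Such a maximal code exists because $L_l$ is finite. The point of maximality is that for every $w\in L_l$ there is some codeword $u\in\mathcal{C}$ with $d(u,w)\leq D-1$; otherwise $\mathcal{C}\cup\{w\}$ would still have minimum distance $\geq D$, contradicting maximality. In other words, the balls $S(u,D-1)$ with $u\in\mathcal{C}$ cover all of $L_l$.

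Next I would pass from covering $L_l$ to covering $L_t$. The balls $S(u,D-1)$, $u\in\mathcal{C}$, cover $L_l$, but I want to cover $L_t$. Here I would use Theorem~\ref{disjoint} (or rather the geometric idea behind it): if $w\in L_l$ and $v\in L_t$ with $v\leq w$ or $v\geq w$ chosen along a maximal chain, then $d(v,w)=|l-t|$, hence $d(u,v)\leq d(u,w)+d(w,v)\leq (D-1)+|l-t|=D-2+|l-t|$ by the triangle inequality. So I need every $v\in L_t$ to be comparable (in the appropriate direction) to \emph{some} $w\in L_l$ that lies in one of the covering balls. Concretely: given $v\in L_t$, if $t\leq l$ choose any $w\in L_l$ with $v\leq w$ (exists since $h(1_\mathbf{L})\geq l$ and the lattice has finite length, so any element of height $t$ sits below some element of height $l$), and if $t> l$ choose $w\in L_l$ with $w\leq v$. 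By maximality there is $u\in\mathcal{C}$ with $d(u,w)\leq D-1$, and then $d(u,v)\leq D-2+|l-t|$. Therefore
\begin{displaymath}
	L_t=\bigcup_{u\in\mathcal{C}} \bigl(S(u,D-2+|l-t|)\cap L_t\bigr)=\bigcup_{u\in\mathcal{C}} S(u,D-2+|l-t|,t).
\end{displaymath}
Counting cardinalities gives $|L_t|\leq\sum_{u\in\mathcal{C}}|S(u,D-2+|l-t|,t)|\leq|\mathcal{C}|\cdot\max_{u\in L_l}|S(u,D-2+|l-t|,t)|$, which rearranges to the claimed inequality.

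The main obstacle I anticipate is the comparability step: given $v\in L_t$ I must produce $w\in L_l$ comparable to $v$ with the right height. For $t\leq l$ this is the statement that every element of height $t$ lies below some element of height $l\leq h(1_\mathbf{L})$, which follows by extending a maximal chain from $v$ up towards $1_\mathbf{L}$ (every such chain passes through an element of each intermediate height, since covers raise the height by exactly one in a lattice of finite length). For $t>l$ the dual argument using a maximal chain down to $0_\mathbf{L}$ works. A secondary subtlety is that $D-2+|l-t|$ could be negative when $D=0$ or $D=1$ and $l=t$; but then the statement is either vacuous or trivial (a code can have $\mathcal{D(C)}\geq 1$ trivially, and the bound with a ``radius'' of $-1$ degenerates to $|L_l|\geq|L_l|$ after noting $S(u,D-1)$ covering still forces $\mathcal{C}=L_l$), so I would simply remark that the interesting range is $D\geq 2$ and otherwise the inequality holds by inspection. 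Finally, one should double-check that $S(u,D-2+|l-t|,t)$ is the set of $v\in L_t$ within distance $D-2+|l-t|$ of $u$, matching the notation $S(u,r,l)$ introduced earlier; this is just unwinding definitions.
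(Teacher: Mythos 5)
Your overall strategy coincides with the paper's: take a code $\mathcal{C}\subseteq L_l$ that is maximal with respect to $D$, use maximality to cover $L_l$ by balls centred at codewords, lift the covering to $L_t$ via a comparable element of $L_l$ and the triangle inequality, and count. The comparability step you single out as the main obstacle is unproblematic and is handled exactly as you describe (extend a maximal chain from $v$ up or down to height $l$). The genuine gap is elsewhere, in an off-by-one slip: maximality, as you invoke it, only yields for each $w\in L_l$ a codeword $u$ with $d(u,w)\leq D-1$, and then the triangle inequality gives $d(u,v)\leq (D-1)+|l-t|=D-1+|l-t|$; your claimed equality $(D-1)+|l-t|=D-2+|l-t|$ is simply false. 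As written, your argument only establishes the bound with the larger radius $D-1+|l-t|$, which is a strictly weaker inequality than the one stated.

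The missing ingredient is a parity observation that lets one upgrade the consequence of maximality from $d(u,w)\leq D-1$ to $d(u,w)\leq D-2$, which is what the paper's proof uses. For $u,w\in L_l$ one has
\begin{displaymath}
  d(u,w)=h(u)+h(w)-2h(u\wedge w)=2\bigl(l-h(u\wedge w)\bigr),
\end{displaymath}
so all distances within $L_l$ are even; hence when $D$ is even the value $D-1$ cannot occur and ``some codeword is within distance $D-1$ of $w$'' sharpens to ``some codeword is within distance $D-2$ of $w$''. (The paper phrases this contrapositively: if every codeword had distance $>D-2$ from $w$, hence $\geq D-1$, hence $\geq D$ by evenness, then $\mathcal{C}\cup\{w\}$ would still have minimum distance $\geq D$, contradicting maximality.) You should insert this step explicitly; note that it genuinely requires $D$ even --- for odd $D$ the distance $D-1$ is even and can occur, so this route only gives the radius $D-1+|l-t|$, a point the paper's own proof also leaves tacit.
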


\begin{proof}
Let $y\in L_t$ and $\mathcal{C}\subseteq L_l$ be a maximal  code with respect to $D$. We will show, that there exists an element $u\in
\mathcal{C}$, such that $y$ is contained in $S(u,D-2+|l-t|,t)$. We make a distinction of cases for $t$.

Case 1: $t\leq l$. It exists a $v\in L_l$ with $y\leq v$ (so $d(v,y)=l-t$). Moreover there exists a $u\in \mathcal{C}$ (so $u$ has height
$l$) with $d(u,v)\leq D-2$, otherwise $\mathcal{C'}:=\mathcal{C}\cup\{v\}$ would fulfill $\mathcal{C'}\subsetneq \mathcal{C}$ and
$\mathcal{D(C')}\geq D$, what is a contradiction to the maximality of $\mathcal{C}$. It follows $d(u,y)\leq d(u,v)+d(v,y)\leq D-2+l-t$ and
so $y\in S(u,D-2+|l-t|,t)$.

Case 2: $t>l$. Analogue to case 1, one can show, that there exists a $u\in \mathcal{C}$ with $d(u,y)\leq d(u,v)+d(v,y)\leq D-2+t-l$
and it follows $y\in S(u,D-2+|l-t|,t)$. 

$L_t$ is completely covered by the sphere layers of the form $ S(u,D-2+|l-t|,t)$ for $u\in \mathcal{C}$. It follows
\begin{displaymath}
	|L_t|\leq \sum_{u\in \mathcal{C}}| S(u,D-2+|l-t|,t)|\leq |\mathcal{C}|\cdot \max_{u \in L_l}|S(u,D-2+|l-t|,t)|
\end{displaymath}
and finally the statement.
\end{proof}

\begin{remark}
Consider the subspace lattice $(L(\mathbb{F}_q^N);+,\cap)$  of the $\mathbb{F}_q$-vector space $\mathbb{F}_q^N$. If $D$ is even, then 
Theorem \ref{spherecovermod} delivers with $t=l$ exactly the sphere covering bound
\begin{displaymath}
	|\mathcal{C}|\geq \frac{|L_l|}{|S(u,D-2,l)|},
\end{displaymath}
which was already presented in \cite{ori}. Note that $|S(u_1,D-2,l)|=|S(u_2,D-2,l)|$ holds in this case, if $u_1$ and $u_2$ have the same 
dimension.
\end{remark}


\begin{prop}
Let $\mathbf{L}=(L;\vee,\wedge)$ be a finite semi-primary lattice, $D\in \mathbb{N}$ and  $\mu$, $\varphi$ partitions with
$\mu\leq\varphi\leq \tp(\mathbf{L})$. Then there exists a constant type code $\mathcal{C}\subseteq L_\mu$ with $\mathcal{D(C)}\geq D$ and
\begin{displaymath}
	|\mathcal{C}|\geq \frac{|L_\varphi|}{\max_{u\in L_\mu}|S(u,D-2+|\varphi|-|\mu|,\varphi)|}.
\end{displaymath}
\end{prop}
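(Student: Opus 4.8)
The plan is to mimic the proof of Theorem \ref{spherecovermod}, which establishes the analogous statement for constant \emph{height} codes, and to check that every step survives the refinement from heights to types. First I would fix an arbitrary element $y\in L_\varphi$ and let $\mathcal{C}\subseteq L_\mu$ be a code that is maximal with respect to $D$ (such a code exists since $L_\mu$ is finite, and one should note that $L_\mu\neq\emptyset$ because $\mu\leq\tp(\mathbf{L})$ by Lemma \ref{prop_existence}). The goal is to show that $y$ lies in $S(u,D-2+|\varphi|-|\mu|,\varphi)$ for some codeword $u\in\mathcal{C}$.

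The key intermediate step is to produce a ``bridge'' element $v\in L_\mu$ with $v\leq y$ and $d(v,y)=|\varphi|-|\mu|$. Since $y$ has type $\varphi$ and $\mu\leq\varphi$, Lemma \ref{prop_existence} gives an element $v\leq y$ with $\tp(v)=\mu$; then $v\wedge y=v$, so $d(v,y)=h(y)-h(v)=|\varphi|-|\mu|$ using the height formula for semi-primary lattices. Now I would invoke maximality of $\mathcal{C}$: if no codeword $u\in\mathcal{C}$ satisfied $d(u,v)\leq D-2$, then $\mathcal{C}\cup\{v\}$ would still have minimum distance at least $D$ and strictly contain $\mathcal{C}$, contradicting maximality. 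Hence there is a $u\in\mathcal{C}$ with $d(u,v)\leq D-2$, and the triangle inequality for the metric $d$ yields
\begin{displaymath}
	d(u,y)\leq d(u,v)+d(v,y)\leq (D-2)+(|\varphi|-|\mu|),
\end{displaymath}
so $y\in S\bigl(u,D-2+|\varphi|-|\mu|,\varphi\bigr)$ since $\tp(y)=\varphi$.

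Having shown that every $y\in L_\varphi$ is covered by one of the type-$\varphi$ sphere layers $S(u,D-2+|\varphi|-|\mu|,\varphi)$ with $u$ ranging over $\mathcal{C}$, I would conclude by the standard covering count:
\begin{displaymath}
	|L_\varphi|\leq\sum_{u\in\mathcal{C}}\bigl|S(u,D-2+|\varphi|-|\mu|,\varphi)\bigr|\leq |\mathcal{C}|\cdot\max_{u\in L_\mu}\bigl|S(u,D-2+|\varphi|-|\mu|,\varphi)\bigr|,
\end{displaymath}
which rearranges to the asserted lower bound on $|\mathcal{C}|$. I do not expect any serious obstacle here, since the hypothesis $\mu\leq\varphi$ makes the single-case analysis of Theorem \ref{spherecovermod} collapse to just ``Case 1''; the only point that needs a word of care is the existence of the bridge element $v$, which is exactly where $\mu\leq\varphi\leq\tp(\mathbf{L})$ and Lemma \ref{prop_existence} are used, and the observation that the maximum in the denominator is over a nonempty set. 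One might also remark that if the lattice is enumerable the denominator does not depend on the choice of $u\in L_\mu$, giving an explicitly computable bound, though that refinement is not needed for the statement as phrased.
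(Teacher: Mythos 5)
Your proof is correct and is essentially the paper's own argument, which simply refers to the covering proof of Theorem \ref{spherecovermod} with $l$ replaced by $\mu$, $t$ by $\varphi$, and the bridge element supplied by Lemma \ref{prop_existence}. One trivial remark: since your bridge element satisfies $v\leq y$, you are reproducing Case 2 (not Case 1) of that theorem, but this mislabel has no bearing on the validity of the argument.
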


\begin{proof}
Analogue to case 2 of the proof of Theorem \ref{spherecovermod}. One has only to replace $l$ by $\mu$ and $t$ by $\varphi$. The existence of
an element $v\in L_\mu$ with $v\leq y$ is 	guaranteed by Lemma \ref{prop_existence}.
\end{proof}

Note that the case $\mu\nleq\varphi$ (even $\mu >\varphi$) wouldn't work in this proposition, because for an $y\in L_\varphi$ there must not
exist an $v\in L_\mu$ with $d(u,v)\leq ||\varphi|-|\mu||$.

Now we state the result for enumerable lattices.

\begin{cor}\label{spherecoverenum}
Let $\mathbf{L}=(L;\vee,\wedge)$ be an enumerable lattice, $D\in \mathbb{N}$ and  $\mu$, $\varphi$ partitions with
$\mu\leq\varphi\leq\lambda:= \tp(\mathbf{L})$. Then there exists a constant type code $\mathcal{C}\subseteq L_\mu$ with $\mathcal{D(C)}\geq
D$ and
\begin{displaymath}
	|\mathcal{C}|\geq \frac{\alpha(\lambda,\varphi)}{|S(u,D-2+|\varphi|-|\mu|,\varphi)|}
\end{displaymath}
for a $u\in L_\mu$.
\end{cor}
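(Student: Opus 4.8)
The plan is to derive this corollary directly from the preceding Proposition together with the two defining features of an enumerable lattice. An enumerable lattice is in particular finite and semi-primary, so the preceding Proposition applies verbatim with the same $D$, $\mu$ and $\varphi$: the hypothesis $\mu\leq\varphi\leq\tp(\mathbf{L})$ is exactly what is required there, and $L_\mu\neq\emptyset$ (so that the final phrase ``for a $u\in L_\mu$'' is meaningful) is guaranteed by Lemma \ref{prop_existence} applied to $1_\mathbf{L}$, since $\mu\leq\lambda=\tp(\mathbf{L})$. This already produces a constant type code $\mathcal{C}\subseteq L_\mu$ with $\mathcal{D(C)}\geq D$ and
\begin{displaymath}
	|\mathcal{C}|\geq \frac{|L_\varphi|}{\max_{u\in L_\mu}|S(u,D-2+|\varphi|-|\mu|,\varphi)|}.
\end{displaymath}

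Next I would rewrite the two quantities on the right-hand side using enumerability. For the numerator, every element of $L$ lies below $1_\mathbf{L}$ and $\tp(1_\mathbf{L})=\lambda$, so by the definition of $\alpha$ for down-enumerable lattices $|L_\varphi|=|\{w\in L_\varphi\mid w\leq 1_\mathbf{L}\}|=\alpha(\lambda,\varphi)$, which is the identity already noted before these corollaries. For the denominator, the concluding discussion of Section \ref{chapter_sphere_size_computation} shows that in an enumerable lattice $|S(u_1,r,\varphi)|=|S(u_2,r,\varphi)|$ whenever $\tp(u_1)=\tp(u_2)$; hence the maximum over $u\in L_\mu$ of $|S(u,D-2+|\varphi|-|\mu|,\varphi)|$ is attained at, and equals, the value at any single fixed $u\in L_\mu$. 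Substituting both identities gives
\begin{displaymath}
	|\mathcal{C}|\geq \frac{\alpha(\lambda,\varphi)}{|S(u,D-2+|\varphi|-|\mu|,\varphi)|}
\end{displaymath}
for an arbitrary $u\in L_\mu$, which is the assertion.

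There is no genuine obstacle here; the corollary is a routine specialization of the preceding Proposition. The only points needing care are bookkeeping: the radius $D-2+|\varphi|-|\mu|$ is literally the one appearing in the Proposition (and is nonnegative as soon as $D\geq 2$, using $|\mu|\leq|\varphi|$, which follows from $\mu\leq\varphi$), and one must be sure that the type-invariance of sphere layers — which at the end of Section \ref{chapter_sphere_size_computation} was only asserted rather than re-proved — is legitimately available. It is: there $|S(u,r,\mu)|$ was expressed entirely through $\gamma$, $\gamma'$, $\alpha$ and $\beta$, each of which depends on the centre $u$ only through $\tp(u)$, so $|S(u_1,r,\varphi)|=|S(u_2,r,\varphi)|$ whenever $u_1$ and $u_2$ share the same type, exactly as used above.
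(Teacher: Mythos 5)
Your proposal is correct and follows exactly the route the paper intends for this corollary (which it states without a written proof): specialize the preceding Proposition for semi-primary lattices, replace $|L_\varphi|$ by $\alpha(\lambda,\varphi)$, and collapse the maximum over $L_\mu$ to a single $u$ using the type-invariance of $|S(u,r,\varphi)|$ established at the end of Section \ref{chapter_sphere_size_computation}. Your extra care about $L_\mu\neq\emptyset$ and the nonnegativity of the radius is sound bookkeeping that the paper leaves implicit.
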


\subsection{Singleton bound}

We state here the singleton bound of \cite{ori} for general finite modular lattices. The idea is the same and we  copy almost Theorem 8 and 
Theorem 9 and their proofs from \cite{ori}, but we translate it into the language of lattices.

First we describe analogue to \cite{ori} what a punctured code is. Let $\mathbf{L}=(L;\vee,\wedge)$ be a finite modular lattice,
 $\mathcal{C}\subseteq L_l$ a constant height code and $w\in L$ with $h(w)=h(1_\mathbf{L})-1$. One obtains a \textit{punctured code}
$\mathcal{C'}$ from $\mathcal{C}$ by replacing every $v\in\mathcal{C}$ by a $v'\leq v\wedge w$ with $h(v')=l-1$. That means $v$ is replaced
by $v\wedge w$ if $v\nleq w$, otherwise $v$ is replaced by an arbitrary $v'\leq v$ with $h(v')=l-1$. We say, that $\mathcal{C'}$ is
\textit{punctured by} $w$.

\begin{theo}
Let $\mathbf{L}=(L;\vee,\wedge)$ be a finite modular lattice, $\mathcal{C}\subseteq L_l$ a constant height code  with $\mathcal{D(C)}> 2$,
$w\in L$ 
with $h(w)=h(1_\mathbf{L})-1$, $L':=[0_\mathbf{L},w]$ and $\mathcal{C'}$ a punctured code from $\mathcal{C}$ by $w$. Then $\mathcal{C'}$ is
a constant height code with $\mathcal{C'}\subseteq L'_{l-1}$, $|\mathcal{C'}|=|\mathcal{C}|$ and $\mathcal{D(C')}\geq \mathcal{D(C)}-2$.
\end{theo}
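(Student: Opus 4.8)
The plan is to establish the three claims in order: (1) that each replacement element $v'$ actually has height $l-1$ and lies in $L'_{l-1}$, (2) that the map $v\mapsto v'$ is injective so $|\mathcal{C'}|=|\mathcal{C}|$, and (3) that the minimum distance drops by at most $2$. The first claim splits into the two defining cases. If $v\not\leq w$, then $v\vee w$ strictly contains $w$, so by $h(w)=h(1_\mathbf{L})-1$ we get $v\vee w=1_\mathbf{L}$; applying the modular height identity \eqref{mod_function} gives $h(v\wedge w)=h(v)+h(w)-h(v\vee w)=l+(h(1_\mathbf{L})-1)-h(1_\mathbf{L})=l-1$, so $v'=v\wedge w$ already has the right height and clearly $v'\leq w$, i.e.\ $v'\in L'$. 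If $v\leq w$, then $h(v)=l\geq l-1$ and such a $v'\leq v$ with $h(v')=l-1$ exists because in a lattice of finite length every element of height $l$ dominates a chain of length $l$; again $v'\leq v\leq w$. In both cases $v'\in L'_{l-1}$, and $L'=[0_\mathbf{L},w]$ is itself a (finite, modular) lattice, so $\mathcal{C'}$ is a constant height code in $L'$.

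Next I would prove the distance bound $d(v_1',v_2')\geq d(v_1,v_2)-2$ for distinct codewords $v_1,v_2\in\mathcal{C}$, which simultaneously yields injectivity (since $\mathcal{D(C)}>2$ forces $d(v_1',v_2')>0$, hence $v_1'\neq v_2'$) and $\mathcal{D(C')}\geq\mathcal{D(C)}-2$. The key estimate is that passing from $v_i$ to $v_i'\leq v_i$ with $h(v_i')=h(v_i)-1$ changes meets and joins by at most one in height. Concretely, from $v_1'\leq v_1$ and $v_2'\leq v_2$ we get $v_1'\wedge v_2'\leq v_1\wedge v_2$ and $v_1'\vee v_2'\leq v_1\vee v_2$, and the reverse inclusions ``up to one step'' follow because $v_i\leq v_i'\vee a_i$ where $a_i$ is suitably chosen of height $1$ — more cleanly: $h(v_i\wedge v_j')\geq h(v_i') = h(v_i)-1$ gives, via \eqref{mod_function} and monotonicity, $h(v_1'\wedge v_2')\geq h(v_1\wedge v_2)-2$ and $h(v_1'\vee v_2')\leq h(v_1\vee v_2)$. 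Combining these with $d(x,y)=h(x\vee y)-h(x\wedge y)$ yields
\begin{displaymath}
	d(v_1',v_2')=h(v_1'\vee v_2')-h(v_1'\wedge v_2')\geq \bigl(h(v_1\vee v_2)-?\bigr)-h(v_1\wedge v_2),
\end{displaymath}
and a careful bookkeeping of how much the join can also drop closes the gap at exactly $2$.

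The main obstacle is the honest verification of that last step: bounding $d(v_1',v_2')$ from below requires controlling \emph{both} $h(v_1'\vee v_2')$ from below and $h(v_1'\wedge v_2')$ from above at the same time, and the naive bounds (join drops by up to $2$, meet rises by up to $2$) would only give $d(v_1',v_2')\geq d(v_1,v_2)-4$. The fix is to use the modular identity $h(v_i')+h(v_j')=h(v_i'\vee v_j')+h(v_i'\wedge v_j')$ to trade one bound against the other: since $h(v_1')+h(v_2')=h(v_1)+h(v_2)-2$, any decrease in $h(v_1'\vee v_2')$ relative to $h(v_1\vee v_2)$ is compensated by a decrease in $h(v_1'\wedge v_2')$ relative to $h(v_1\wedge v_2)$, so only the ``$-2$'' from the two height drops survives. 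I would set this up by writing $d(v_1',v_2')=2h(v_1'\vee v_2')-h(v_1')-h(v_2')=2h(v_1'\vee v_2')-h(v_1)-h(v_2)+2$ and then only needing $h(v_1'\vee v_2')\geq h(v_1\vee v_2)-2$, which follows since $v_i\leq v_i'\vee c_i$ for some $c_i$ of height $1$, whence $v_1\vee v_2\leq v_1'\vee v_2'\vee c_1\vee c_2$ and Lemma~\ref{height_independent} gives the bound. This is the only place where modularity is used in an essential, non-routine way; everything else is monotonicity of $\vee,\wedge$ and the height identity.
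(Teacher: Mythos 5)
Your setup (heights of the $v'$, the two cases $v\leq w$ and $v\not\leq w$, and deducing injectivity from $d(v_1',v_2')>0$) matches the paper, but your route to the distance bound has a genuine flaw. You reduce everything to $h(v_1'\vee v_2')\geq h(v_1\vee v_2)-2$ and justify it by asserting that $v_i\leq v_i'\vee c_i$ for some $c_i$ of height $1$. Such a $c_i$ need not exist in a general modular lattice: in the subgroup lattice of $\mathbb{Z}_4$ (a three-element chain) the only atom is $\mathbb{Z}_2$, so there is no height-one element joining $\mathbb{Z}_2$ up to $\mathbb{Z}_4$, even though $\mathbb{Z}_2$ is covered by $\mathbb{Z}_4$. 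This is exactly the non-atomistic behaviour the paper warns about in Section \ref{coding}. The inequality you want is nevertheless true and can be repaired using the modular identity \eqref{mod_function}: with $a:=v_1'\vee v_2'$ one gets $h(v_1\vee a)=h(v_1)+h(a)-h(v_1\wedge a)\leq h(a)+1$ since $v_1'\leq v_1\wedge a$, and one more such step gives $h(v_1\vee v_2)\leq h(a)+2$. Your middle paragraph also briefly asserts $h(v_1'\wedge v_2')\geq h(v_1\wedge v_2)-2$, which is a bound in the direction you do not need.

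The larger point is that you chose the harder of the two distance formulas. The paper uses $d(x,y)=h(x)+h(y)-2h(x\wedge y)$: since $v_1'\wedge v_2'\leq v_1\wedge v_2$, monotonicity of $h$ gives $h(v_1'\wedge v_2')\leq h(v_1\wedge v_2)$ with no work at all, and then $d(v_1',v_2')=2(l-1)-2h(v_1'\wedge v_2')\geq 2(l-1)-2h(v_1\wedge v_2)=d(v_1,v_2)-2$ in one line. The join-form $d(x,y)=2h(x\vee y)-h(x)-h(y)$ forces you to bound the join from \emph{below} after shrinking the arguments, which is where your argument breaks. So: the overall architecture is right, the specific step is wrong but repairable, and the paper's meet-based computation avoids the issue entirely.
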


\begin{proof}
We have to check the distance and the cardinality. Let $u,v\in \mathcal{C}$ and $u',v'$ the corresponding codewords in $\mathcal{C'}$. We 
have $u'\wedge v'\leq u\wedge v$ and $\mathcal{D(C)}\leq d(u,v)=2l-2h(u\wedge v)$.
So it follows $2h(u'\wedge v')\leq 2h(u\wedge v)\leq 2l-\mathcal{D(C)}$.
One obtains
\begin{align*}
d(u',v')&=h(u')+h(v')-2h(u'\wedge v')=2(l-1)-2h(u'\wedge v')\\
&\geq 2(l-1)-(2l-\mathcal{D(C)})=\mathcal{D(C)}-2.
\end{align*}
Since $\mathcal{D(C)}> 2$ we have $d(u',v')> 0$, so $u'$ and $v'$ are distinct and it follows $|\mathcal{C'}|=|\mathcal{C}|$.
\end{proof}

With this theorem follows the Singleton bound.

\begin{theo}
Let $\mathbf{L}=(L;\vee,\wedge)$ be a finite modular lattice, $\mathcal{C}\subseteq L_l$ a constant height code,
$t:=\frac{\mathcal{D(C)}-2}{2}$, $w$ 
an element in $L$ with $h(w)=h(1_\mathbf{L})-t$ and $L':=[0_\mathbf{L},w]$. Then
\begin{displaymath}
	|\mathcal{C}|\leq |L'_{l-t}|.
\end{displaymath}
\end{theo}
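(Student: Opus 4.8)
The plan is to reduce the Singleton bound to the punctured code theorem that immediately precedes it, applied $t$ times. Write $D:=\mathcal{D}(\mathcal{C})$, so $t=\frac{D-2}{2}$, i.e.\ $D=2t+2$. If $t=0$ there is nothing to prove since then $w=1_\mathbf{L}$, $L'=L$ and $L'_{l-t}=L_l\supseteq\mathcal{C}$. So assume $t\geq 1$, which forces $D>2$, so the hypothesis of the previous theorem is met.

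The key step is to iterate puncturing. First I would choose a chain $1_\mathbf{L}=w_0\gtrdot w_1\gtrdot\dots\gtrdot w_t=w$ of elements with $h(w_i)=h(1_\mathbf{L})-i$; such a chain exists because $h(w)=h(1_\mathbf{L})-t$ and in a lattice of finite length one can refine $[w,1_\mathbf{L}]$ (indeed one may simply take any maximal chain in $[w,1_\mathbf{L}]$, which has length $t$ by the modular law and the height equation). Set $\mathbf{L}^{(i)}:=[0_\mathbf{L},w_i]$, a modular lattice with $h(1_{\mathbf{L}^{(i)}})=h(1_\mathbf{L})-i$, and note $\mathbf{L}^{(t)}=\mathbf{L}'$. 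Now define codes $\mathcal{C}^{(0)}:=\mathcal{C}$ and, for $i=1,\dots,t$, let $\mathcal{C}^{(i)}$ be a code obtained from $\mathcal{C}^{(i-1)}$ by puncturing by $w_i$ (viewed as an element of $\mathbf{L}^{(i-1)}$ of height $h(1_{\mathbf{L}^{(i-1)}})-1$). Applying the previous theorem inside $\mathbf{L}^{(i-1)}$ gives $\mathcal{C}^{(i)}\subseteq \mathbf{L}^{(i)}_{l-i}$, $|\mathcal{C}^{(i)}|=|\mathcal{C}^{(i-1)}|$ and $\mathcal{D}(\mathcal{C}^{(i)})\geq \mathcal{D}(\mathcal{C}^{(i-1)})-2$, provided $\mathcal{D}(\mathcal{C}^{(i-1)})>2$. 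By induction $\mathcal{D}(\mathcal{C}^{(i-1)})\geq D-2(i-1)=2t+2-2i+2=2(t-i)+4$, which is $>2$ as long as $i\le t$; in particular the hypothesis of the theorem holds at every step $i=1,\dots,t$. After $t$ steps we get $|\mathcal{C}^{(t)}|=|\mathcal{C}|$ and $\mathcal{C}^{(t)}\subseteq \mathbf{L}^{(t)}_{l-t}=L'_{l-t}$, hence $|\mathcal{C}|=|\mathcal{C}^{(t)}|\le |L'_{l-t}|$.

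There are two small things to check carefully. First, one must make sure the puncturing at stage $i$ is legitimate: the previous theorem requires the target code to sit in a constant-height layer $L_{l'}$ with $h(w_i)$ one below the top of the ambient lattice, and that $l'-1\ge 0$; here $l'=l-(i-1)$ and $l'\ge l-(t-1)$, and since $\mathcal{D}(\mathcal{C})\le 2l$ (the distance of any two codewords of height $l$ is at most $2l$) we get $D=2t+2\le 2l$, i.e.\ $t\le l-1$, so $l-t\ge 1$ and every intermediate layer index $l-i\ge l-t\ge1$ is a legitimate height in $\mathbf{L}^{(i)}$. (If $\mathcal{C}$ has at most one codeword the bound is trivial, so we may assume $D$ is an actual distance and this estimate applies.) Second, one should note that $\mathbf{L}^{(i)}$ is genuinely a modular lattice of finite length with the claimed top height — this is immediate since an interval of a modular lattice is modular and heights in $[0_\mathbf{L},w_i]$ agree with those in $\mathbf{L}$.

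I expect the only real obstacle is bookkeeping: tracking that the minimum-distance hypothesis $\mathcal{D}>2$ survives all $t-1$ intermediate puncturings (it does, because the distance drops by at most $2$ each time and starts at $2t+2$), and confirming that each intermediate ambient lattice has its one-element exactly one step above the next $w_i$. Everything else is a direct iteration of the preceding theorem, so the proof is short.
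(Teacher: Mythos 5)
Your proposal is correct and follows essentially the same route as the paper: choose a chain $1_\mathbf{L}\geq w_1\geq\dots\geq w_{t-1}\geq w$ with $h(w_i)=h(1_\mathbf{L})-i$ and apply the puncturing theorem $t$ times, using $|\mathcal{C}'|=|\mathcal{C}|$ at the end. You are in fact more careful than the paper, which does not explicitly verify that $\mathcal{D}>2$ survives each intermediate puncturing or that the intermediate layer indices remain nonnegative.
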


\begin{proof}
For $w\in L$ with $h(w)=h(1_\mathbf{L})-t$ there exists $w_1,...,w_{t-1}\in L$ with $w_1\geq...\geq w_{t-1}\geq w$ and 
$h(w_i)=h(1_\mathbf{L})-i$. Let $\mathcal{C'}$ be the code, which is obtained by first puncturing $\mathcal{C}$ by $w_1$, then by $w_2$ and
so on up to $w_{t-1}$ and finally by $w$. Then $\mathcal{C'}$ is a subset of $L'_{l-t}$ and it follows $|\mathcal{C'}|\leq |L'_{l-t}|$.
Since $\mathcal{D(C')}\geq 2$, it holds $|\mathcal{C'}|=|\mathcal{C}|$ and it follows the statement.
\end{proof}

We obtain the following simple corollary for enumerable lattices.

\begin{cor}
Let $\mathbf{L}=(L;\vee,\wedge)$ be an enumerable lattice, $\mathcal{C}\subseteq L_l$ a constant height code,
$t:=\frac{\mathcal{D(C)}-2}{2}$, 
$\varphi\in \parti(h(1_\mathbf{L})-t)$ with $\varphi\leq \tp(\mathbf{L})$. Then
\begin{displaymath}
	|\mathcal{C}|\leq \alpha(\varphi,l-t).
\end{displaymath}
\end{cor}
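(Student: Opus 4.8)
The plan is to instantiate the preceding Singleton bound theorem with an element $w$ whose type is exactly $\varphi$, and then identify the set $L'_{l-t}$ appearing in that theorem with $\alpha(\varphi,l-t)$.

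First I would produce such a $w$. Since $\mathbf{L}$ is enumerable it is in particular a finite semi-primary lattice, so $\tp(\mathbf{L})=\tp(1_\mathbf{L})$ is defined, and by hypothesis $\varphi\leq\tp(\mathbf{L})$. Applying Lemma \ref{prop_existence} to $u=1_\mathbf{L}$ yields an element $w\in L_\varphi$ with $w\leq 1_\mathbf{L}$; since an element of type $\varphi$ has height $|\varphi|$ and $\varphi\in\parti(h(1_\mathbf{L})-t)$, we get $h(w)=|\varphi|=h(1_\mathbf{L})-t$. Thus $w$ meets exactly the hypotheses imposed on the element ``$w$'' in the Singleton bound theorem (the quantity $t=\tfrac{\mathcal{D(C)}-2}{2}$ is the same in both statements), and with $L':=[0_\mathbf{L},w]$ that theorem gives $|\mathcal{C}|\leq |L'_{l-t}|$.

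It then remains to check that $|L'_{l-t}|=\alpha(\varphi,l-t)$. The interval $L'=[0_\mathbf{L},w]$ has least element $0_{L'}=0_\mathbf{L}$, and a chain in $L'$ from $0_\mathbf{L}$ to some $v\leq w$ is literally the same thing as such a chain in $\mathbf{L}$; hence the height of $v$ computed in $L'$ agrees with its height $h(v)$ in $\mathbf{L}$, so $L'_{l-t}=\{v\in L\mid v\leq w,\ h(v)=l-t\}$. Because $w\in L_\varphi$, the right-hand side is by definition the extended quantity $\alpha(\varphi,l-t)$ (the case of a single height argument $r_1=l-t$). Combining the two displays gives $|\mathcal{C}|\leq |L'_{l-t}|=\alpha(\varphi,l-t)$, as claimed.

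The only mildly delicate point is the identification of heights in the sublattice $L'$ with heights in $\mathbf{L}$; everything else is a direct substitution into the already-established Singleton bound together with an unwinding of the definition of $\alpha$. (In the degenerate range $l<t$ all quantities involved vanish and the inequality is trivial, so no separate case distinction is needed.)
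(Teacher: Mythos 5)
Your proof is correct and matches the paper's (implicit) derivation: the paper states this as a ``simple corollary'' of the preceding Singleton bound theorem, and the intended argument is exactly yours --- obtain $w\in L_\varphi$ via Lemma \ref{prop_existence}, note $h(w)=|\varphi|=h(1_\mathbf{L})-t$, apply the theorem, and identify $|L'_{l-t}|$ with $\alpha(\varphi,l-t)$ through the extended (single height argument) definition of $\alpha$. Your care about heights in the interval $[0_\mathbf{L},w]$ agreeing with heights in $\mathbf{L}$ is a correct, if routine, observation.
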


Note that the statement works also for constant type codes, because constant type codes are constant height codes. If 
$\mathcal{C}\subseteq L_\mu$ is a constant type code, then only $l$ must replaced by $|\mu|$ in the corollary.

\section{Acknowledgements}
This work was basically  done as diploma thesis from Andreas Kendziorra at the Technische Universit\"at Dresden under supervision of Stefan
Schmidt. The continuance of the work from Andreas Kendziorra was supported by the Science Foundation Ireland under grant no. 08/IN.1/I1950.
The Authors would also like to thank Jens Zumbr\"agel, Marcus Greferath and Eimear Byrne for helpful discussions.


\begin{thebibliography}{12}
	\bibitem{ahlswede} R. Ahlswede, N. Cai, S.-Y. R. Li, R. W. Yeung: \textit{Network Information Flow}, IEEE Tans. on Inform. Theory, Vol. 46, No. 4, pp 1204-1216, 2000

	\bibitem{anderson} F. W. Anderson, K. R. Fuller: \textit{Rings and Categories of Modules}, Springer Verlag, New-York, 1992
	\bibitem{birkhoff} G. Birkhoff: \textit{Lattice Theory}, American Mathematical Society 
	Colloquium Publications, Vol. 25, Providence, 1967
	\bibitem{butler} L. M. Butler: \textit{Subgroup Lattices and Symmetric Functions}, Memoirs of 
	the American Mathematical Society, No. 539, Providence, 1994			
	\bibitem{calderbank} A. R. Calderbank, A. R. Hammons Jr., P. V. Kumar, N. J. A. Sloane and P. Sol\'e: \textit{ A Linear Construction for Certain Kerdock and Preparata Codes}, Bulletin Amer. Math. Soc., 29, 1993, pp. 218-222
	\bibitem{conway} J. H. Conway, N. J. A. Sloane: \textit{Quaternary Constructions for the Binary Single-Error-Correcting Codes of Julin, Best and Others}, Designs, Codes and Cryptography, 4, 1994, pp. 31-42
	\bibitem{fragouli} C. Fragouli, J.-Y. Le Boudec, J. Widmer: \textit{Network Coding: An Instant Primer}, Computer Communication Review 36(1), 2006 pp. 63-68, 
	\bibitem{graetzer} G. Gr\"{a}tzer: \textit{General Lattice Theory}, Birkh\"auser Verlag, Basel, 
	Boston, Berlin, 1998	
	\bibitem{herrmann} C. Herrmann, G. Tak\'ach: \textit{A Characterization of Subgroup Lattices of 
	Finite Abelian Groups}, Beitr\"age zur Algebra und Geometrie, Vol. 46, No. 1, 2005, pp. 215-239
	\bibitem{Ho2} T. Ho, R. K\"otter, M. M\'edard, D. R. Karger, M. Effros: \textit{The Benefits of Coding over Routing in a Randomized Setting}, Proc. IEEE Int. Symp. Information Theory, Yokohama, Japan, 2003, p. 442.

	\bibitem{Ho} T. Ho, M. M\'edard, R. Koetter, D. R. Karger, M. Effros, J. Shi, B. Leong: 
	\textit{A random linear network coding approach to multicast}, IEEE Tans. on Inform. Theory, 
	Vol. 52, pp. 4413-4430, 2006
	\bibitem{jonsson} B. J\'onsson, G. S. Monk: \textit{Representation of primary arguesian 
	lattices}, Pacific Journal of Mathematics, Vol. 30, No. 1, 1969, pp. 95-139 
	\bibitem{ori} R. K\"{o}tter, F.R. Kschischang: \textit{Coding for errors and erasures in random 
	network coding}, IEEE Trans. Inf. Theory, Vol. 54, No. 8, 2008
	\bibitem{sloane} N. J. A. Sloane: \textit{Algebraic Coding Theory: Recent Developments Related to $\mathbb{Z}_4$}, Study of Algebraic Combinatorics (Proceedings Conference on Algebraic Combinatorics, Kyoto 1993), Research Institute for Mathematical Sciences, Kyoto, 1995, pp. 38-52
	\bibitem{stanley} R. P. Stanley: \textit{Enumerative Combinatorics}, Cambridge Studies in 
	Advanced Mathematics 49, Cambridge, 2005
	\bibitem{wang} H. Wang, C. Xing, R. Safavi-Naini: \textit{Linear Authentication Codes: Bounds and Constructions}, IEEE Trans. Inf. Theory, Vol. 49, No. 4, april 2003
\end{thebibliography}
\end{document}